\def \VersionLong {}
	\newcommand{\LongVersion}[1]{\ifdefined\VersionWithComments{\color{red!40!black}#1}\else#1\fi}
	\newcommand{\ShortVersion}[1]{\ifdefined\VersionWithComments{\color{black!40}#1}\fi}
	\newcommand{\LongVersion}[1]{\ifdefined\VersionWithComments{\color{black!40}#1}\fi}
	\newcommand{\ShortVersion}[1]{\ifdefined\VersionWithComments{\color{red!40!black}#1}\else#1\fi}
	\let\VersionWithComments\undefined
	\let\WithReply\undefined
\def\HiLi{\leavevmode\rlap{\hbox to \hsize{\color{yellow!50}\leaders\hrule height .8\baselineskip depth .5ex\hfill}}}
\theoremstyle{definition}
\newtheorem{definition}{Definition}
\newtheorem{example}{Example}
\theoremstyle{plain}
\newtheorem{lemma}{Lemma}
\newtheorem{proposition}{Proposition}
\newtheorem{theorem}{Theorem}
\theoremstyle{remark}
\newtheorem{remark}{Remark}
\theoremstyle{definition}
\footnotesize\printfield{doi}}
\definecolor{darkblue}{rgb}{0.0,0.0,0.6}
\definecolor{darkgreen}{rgb}{0, 0.5, 0}
\definecolor{darkpurple}{rgb}{0.7, 0, 0.7}
\definecolor{violetcurie}{RGB}{115,26,67}
\definecolor{forestgreen}{rgb}{0.13,0.54,0.13}
\definecolor{darkblue}{rgb}{0, 0, 0.7}
\crefname{line}{\text{line}}{\text{lines}} %
\crefname{figure}{\text{Figure}}{\text{Figures}} %
\tikzstyle{every node}=[initial text=]
\tikzstyle{location}=[rectangle, rounded corners, minimum size=12pt, draw=black, fill=blue!10, inner sep=2pt]
\tikzstyle{probchoice}=[circle, minimum size=5pt, draw=black, fill=black, inner sep=0pt]
\tikzstyle{probedge}=[densely dotted]
\tikzstyle{final}=[double]
\tikzstyle{point}=[probchoice]
\newcommand{\AngleA}[3]
	{\begin{scope}
		\path [clip] (#1.center) -- (#2.center) -- (#3.center);
		\fill [blue!20,draw=black,thin] (#1) circle (3mm);

		\node[probchoice] at (#1.center) {};
	\end{scope}}
\definecolor{cv1}{rgb}{1, 0, 0}
\definecolor{cv2}{rgb}{0, 1, 0}
\definecolor{cv3}{rgb}{0, 0, 1}
\definecolor{cv4}{rgb}{1, 1, 0}
\definecolor{cv5}{rgb}{1, 0, 1}
\definecolor{cv6}{rgb}{0, 1, 1}
\definecolor{cv7}{rgb}{0.8, 0.6, 0.4}
\definecolor{cv8}{rgb}{0.5, 0.5, 1}
\definecolor{cv9}{rgb}{0.55, 0.75, 0.35}
\definecolor{cv10}{rgb}{1, 0.6, 0.1}
\definecolor{cv11}{rgb}{0.6, 0.7, 0.8}
\definecolor{cv12}{rgb}{0.2, 0.5, 0.9}
\definecolor{cv13}{rgb}{0.5, 0.9, 0.2}
\definecolor{cv14}{rgb}{1, 0.3, 0.5}
\definecolor{cv15}{rgb}{0.7, 0.7, 0.7}
\tikzstyle{location0}=[location, fill=cv1!50]
\tikzstyle{location1}=[location, fill=cv2!50]
\tikzstyle{location2}=[location, fill=cv3!50]
\tikzstyle{location3}=[location, fill=cv4!50]
\tikzstyle{location4}=[location, fill=cv5!50]
\tikzstyle{location5}=[location, fill=cv6!50]
\tikzstyle{location6}=[location, fill=cv7!50]
\newcommand{\cellHeader}[1]{\cellcolor{blue!20}\textbf{#1}}
	\newcommand{\marginX}{\marginnote{\huge{\quad\quad\textbf{!}\quad\quad}}}
	\newcommand{\avirer}[1]{{\color{black!40}#1]}}
	\newcommand{\ea}[1]{\mbox{}{\color{green!50!black}\marginX{}\textbf{[\'Etienne}: #1]}}
	\newcommand{\bd}[1]{\mbox{}{\color{orange}\marginX{}\textbf{[Benoît}: #1]}}
	\newcommand{\pf}[1]{\mbox{}{\color{yellow!50!black}\marginX{}\textbf{[Paulin}: #1]}}
	\newcommand{\instructions}[1]{{\color{red}\marginX{}\textbf{[Instructions: ``#1'']}}}
	\newcommand{\todo}[1]{\mbox{}{\color{red}{\marginX{}\textbf{TODO}\ifx#1\\\else:\ \fi #1}}} %
	\newcommand{\instructions}[1]{}
	\newcommand{\avirer}[1]{}
	\newcommand{\ea}[1]{}
	\newcommand{\bd}[1]{}
	\newcommand{\pf}[1]{•}
	\newcommand{\todo}[1]{}
\newcommand{\init}{_0}
\newcommand{\A}{\ensuremath{\mathcal{A}}}
\newcommand{\Actions}{\Sigma}
\newcommand{\action}{a}
\newcommand{\BFalse}{\text{false}}
\newcommand{\C}{C}
\newcommand{\Cinit}{\C\init}
\newcommand{\Clock}{X} %
\newcommand{\ClockCard}{H} %
\newcommand{\clock}{x} %
\newcommand{\clockval}{w} %
\newcommand{\CombiInt}{\mathit{FS}}
\newcommand{\PIPCombiInt}{\mathit{CombFS}}
\newcommand{\compOp}{\bowtie}
\newcommand{\correspondance}{\delta}
\newcommand{\DistPTA}{\upsilon}
\newcommand{\distributionTPS}{\eta} %
\newcommand{\DistMDP}{\iota}
\newcommand{\ForbidGuard}{\textit{ForbidD}}
\newcommand{\grandn}{{\mathbb N}}
\newcommand{\grandq}{{\mathbb Q}}
\newcommand{\grandqplus}{\grandq_{+}} %
\newcommand{\grandr}{{\mathbb R}}
\newcommand{\grandrplus}{\grandr_{+}} %
\newcommand{\grandz}{{\mathbb Z}}
\newcommand{\guard}{g}
\newcommand{\EF}{\ensuremath{\mathsf{EF}}}
\newcommand{\EFsynth}{\ensuremath{\mathsf{EFsynth}}}
\newcommand{\EFuniv}{\ensuremath{\mathsf{EFuniv}}}
\newcommand{\DistIMDP}{I}
\newcommand{\DistIPTA}{\Upsilon}
\newcommand{\K}{K}
\newcommand{\Kcons}{\ensuremath{K_\textit{cons}}}
\newcommand{\Kreach}{\ensuremath{K_\textit{reach}}}
\newcommand{\KTrue}{\top}
\newcommand{\KFalse}{\bot}
\newcommand{\loc}{l} %
\newcommand{\locinit}{\loc\init}
\newcommand{\Loc}{L} %
\newcommand{\LocGoal}{G} %
\newcommand{\lterm}{\mathit{aft}}
\newcommand{\Param}{\ensuremath{\Gamma}} %
\newcommand{\param}{\ensuremath{\gamma}} %
\newcommand{\ParamCard}{M} %
\newcommand{\pedge}{e}
\newcommand{\plterm}{\mathit{paft}}
\newcommand{\pval}{v} %
\newcommand{\PZones}{\ensuremath{\mathcal{Z}}}
\newcommand{\RelSimMDP}{\mathcal{R}_M}
\newcommand{\RelSimPTA}{\mathcal{R}_P}
\newcommand{\resets}{\rho}
\newcommand{\somelocs}{G} %
\newcommand{\state}{s} %
\newcommand{\States}{S}
\newcommand{\PassedStates}{\mathsf{Passed}}
\newcommand{\IncStates}{\mathsf{Inc}}
\newcommand{\stateinit}{s\init}
\newcommand{\symbstate}{\ensuremath{\mathbf{s}}} %
\newcommand{\SymbStates}{\ensuremath{\mathbf{S}}} %
\newcommand{\symbstateinit}{\symbstate\init} %
\newcommand{\timelapse}[1]{#1^\nearrow}
\newcommand{\TOut}{T_\textit{out}}
\newcommand{\TransitionsMDP}{T} %
\newcommand{\wv}[2]{#1|#2} %
\newcommand{\reconstruct}{\ensuremath{\mathsf{Reconstruct}}}
\newcommand{\makeNonDet}{\ensuremath{\mathsf{makeNonDet}}}
\newcommand{\makeAcc}{\ensuremath{\mathsf{makeAcc}}}
\newcommand{\projectP}[1]{\ensuremath{#1{\downarrow_{\Param}}}}
\newcommand{\reset}[2]{\ensuremath{[#1]_{#2}}}
\newcommand{\valuate}[2]{\ensuremath{#2(#1)}}
\newcommand{\rcpFMax}{\mathit{f\_max}} %
\newcommand{\rcpFMin}{\mathit{f\_min}} %
\newcommand{\rcpSMax}{\mathit{s\_max}} %
\newcommand{\rcpSMin}{\mathit{s\_min}} %
\newcommand{\TransitionsPTA}[0]{ {prob} } %
\newcommand{\TransitionsIPTA}[0]{\mathbb{I}}
\newcommand{\StepsIPTA}[0]{ {\Rightarrow} } %
\newcommand{\dist}{\mathsf{Dist}}
\newcommand{\IntDist}{\mathsf{IntDist}}
\newcommand{\PROBTA}{\normalfont\ensuremath{\mathbb{P}}TA}
\newcommand{\IPROBTA}{\normalfont I\PROBTA}
\newcommand{\PIPROBTA}{\normalfont PI\PROBTA}
\newcommand{\PROBTAs}{\normalfont \PROBTA{}s}
\newcommand{\IPROBTAs}{\normalfont \IPROBTA{}s}
\newcommand{\PIPROBTAs}{\normalfont \PIPROBTA{}s}
\newcommand{\IMDP}{IMDP}
\newcommand{\MDP}{MDP}
\newcommand{\IMDPs}{\IMDP{}s}
\newcommand{\PIPTAsynth}{\ensuremath{\mathsf{ConstSynth}}}
\newcommand{\PIPTAreachsynth}{\ensuremath{\mathsf{ConstEFSynth}}}
\newcommand{\probta}{\ensuremath{\mathcal{P}}}
\newcommand{\iprobta}{\ensuremath{\mathcal{IP}}}
\newcommand{\iprobtainfzero}{\ensuremath{\iprobta_{\infty,0}}}
\newcommand{\piprobta}{\ensuremath{\mathcal{PIP}}}
\newcommand{\imdp}{\ensuremath{\mathcal{IM}}}
\newcommand{\mdp}{\ensuremath{\mathcal{M}}}
\newcommand{\TPS}{\mathcal{T}}
\newcommand{\defProblem}[3]
{%
\begin{mdframed}[roundcorner=3pt,backgroundcolor=blue!7,linecolor=blue!70,linewidth=2]
	\textbf{#1 problem:}\\
		\textsc{Input}: #2\\
		\textsc{Problem}: #3
\end{mdframed}
}
 	\definecolor{colorok}{RGB}{80,80,150}
	\definecolor{colorok}{RGB}{0,0,0}
\newcommand{\eg}{\textcolor{colorok}{e.\,g.,}\xspace}
\newcommand{\ie}{\textcolor{colorok}{i.\,e.,}\xspace}
\newcommand{\st}{\textcolor{colorok}{s.\,t.}\xspace}
\newcommand{\wrt}{\textcolor{colorok}{w.r.t.}\xspace}
\newcommand{\ourtitle}{Consistency in Parametric Interval Probabilistic Timed Automata}
\newcommand{\ourabstract}{%
\begin{abstract}
We propose a new abstract formalism for probabilistic timed systems,
Parametric Interval Probabilistic Timed Automata, based on an
extension of Parametric Timed Automata and Interval Markov Chains. In
this context, we consider the consistency problem that amounts to
deciding whether a given specification admits at least one
implementation. In the context of Interval Probabilistic Timed
Automata (with no timing parameters), we show that this problem is
decidable and propose a constructive algorithm for its resolution. We
show that the existence of timing parameter valuations ensuring consistency
is undecidable in the general context, but still exhibit a syntactic
condition on parameters to ensure decidability. We also propose
procedures that resolve both the consistency and the consistent
reachability problems when the parametric probabilistic zone graph is finite.
\end{abstract}
}
	\title{\ourtitle{}\footnote{%
		This is the author version of the manuscript of the same name published in the Journal of Logical and Algebraic Methods in Programming.
		The final version is available at \href{http://www.dx.doi.org/10.1016/j.jlamp.2019.04.007}{10.1016/j.jlamp.2019.04.007}.
		This work is partially supported by the ANR national research program PACS (ANR-14-CE28-0002).
		This work was partially done during Étienne André's \emph{délégation CNRS} at \'Ecole Centrale de Nantes, IRCCyN, CNRS, UMR 6597, France (2015--2016).
	}}
	\author{Étienne André$^1$, Benoît Delahaye$^2$ and Paulin Fournier$^2$
	\\
	{\small $^1$ Université Paris 13, LIPN, CNRS, UMR 7030, F-93430, Villetaneuse, France}
	\\
	{\small $^2$ Université de Nantes / LS2N UMR CNRS 6004, Nantes, France}
	}
	\date{}
\begin{document}

\ShortVersion{
\begin{frontmatter}

\title{\ourtitle{}\tnoteref{thanksANR}%
}
\tnotetext[thanksANR]{This work is partially supported by the ANR national research program PACS (ANR-14-CE28-0002).}

\author{Étienne André\corref{cor1}} %
\cortext[cor1]{This work was partially done during Étienne André's \emph{délégation CNRS} at \'Ecole Centrale de Nantes, IRCCyN, CNRS, UMR 6597, France (2015--2016)}
\ead{first.last@lipn.univ-paris13.fr}
\ead[url]{https://lipn.univ-paris13.fr/~andre/}
\address{Université Paris 13, LIPN, CNRS, UMR 7030, F-93430, Villetaneuse, France} %

\author{Benoît Delahaye, Paulin Fournier\corref{cor2}}
\ead{first.last@univ-nantes.fr}
\ead[url]{http://pagesperso.lina.univ-nantes.fr/~delahaye-b/}
\address{Université de Nantes / LS2N UMR CNRS 6004, Nantes, France} %
\thispagestyle{plain}

\ourabstract{}

\begin{keyword}
parametric verification \sep timed probabilistic systems \sep parametric probabilistic timed automata

\end{keyword}

\end{frontmatter}

}
\LongVersion{
	\maketitle
	
	\ourabstract{}
}

\ifdefined \VersionWithComments
	\textcolor{red}{\textbf{This is the version with comments. To disable comments, comment out line~3 in the \LaTeX{} source.}}
\fi

\ifdefined\VersionWithComments
	\tableofcontents{}
\fi

\section{Introduction}\label{section:introduction}

\paragraph{Motivation}
Nowadays, automata-based modeling and verification methods are mainly
used in two different ways: for designing digital systems based on
(mostly informal) specifications expressed by the end-users of these
systems or from the knowledge designers have of their environment; and
in order to abstract existing (not necessarily software) systems that
are too complex to comprehend in their entirety. In both cases the
complexity of the systems being designed calls for increasingly
expressive abstraction artifacts such as time and probabilities. Timed
automata, introduced in~\cite{AD94}, are a widely recognized modeling formalism for
reasoning about real-time systems. This modeling formalism, based on
finite control automata equipped with clocks, which are real-valued
variables which increase uniformly at the same rate, has been extended
to the probabilistic framework in~\cite{GJ95,KNSS02}. In this context,
discrete actions are replaced with probabilistic discrete
distributions over discrete actions, allowing to model uncertainties
in the system's behavior. This formalism has been applied to a number
of case studies, \eg{} in~\cite{DBLP:journals/fmsd/KwiatkowskaNPS06}.

Unfortunately, building a system model based either on imprecise
specifications or on imprecise observations often requires to fix
arbitrarily a number of constants in the model, which are then
calibrated by a fastidious comparison of the model behavior and the
expected behavior. This is the case for instance for timing constants
or transition probability values. In order to incorporate these
uncertainties in the model and to develop automatic calibration, more
abstract formalisms have been introduced separately in the timed
setting and in the probabilistic setting.

In the timed setting, {\em
parametric timed automata} (PTAs) introduced by~\cite{AHV93} allow using parameter
variables in the guards of timed transitions in order to account for
the uncertainty on their values.
The reachability emptiness problem, \ie{} the emptiness of the set of valuations for which a given discrete state is reachable, is undecidable for parametric timed automata as shown in~\cite{AHV93}, even for bounded parameters as shown by~\cite{Miller00}, for a single integer-valued parameter as shown by~\cite{BBLS15}, or only when strict inequalities are used as shown by~\cite{Doyen07}.
Decidable subclasses were exhibited (\eg{} \cite{HRSV02,BlT09,JLR15,ALR16ICFEM}).

Parametric probabilistic timed automata were proposed in~\cite{AFS13} to answer the following question:
given a timing parameter valuation, what are other valuations preserving the same minimum and maximum probabilities for reachability properties as the reference valuation?
Parametric probabilistic timed automata were then given a symbolic semantics in~\cite{JK14};
a method has been proposed in that same work to synthesize optimal parameter valuations to maximize or minimize the probability of reaching a discrete location.

In the purely probabilistic setting, Interval Markov Chains (IMCs for short) have
been introduced by~\cite{DBLP:conf/lics/JonssonL91} to take into account imprecision in
the transition probabilities. IMCs extend Markov Chains by allowing to
specify intervals of possible probabilities on transitions instead of
exact values. Methods have then been developed to decide whether
there exist Markov Chains with concrete probability values that match
the intervals specified in a given IMC (see~\cite{DBLP:journals/jlp/DelahayeLLPW12}).

\paragraph{Contribution}
\todo{ÉA: revoir l'intro pour mettre le focus sur l'accessibilité}
In this paper, we propose to combine both abstraction approaches into a single specification theory: Parametric Interval Probabilistic Timed Automata (\PIPROBTAs{} for short).
In this setting, parameters can be used in order to abstract timed constants on transition guards while intervals can be used to abstract imprecise transition probabilities.
Allowing this higher level of freedom allow for incremental design, where one can first give large sets of values for which the system may be defined, and then further refine them.
This refinement will take the form of an instance of a probabilistic interval, or the concrete instance of a timing parameter.

As for IMCs, it is important to be able to decide whether the probability intervals that are specified in a model allow defining consistent probability distributions (\ie{} can be matched in a real-life implementation).
This is called the consistency problem.

First, in the context of Interval Probabilistic Timed Automata with no timing parameters (\IPROBTAs{} for short), we propose an algorithm that solves this problem.

Second, in the parametric setting, since the behavior of the system is
conditioned by the calibration of parameter values, it is necessary to decide whether there exist parameter values that ensure consistency of the resulting model (and synthesize these values when this is possible). 
We show that the existence of such parameter valuations is undecidable in the general context of \PIPROBTAs{}.
Still, we exhibit a sufficient syntactic condition on the use of the parameters to ensure decidability, when parameters are partitioned into lower-bound parameters and upper-bound parameters (in their comparisons with clocks).
In addition, we propose a construction that characterizes, whenever the parametric probabilistic zone graph is finite, the set of parameter values that ensure consistency of the resulting \IPROBTA{}.
We finally address the problem of parametric consistent reachability, \ie{} of synthesizing valuations for which a given state is reachable and the model is consistent.

\begin{example}\label{example:motivating}
	The Root Contention Protocol, used for the election of a leader in the physical layer of the IEEE~1394 standard, consists in first drawing a random number (0 or 1), then waiting for some time according to the result drawn, followed by the sending of a message to the contending neighbor.
	This is repeated by both nodes until one of them receives a message before sending one, at which point the root is appointed.
	This protocol was modeled in~\cite{CS01} using parametric timed automata, in~\cite{KNS03} with probabilistic timed automata, and in~\cite{AFS13}\ea{en fait non ! version conf seulement} using parametric probabilistic timed automata, \ie{} parametric timed automata extended with (non-parametric) probabilistic distributions.
	
	\cref{fig:RCP:node-prob} shows a \PIPROBTA{} model of the node~$i$.
	The wire can be found in \cite{KNS03,AFS13}.
	\cref{fig:RCP:node-prob} features one clock~$x_i$ and four parameters $\rcpFMin{}$, $\rcpFMax{}$, $\rcpSMin{}$ and $\rcpSMax{}$.
	In short, the goal of the protocol is that each node reaches either the child status, or the root status.
	In addition, observe that we use probabilistic interval distributions; they can be seen as an additional design freedom, allowing for incremental design.
	The one going out from ROOT\_IDLE clearly admits no implementation, as no instance of the two intervals $[0.3, 0.4]$ can be such that their sum is equal to~1.
	This probabilistic interval distribution could be either disabled by setting other probabilities to~0 so that location ROOT\_IDLE becomes unreachable; or by tuning the values of the four parameters (or the parameters in the other \PIPROBTAs{} in parallel) so that the guard going out from ROOT\_IDLE becomes unsatisfiable.
	The rest of the this manuscript is dedicated to this problem.
\end{example}

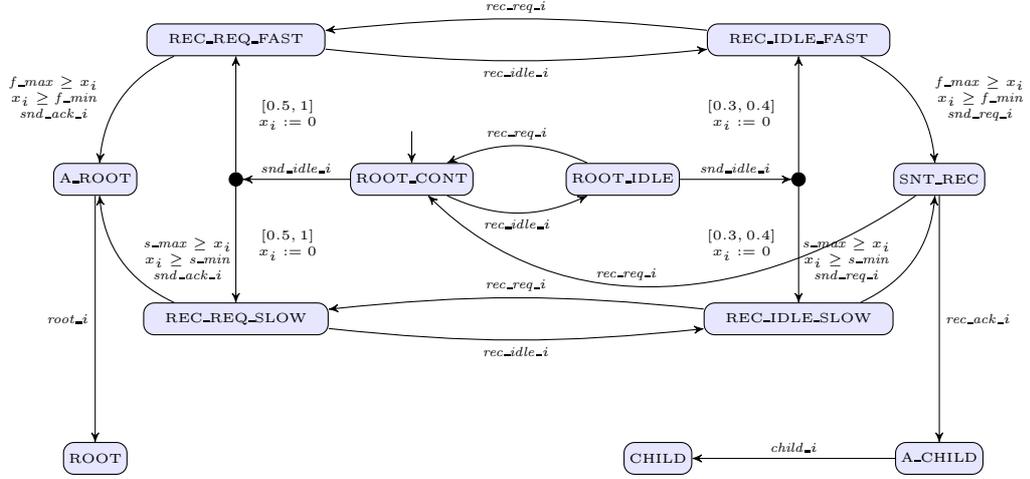
\begin{figure}[ht!]
{

	\centering
	\tiny

	\begin{tikzpicture}[scale = 1.85, auto, ->, >=stealth', thin] %

		\node[location, initial above] at (-0.75, 0) (RC) {\begin{tabular}{@{} c @{} }ROOT\_CONT\end{tabular}}; %
		\node[location] at (+0.75, 0) (RI) {\begin{tabular}{@{} c @{} }ROOT\_IDLE\end{tabular}}; %
		\node[location] at (-2, 1) (RRF) {\begin{tabular}{c}REC\_REQ\_FAST\end{tabular}}; %
		\node[location] at (-3, 0) (AR) {\begin{tabular}{@{} c @{} }A\_ROOT\end{tabular}}; %
		\node[location] at (+2, 1) (RIF) {\begin{tabular}{c}REC\_IDLE\_FAST \end{tabular}}; %
		\node[location] at (+3, 0) (SR) {SNT\_REC};
		\node[location] at (-2, -1) (RRS) {\begin{tabular}{c}REC\_REQ\_SLOW \end{tabular}}; %
		\node[location] at (+2, -1) (RIS) {\begin{tabular}{c}REC\_IDLE\_SLOW \end{tabular}};  %
		\node[location] at (-3, -2) (R) {ROOT};
		\node[location] at (+3, -2) (AC) {\begin{tabular}{@{} c @{} }A\_CHILD\end{tabular}}; %
		\node[location] at (+1, -2) (C) {CHILD};

		\node[point] at (-2, 0) (pRC) {};
		\node[point] at (+2, 0) (pRI) {};

		\path
		(RC)
			edge [bend angle=25, below, bend right] node {$\mathit{rec\_idle\_}i$} (RI)
			edge [above] node {$\mathit{snd\_idle\_}i$} (pRC)
		(pRC)
			edge [right] node {\begin{tabular}{c}$[0.5 , 1]$ \\ $x_i := 0$ \end{tabular}} (RRF)
			edge [right] node {\begin{tabular}{c}$[0.5 , 1]$ \\ $x_i := 0$ \end{tabular}} (RRS)
		(RI)
			edge [bend angle=25, above, bend right] node {$\mathit{rec\_req\_}i$} (RC)
			edge [above] node {$\mathit{snd\_idle\_}i$} (pRI)
		(pRI)
			edge [left] node {\begin{tabular}{c}$[0.3 , 0.4]$ \\ $x_i := 0$ \end{tabular}} (RIF)
			edge [left] node {\begin{tabular}{c}$[0.3 , 0.4]$ \\ $x_i := 0$ \end{tabular}} (RIS)
		(RRF)
			edge [bend angle=6, below, bend right] node {$\mathit{rec\_idle\_}i$} (RIF)
			edge [left, bend right] node {\begin{tabular}{c}$\rcpFMax \geq x_i $\\$ x_i \geq \rcpFMin$ \\ $\mathit{snd\_ack\_}i$ \end{tabular}} (AR)
		(AR)
			edge [left] node {$\mathit{root\_}i$} (R)
		(RIF)
			edge [bend angle=6, above, bend right] node {$\mathit{rec\_req\_}i$} (RRF)
			edge [right, bend left] node {\begin{tabular}{c}$\rcpFMax \geq x_i$\\$ x_i \geq \rcpFMin$ \\ $\mathit{snd\_req\_}i$ \end{tabular}} (SR)
		(SR)
			edge [right] node {$\mathit{rec\_ack\_}i$} (AC)
			edge [above left, bend left, out=+35,in=+135] node {$\mathit{rec\_req\_}i$} (RC)
		(RRS)
			edge [bend angle=6, below, bend right] node {$\mathit{rec\_idle\_}i$} (RIS)
			edge [right, bend left] node {\begin{tabular}{c}$\rcpSMax \geq x_i$\\$ x_i  \geq \rcpSMin$ \\ $\mathit{snd\_ack\_}i$ \end{tabular}} (AR)
		(RIS)
			edge [bend angle=6, above, bend right] node {$\mathit{rec\_req\_}i$} (RRS)
			edge [left, bend right] node {\begin{tabular}{c}$\rcpSMax \geq x_i$\\$ x_i  \geq \rcpSMin$ \\ $\mathit{snd\_req\_}i$ \end{tabular}} (SR)
		(AC)
			edge [above] node {$\mathit{child\_}i$} (C)

		;
		
	\end{tikzpicture}

	}

	\caption{\PIPROBTA{} modeling node~$i$ in the Root Contention Protocol}
	\label{fig:RCP:node-prob}
\end{figure}

\paragraph{Outline}
We start \cref{section:preliminaries} with preliminary
definitions and then introduce the concepts of \IPROBTAs{}
and \PIPROBTAs{}. In \cref{section:Consistency-IPROBTA}, we
study the consistency problem for \IPROBTAs{} and propose a constructive algorithm based on the zone-graph construction that decides whether an
\IPROBTA{} is consistent and produces an
implementation if one exists.
In \cref{section:consistency-PIPROBTA}, we move to the general problem of consistency of \PIPROBTAs{}.
We first show that this problem is undecidable in general and then exhibit a decidable subclass.
We then propose a construction that
characterizes, whenever the parametric probabilistic zone graph is finite, the set of parameter values
ensuring consistency of the resulting \IPROBTA{}.
We also consider the problem of parametric consistent reachability.
Finally, \cref{section:conclusion} concludes the paper.

\section{Preliminaries}
\label{section:preliminaries}
\subsection{Clocks, parameters and constraints}

Let $\grandn$, $\grandz$, $\grandqplus$ and $\grandrplus$ denote the
sets of non-negative integers, integers, non-negative rational numbers
and non-negative real numbers respectively.  Given an arbitrary set
$S$, we write $\dist(S)$ for the set of probabilistic distributions
over~$S$.

Throughout this paper, let $\Clock
= \{ \clock_1, \dots, \clock_\ClockCard \} $ be a set
of \emph{clocks}, \ie{} real-valued variables that evolve at the same
rate, and $\Param = \{ \param_1, \dots, \param_\ParamCard \} $ be a
set of \emph{parameters}, \ie{} unknown constants used in guards.

A clock valuation is a function
$\clockval : \Clock \rightarrow \grandrplus$.
We identify a clock valuation~$\clockval$ with the \emph{point} $(\clockval(\clock_1), \dots, \clockval(\clock_{\ClockCard}))$.
We write $\vec{0}$ for the valuation that assigns $0$ to each clock.
Given $d \in \grandrplus$, $\clockval + d$ denotes the valuation such
that $(\clockval + d)(\clock) = \clockval(\clock) + d$, for all
$\clock \in \Clock$. Given $\resets \subseteq \Clock$, we define
$\reset{\clockval}{\resets}$ as the clock valuation obtained by
resetting the clocks in $\resets$ and keeping the other clocks unchanged.

A parameter {\em valuation} $\pval$ is a function
$\pval : \Param \rightarrow \grandqplus$.
We identify a parameter valuation~$\pval$ with the \emph{point} $(\pval(\param_1), \dots, \pval(\param_{\ParamCard}))$.

In the following, we assume %
${\compOp} \in \{<, \leq, \geq, >\}$.
Let $\lterm$ range over affine terms over $\Clock \cup \Param$, of the form $\sum_{1 \leq i \leq \ClockCard} \alpha_i \clock_i + \sum_{1 \leq j \leq \ParamCard} \beta_j \param_j + d$, with
	$\clock_i \in \Clock$,
	$\param_j \in \Param$,
	and
	$\alpha_i, \beta_j, d \in \grandz$. Similarly, let
$\plterm$ range over parametric affine terms over $\Param$, that is affine terms without clocks ($\alpha_i = 0$ for all $i$).
A \emph{constraint}~$\C$ over $\Clock \cup \Param$ is a conjunction of inequalities of the form $\lterm \compOp 0$ (\ie{} a convex polyhedron).
Given a parameter valuation~$\pval$, $\valuate{\C}{\pval}$ denotes the constraint over~$\Clock$ obtained by replacing each parameter~$\param$ in~$\C$ with~$\pval(\param)$.
Likewise, given a clock valuation~$\clockval$, $\valuate{\valuate{\C}{\pval}}{\clockval}$ denotes the expression obtained by replacing each clock~$\clock$ in~$\valuate{\C}{\pval}$ with~$\clockval(\clock)$.
We say that %
$\pval$ \emph{satisfies}~$\C$,
denoted by $\pval \models \C$,
if the set of clock valuations satisfying~$\valuate{\C}{\pval}$ is nonempty.
Given a parameter valuation $\pval$ and a clock valuation $\clockval$, we denote by $\wv{\clockval}{\pval}$ the valuation over $\Clock\cup\Param$ such that 
for all clocks $\clock$, $\valuate{\clock}{\wv{\clockval}{\pval}}=\valuate{\clock}{\clockval}$
and 
for all parameters $\param$, $\valuate{\param}{\wv{\clockval}{\pval}}=\valuate{\param}{\pval}$.
We use the notation $\wv{\clockval}{\pval} \models \C$ to indicate that $\valuate{\valuate{\C}{\pval}}{\clockval}$ evaluates to true.
We say that $\C$ is \emph{satisfiable} if $\exists \clockval, \pval \text{ \st{}} \wv{\clockval}{\pval} \models \C$.
We define the \emph{time elapsing} of~$\C$, denoted by $\timelapse{\C}$, as the constraint over $\Clock$ and $\Param$ obtained from~$\C$ by delaying all clocks by an arbitrary amount of time.
Given $\resets \subseteq \Clock$, we define the \emph{reset} of~$\C$, written $\reset{\C}{\resets}$, as the constraint obtained from~$\C$ by resetting the clocks in~$\resets$, and keeping the other clocks unchanged.
We denote by $\projectP{\C}$ the projection of~$\C$ onto~$\Param$, \ie{} obtained by eliminating the clock variables (\eg{} using the Fourier-Motzkin algorithm). %

A \emph{guard}~$\guard$ is a constraint over $\Clock \cup \Param$ defined by inequalities of the form
	$\clock \compOp z$, where $\clock\in\Clock$ and $z$ is either a parameter or a constant in~$\grandz$.

A \emph{zone} is a polyhedron over a set of clocks in which all constraints on variables are of the form $\clock\compOp k$ (rectangular constraints) or $\clock_i -\clock_j \compOp k$ (diagonal constraints), where $\clock_i \in \Clock$, $\clock_j \in \Clock$ and $k$ is an integer.
Operations on zones are well-documented (see \eg{} \cite{BY03}).

A \emph{parametric zone} is a convex polyhedron over $\Clock \cup \Param$ in which all constraints on variables are of the form $\clock\compOp \plterm$ (parametric rectangular constraints) or $\clock_i -\clock_j \compOp \plterm$ (parametric diagonal constraints), where $\clock_i \in \Clock$, $\clock_j \in \Clock$ and $\plterm$ is a parametric affine term over $\Param$.
We denote the set of all parametric zones by \PZones{}.

\subsection{Probabilistic timed automata}\label{ss:TPS}

We start by reviewing the definition of timed probabilistic systems, as defined in~\cite{KNSS02}. %
A \emph{timed probabilistic system (TPS)} is a tuple
$\TPS = (\States, \stateinit, \Actions, \StepsIPTA)$ where
$\States$ is a set of \emph{states},
$\stateinit \in \States$ is the \emph{initial state},
$\Actions$ is a finite set of {\em actions}, and
$\StepsIPTA \subseteq \States \times \grandrplus \times \Actions \times \dist(\States)$
is a {\em probabilistic transition relation} that associates a probabilistic distribution over~$\States$ to triples made of a source state in~$\States$, a time in~$\grandrplus$ and an action in~$\Actions$.
Probabilistic timed automata (defined by~\cite{GJ95,KNSS02}) are an extension of classical timed automata (defined in~\cite{AD94}) with discrete probability distributions.

\subsubsection{Syntax}

\begin{definition}\label{def:probTA}
	A Probabilistic Timed Automaton (\PROBTA) $\probta$ is a tuple
        $(\Actions, \Loc, \locinit,\linebreak[4] \Clock, \TransitionsPTA)$, where: %
	\begin{inparaenum}[\itshape i\upshape)]
		\item $\Actions$ is a finite set of actions,
		\item $\Loc$ is a finite set of locations,
		\item $\locinit \in \Loc$ is the initial location,
		\item $\Clock$ is a finite set of clocks,
		\item $\TransitionsPTA$ is a {\em probabilistic edge relation}
		consisting of elements of the form $(\loc,\guard,\action,\DistPTA)$,
		where
		$\loc \in \Loc$, $\guard$  is a zone over the clocks~$\Clock$,
		$\action \in \Actions$, and $\DistPTA \in \dist(2^\Clock \times \Loc)$.
	\end{inparaenum}
\end{definition}

Note that we use no invariant; this is an important condition for the correctness of our techniques.
However, invariants can be eliminated (moved to the guards prior to the transition), following classical techniques defined for (probabilistic) timed automata.\ea{note pour moi pour un jour : give réf}

We use the following conventions for the graphical representation of
probabilistic timed automata: locations are represented by nodes,
within which name %
of the location is written;
probabilistic edges are represented by arcs from locations, labeled
by the associated guard and action, and which split into multiple arcs,
each of which leads to a location and which is labeled by a set of
clocks to be reset to 0 and a probability (probabilistic edges which
correspond to probability 1 are illustrated by a single arc from
location to location).

\begin{example}
	\cref{figure:example:ProbTA} presents an example of a \PROBTA{} with two clocks $x$ and~$y$.
	For example, $\loc_0$ can be exited whenever $y < 2$; then, with probability $0.4$ the target location becomes $\loc_2$, resetting~$x$; or with probability $0.6$ the target location is~$\loc_1$, resetting~$y$.
	The transition from~$\loc_2$ can be explained similarly.
\end{example}

\subsubsection[Semantics of \PROBTAs{}]{Semantics of \PROBTAs{}}

A \PROBTA{} can be interpreted as an infinite TPS.
Due to the continuous nature of clocks, the underlying TPS has uncountably many states, and is uncountably branching.

\begin{definition}[Concrete semantics of a \PROBTA{}]
	Given a \PROBTA{} $\probta = (\Actions, \Loc, \locinit, \Clock,\linebreak[4] \TransitionsPTA)$, where $\ClockCard = |\Clock|$,
	the concrete semantics of $\probta$ is given by the timed probabilistic system $\TPS_\probta = (\States, \stateinit, \Actions, \StepsIPTA)$, with
	\begin{itemize}
		\item $\States = \{ (\loc, \clockval) \in \Loc \times \grandrplus^\ClockCard \}$ %
            , $\stateinit = (\locinit, \vec{0}) $
		\item  $ ((\loc, \clockval), d, \action, \distributionTPS) \in \StepsIPTA$ if both of the following conditions hold:
		\begin{enumerate}
			\item time elapse: $\forall d' \in [0, d], (\loc, \clockval+d') \in \States$, and
			\item edge traversal:
				there exists a probabilistic edge $\pedge = (\loc,\guard,\action, \DistPTA) \in \TransitionsPTA$ such that $\clockval + d \models \guard$ and,
				for each $\loc' \in \Loc$ and $\resets \subseteq \Clock$,
				$\distributionTPS(\loc', \reset{\clockval+d}{\resets}) = \DistPTA(\resets, \loc')$.
		\end{enumerate}
	\end{itemize}
\end{definition}

Note that, due to the fact that we have no invariants, the first condition (time elapse) is always trivially true.

\subsection{Parametric interval probabilistic timed automata}
In this section, we introduce basic definitions for {\em (parametric)
interval probabilistic timed automata}, that extend (parametric)
probabilistic timed automata by providing {\em intervals} for
transition probabilities instead of exact probability values.
In the spirit of (parametric) Interval Markov Chains defined in~\cite{BDSyncop15,DBLP:conf/vmcai/DelahayeLP16}, (parametric)
interval probabilistic timed automata are used for specifying
potentially infinite families (sets) of probabilistic
timed automata---those whose exact probability values match the
specified intervals---with a finite structure of similar form.

\subsubsection{Syntax}\label{sss:PIPTA:syntax}

Given an arbitrary measurable set $S$, we call an {\em interval distribution}
over $S$ a function $\DistIPTA$ that assigns to each element of $S$ an
interval of probabilities $[a,b] \subseteq [0,1]$. Intuitively, an
interval distribution $\DistIPTA$ over $S$ represents the set of all
distributions $\mu \in \dist(S)$ that assign to each element $s \in
S$ a probability $\mu(s)$ such that
$\mu(s) \in \DistIPTA(s)$.
Formally, let $\IntDist(S)$ denote the set of all interval distributions over~$S$;
we define the implementation of an interval distribution as follows.

\begin{definition}[Implementation of an interval distribution]
	Let $S$ be an arbitrary set. Given an interval
	distribution~$\DistIPTA \in \IntDist(S)$,
	$\DistPTA \in \dist(S)$ is
	an \emph{implementation} of~$\DistIPTA$, written
	$\DistPTA \in \DistIPTA$ iff, for all $s \in S$, we have
	$\DistPTA(s) \in \DistIPTA(s)$.
\end{definition}

We now move to the definition of (parametric) interval probabilistic timed automata.

\begin{definition}\label{def:PIPTA}
	A Parametric Interval Probabilistic Timed Automaton\linebreak[4]
        (\PIPROBTA{}) $\piprobta$ is a tuple $(\Actions, \Loc, \locinit, \Clock, \Param, \TransitionsIPTA)$, %
        where: \begin{inparaenum}[\itshape i\upshape)] \item
        $\Actions$ is a finite set of actions, \item $\Loc$ is a
        finite set of locations, \item $\locinit \in \Loc$ is the
        initial location, \item $\Clock$ is a finite set of
        clocks, \item $\Param$ is a finite set of parameters,
		\item $\TransitionsIPTA$ is an {\em interval-valued
		probabilistic edge relation} consisting of elements of
		the form $(\loc,\guard,\action,\DistIPTA)$, where
		$\loc \in \Loc$, $\guard$ is a guard,
		$\action \in \Actions$, and
		$\DistIPTA \in \IntDist(2^\Clock \times \Loc)$ is
		an interval distribution.  \end{inparaenum}
\end{definition}

Given a \PIPROBTA{}~$\piprobta = (\Actions, \Loc, \locinit, \Clock, \Param, \TransitionsIPTA)$ and a parameter valuation~$\pval$,
the \emph{valuation} of $\piprobta$ with~$\pval$, written $\valuate{\piprobta}{\pval}$,
is an Interval Probabilistic Timed Automaton (\IPROBTA{}) $\iprobta = (\Actions, \Loc, \locinit, \Clock, \TransitionsIPTA')$,
where $\TransitionsIPTA'$ is obtained by replacing within~$\TransitionsIPTA$ any occurrence of a parameter~$\param$ with $\pval(\param)$ and removing all transitions $(\loc,\guard,\action,\DistIPTA)$ such that $\valuate{\guard}{\pval} \equiv \bot$ (technically, this latter part is not strictly speaking necessary, but it syntactically reduces the model a bit)\ea{rajouté ça pour faire plaisir à R2}.

Remark that \IPROBTAs{} are very similar to \PROBTAs{}: the only
difference is that probabilistic edges are labeled with intervals
instead of exact probability values.

In our graphical representations, when the interval associated with a
distribution is reduced to a point (\eg{} $[0.5, 0.5]$, we simply
represent it using its punctual value (\ie{} $0.5$).  Also, when a
distribution is made of a single target location with probability~1,
we simply omit the distribution.

\begin{figure}

	\begin{subfigure}[b]{\textwidth}
	{\centering
	
	\newcommand{\hscale}{1.5}
	\newcommand{\vscale}{1}
	\begin{tikzpicture}[node distance=2cm, auto, ->, >=stealth']
		\node[location0, initial] (l0) at (0,0) {$\loc_0$};
		\node [probchoice] (l0choice) at (1*\hscale, 0*\vscale) {};

		\node[location1] (l1) at (2*\hscale, -1*\vscale) {$\loc_1$};
		\node[location2] (l2) at (2*\hscale, +1*\vscale) {$\loc_2$};

		\node[location5] (l5) at (6*\hscale, +1*\vscale) {$\loc_5$};

                \node[location2] (l6) at (4*\hscale, 2*\vscale) {$\loc_2'$};
		
		\node [probchoice] (l2choice) at (3*\hscale, +1*\vscale) {};

		\node [probchoice] (l6choice) at (5*\hscale, +2*\vscale) {};

		\path
			(l0) edge node[below]{\begin{tabular}{c}\footnotesize $y < 2$ \\ $a$\end{tabular}} (l0choice)
		
			(l0choice) edge[probedge] node[below]{\footnotesize \begin{tabular}{c}$0.6$ \\ $y := 0$\end{tabular}} (l1)
			(l0choice) edge[probedge] node[xshift=8, yshift=-8]{\footnotesize \begin{tabular}{c}$0.4$ \\ $x := 0$\end{tabular}} (l2)
			
			(l2) edge node[below,yshift=-3]{\footnotesize \begin{tabular}{c}$x = 1 \land y \leq 2$ \\ $c$ \end{tabular}} (l2choice)

			(l2choice) edge[probedge, out=30, in=60] node[above, xshift=-10, yshift=-3]{\footnotesize \begin{tabular}{c}$0.1$ \\ $x := 0$\end{tabular}} (l2)
			(l2choice) edge[probedge] node[above, xshift=-3, yshift=-2]{\footnotesize \begin{tabular}{c}$0.1$ \\ $x := 0$\end{tabular}} (l6)
			(l2choice) edge[probedge] node[below]{\footnotesize $0.8$} (l5)

			(l6) edge node[below,yshift=0]{\footnotesize \begin{tabular}{c}$x = 1 \land y \leq 2$ \\ $c$ \end{tabular}} (l6choice)

			(l6choice) edge[probedge, out=30, in=60] node[above, yshift=-3, xshift=-8]{\footnotesize \begin{tabular}{c}$0.1$ \\ $x := 0$\end{tabular}} (l6)
			(l6choice) edge[probedge] node{\footnotesize $0.9$} (l5)

                ;
		
		\AngleA{l0choice}{l1}{l2};

		\node[draw=none] (l2fake) at (3.5*\hscale, 2*\vscale){};
		\AngleA{l2choice}{l2fake}{l5};

		\node[draw=none] (l6fake) at (5.5*\hscale, 3*\vscale){};
		\AngleA{l6choice}{l6fake}{l5};

\end{tikzpicture}
	
	}
		\caption{A \PROBTA{}}
		\label{figure:example:ProbTA}
	\end{subfigure}

	\bigskip
	\bigskip

	\begin{subfigure}[b]{\textwidth}
	{\centering
	
	\newcommand{\hscale}{1.7}
	\newcommand{\vscale}{1.1}
	\begin{tikzpicture}[node distance=2cm, auto, ->, >=stealth']
		\node[location0, initial] (l0) at (0,0) {$\loc_0$};
		\node [probchoice] (l0choice) at (1*\hscale, 0*\vscale) {};

		\node[location1] (l1) at (2*\hscale, -1*\vscale) {$\loc_1$};
		\node[location2] (l2) at (2*\hscale, +1*\vscale) {$\loc_2$};

		\node [probchoice] (l1choice) at (3*\hscale, -1*\vscale) {};

		\node[location5] (l5) at (4*\hscale, 1*\vscale) {$\loc_5$};
		
		\node [probchoice] (l2choice) at (3*\hscale, +1*\vscale) {};

		\node[location3] (l3) at (5*\hscale, -2*\vscale) {$\loc_3$};
		\node[location4] (l4) at (5*\hscale, 0*\vscale) {$\loc_4$};
		
		\path
			(l0) edge node[below]{\footnotesize\begin{tabular}{c}$y < 2$ \\ $a$ \end{tabular}} (l0choice)
		
			(l0choice) edge[probedge] node[below]{\footnotesize\begin{tabular}{c}$[0,1]$ \\ $y := 0$\end{tabular}} (l1)
			(l0choice) edge[probedge] node[xshift=8, yshift=-8]{\footnotesize\begin{tabular}{c}$[0,0.5]$ \\ $x := 0$\end{tabular}} (l2)
			
			(l1) edge node[below]{\footnotesize\begin{tabular}{c}$2 \leq x \leq \param$ \\ $b$ \end{tabular}} (l1choice)

			(l1choice) edge[probedge] node[below]{\footnotesize\begin{tabular}{c}$[0,0.2]$ \\ $y := 0$\end{tabular}} (l3)
			(l1choice) edge[probedge] node[yshift=-5]{\footnotesize\begin{tabular}{c}$[0,0.3]$ \\ $x,y := 0$\end{tabular}} (l4)

			(l2) edge node[below,yshift=-3]{\footnotesize\begin{tabular}{c}$x = 1 \land y \leq 2$ \\ $c$\end{tabular}} (l2choice)

			(l2choice) edge[probedge, out=30, in=60] node[above, xshift=-10, yshift=-3]{\footnotesize\begin{tabular}{c}$[0,0.2]$ \\ $x := 0$\end{tabular}} (l2)
			(l2choice) edge[probedge] node{\footnotesize$[0.8,1]$} (l5)
			
			(l3) edge[bend left] node[left,xshift=8]{\footnotesize\begin{tabular}{c}$x = 5$\\ $d$ \\ $x,y := 0$\end{tabular}} (l4)
			(l4) edge[bend left] node[right]{\footnotesize\begin{tabular}{c}$2 \leq x \leq \param$\\ $e$ \\ $x := 0$\end{tabular}} (l3)
		;
		
		\AngleA{l0choice}{l1}{l2};
		
		\AngleA{l1choice}{l4}{l3};

		\node[draw=none] (l2fake) at (3.5*\hscale, 2*\vscale){};
		\AngleA{l2choice}{l2fake}{l5};
	\end{tikzpicture}
	
	}
		\caption{A \PIPROBTA{}}
		\label{figure:example:PIPROBTA}
	\end{subfigure}

	\caption{Examples}
\end{figure}
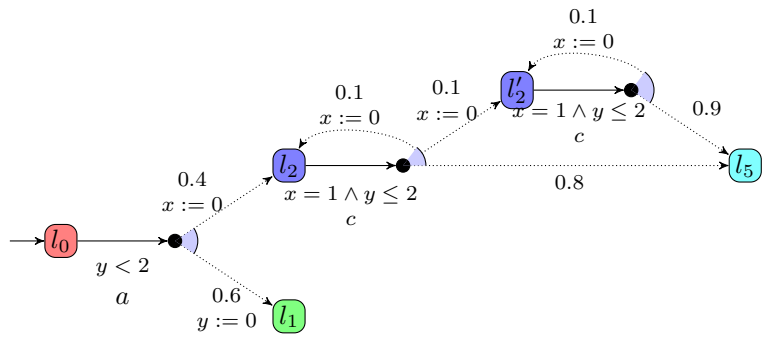
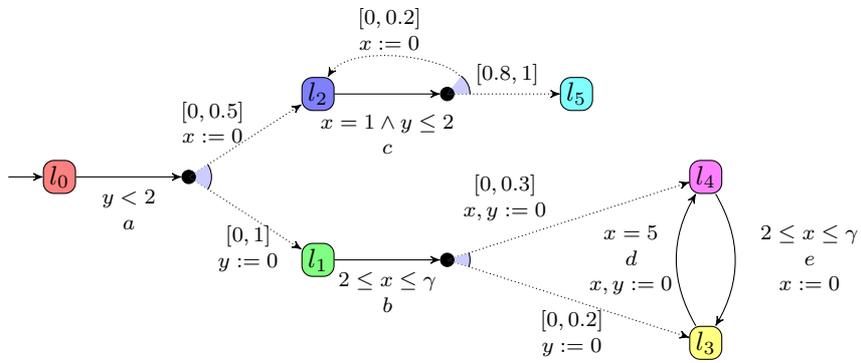

Once a parameter valuation is fixed, the resulting \IPROBTA{}
represents a potentially infinite set of \PROBTAs{}. In order to relate
a given \IPROBTA{} with the \PROBTAs{} it represents, we use the notion
of {\em implementation} defined hereafter.
This notion is similar to the one defined in the context of (parametric) Interval Markov Chains in~\cite{BDSyncop15,DBLP:conf/vmcai/DelahayeLP16}. Remark that a \PROBTA{} implementing
an \IPROBTA{} needs to conserve the exact same clocks, guards and resets.

\begin{definition}[Implementation of an \IPROBTA{}]\label{definition:IPROBTA:implementation}
	Let $\probta = (\Actions, \Loc, \locinit, \Clock, \TransitionsPTA)$ be
	a \PROBTA{} and $\iprobta =
	(\Actions, \Loc', \locinit', \Clock, \TransitionsIPTA)$ be
	an \IPROBTA{}.
	
	We say that $\probta$ is an implementation of $\iprobta$, written $\probta \models \iprobta$, iff there exists a relation $ \RelSimPTA \subseteq \Loc \times \Loc'$, called an {\em implementation relation} \st{}
	$(\loc_0, \loc_0') \in \RelSimPTA$ and, whenever $(\loc, \loc') \in \RelSimPTA$, we have
	\begin{itemize}
		\item $\forall (\loc,\guard, \action, \DistPTA) \in \TransitionsPTA
	, \exists (\loc', \guard, \action, \DistIPTA) \in \TransitionsIPTA$ \st{} $\DistPTA \preceq_{\RelSimPTA} \DistIPTA$, and
		\item $\forall (\loc', \guard', \action, \DistIPTA) \in \TransitionsIPTA, \exists (\loc,\guard', \action, \DistPTA) \in \TransitionsPTA$ \st{} 
	$\DistPTA \preceq_{\RelSimPTA} \DistIPTA$,
	\end{itemize}
	where $ \DistPTA \preceq_{\RelSimPTA} \DistIPTA$ iff
	$\exists \correspondance \in \dist(\Loc \times \Loc')$ \st{}
	\begin{itemize}
		\item $\forall (\resets, \loc) \in 2^\Clock \times \Loc, \DistPTA(\resets, \loc) > 0 \Rightarrow \sum_{\loc' \in \Loc'}(\correspondance(\loc, \loc'))=1$,
		\item $\forall (\resets', \loc') \in 2^\Clock \times \Loc' , \sum_{\loc \in \Loc}( \DistPTA(\resets', \loc) \cdot \correspondance(\loc, \loc')) \in \DistIPTA(\resets', \loc')$, and
		\item $\correspondance(\loc, \loc') > 0 \Rightarrow (\loc, \loc') \in \RelSimPTA$.
	\end{itemize}
\end{definition}

In the above definition, the relation $\RelSimPTA$ encodes the pairs
of states $(\loc,\loc') \in \Loc \times \Loc'$ where $\loc$ is an {\em
implementation} of $\loc'$. On the other hand, the relation
$\preceq_{\RelSimPTA}$ is a lifting of the relation $\RelSimPTA$ to
distributions over locations (also called a {\em coupling}), and
therefore represents {\em compatible} distributions \wrt
$\RelSimPTA$. This notion of satisfaction has been adapted from the
notion of ``weak weak'' satisfaction in the context of Abstract Probabilistic
Automata, for which several notions of satisfaction exist. We have
chosen this particular notion because it is the most permissive among
those presented in~\cite{APAJournal}. For a detailed discussion on
this topic, we refer the interested reader to~\cite{APAJournal}.

Given an \IPROBTA{}, deciding whether the family it represents is
nonempty is a nontrivial problem. Indeed, the interval distributions
used throughout its structure could represent contradictory
constraints on the transition probabilities, therefore preventing
any \PROBTA{} from implementing it.

In the following, we say that a \PROBTA{} $\probta$ has {\em the same
  structure} as an \IPROBTA{} $\iprobta$ if and only if the underlying
directed graph of $\probta$ is a subgraph (up to renaming and removal
of unreachable states) of the underlying directed graph of
$\iprobta$. This notion trivially extends to \PIPROBTA{} and other
models we use in the rest of the paper.

\begin{definition}[Consistency of an \IPROBTA{}]\label{definition:consistency:IPROBTA}
	An \IPROBTA{} is consistent if it admits at least one implementation.
\end{definition}

\begin{example}\label{example:two}
	Consider the \PIPROBTA{}~$\piprobta$ given in \cref{figure:example:PIPROBTA}, and containing a single parameter~$\param$.
		Let $\pval_1$ be the parameter valuation such that $\pval_1(\param) = 1$.
	In the \IPROBTA{} $\valuate{\piprobta}{\pval_1}$, the transition outgoing from~$\loc_1$ can never be taken, as its guard becomes $2 \leq x \leq 1$, which is unsatisfiable.
	Then, it is clear that the \PROBTA{}~$\probta$ given in \cref{figure:example:ProbTA} is an implementation of~$\valuate{\piprobta}{\pval_1}$. We emphasize the fact that location $\loc_2$ from $\valuate{\piprobta}{\pval_1}$ has been ``unfolded'' in $\probta$, yielding two locations $\loc_2$ and $\loc_2'$. As a consequence, the underlying structure of $\probta$ is not identical to the one of $\valuate{\piprobta}{\pval_1}$. Nevertheless, both locations $\loc_2$ and $\loc_2'$ of $\probta$ obviously satisfy the original $\loc_2$ from $\valuate{\piprobta}{\pval_1}$, which allows $\probta$ to satisfy $\valuate{\piprobta}{\pval_1}$ despite their distinct structures.
	As a consequence, $\valuate{\piprobta}{\pval_1}$ is a consistent \IPROBTA{}.
\end{example}

An important problem is therefore to decide whether a given \IPROBTA{} is consistent, which we address in the next section.

\section{The consistency problem for \IPROBTAs{}}\label{section:Consistency-IPROBTA}

In this section, we address the problem of deciding whether a
given \IPROBTA{} is consistent. Unlike in the context of IMCs, where it
is proven that a given IMC is consistent iff it admits an
implementation with the same structure, a given \IPROBTA{} can be
consistent but still not admit any implementation that respects its
structure. 
Indeed, the structure of implementations depends on the structure of the zone graph rather than on the structure of the \IPROBTA{} itself which can be different.
Algorithms such as those proposed for deciding consistency of (p)IMCs
in~\cite{DBLP:conf/vmcai/DelahayeLP16} therefore cannot be directly adapted to the \IPROBTAs{} setting
as they are dependent on this property.

Fortunately, the operational semantics of \IPROBTAs{} can be expressed
in terms of Interval Markov Decision Processes (\IMDPs{}), which are
similar to IMCs and satisfy the same structural properties regarding
consistency.
We therefore propose an algorithm for deciding consistency of
\IPROBTAs{} based on the consistency of their symbolic \IMDP{}
semantics. An alternative solution would be to ``normalize''
\IPROBTAs{} into special \IPROBTAs{} where all edges can fire, via a
region construction. For the sake of simplicity, we only explore the
first solution.
We start with preliminary definitions on \IMDPs{}, then formally define the symbolic semantics of \IPROBTAs{} and finally propose an algorithm for deciding whether a given \IPROBTA{} is consistent.

\subsection{Preliminary definitions}

An \IMDP{} is a tuple $(\States, \stateinit, \Actions, \TransitionsMDP)$
where $\States$ is a set of states, $\stateinit \in \States$ is the
initial state, $\Actions$ is a finite set of actions and
$\TransitionsMDP \subseteq \States \times \Actions \times \IntDist(\States)$
is a probabilistic (interval) transition relation.

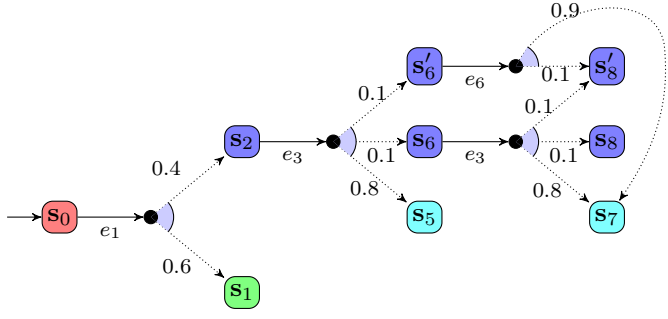
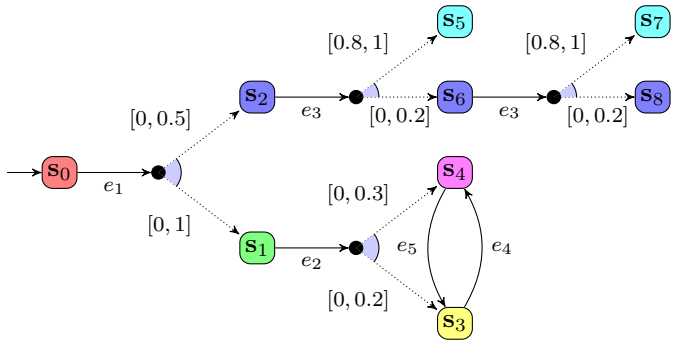
\begin{figure}

	\begin{subfigure}[b]{\textwidth}
	{
	
	\newcommand{\hscale}{1.2}
	\newcommand{\vscale}{1}

	\begin{tikzpicture}[node distance=2cm, auto, ->, >=stealth']
		\node[location0, initial] (l0) at (0,0) {$\symbstate_0$};
		\node [probchoice] (l0choice) at (1*\hscale, 0*\vscale) {};

		\node[location1] (l1) at (2*\hscale, -1*\vscale) {$\symbstate_1$};
		\node[location2] (l2) at (2*\hscale, +1*\vscale) {$\symbstate_2$};

		\node[location5] (l5) at (4*\hscale, 0*\vscale) {$\symbstate_5$};
		\node[location2] (l2b) at (4*\hscale, 1*\vscale) {$\symbstate_6$};
		\node[location2] (l6) at (4*\hscale, 2*\vscale) {$\symbstate_6'$};
		
		\node [probchoice] (l2choice) at (3*\hscale, +1*\vscale) {};

		\node[location5] (l5b) at (6*\hscale, 0*\vscale) {$\symbstate_7$};
		\node[location2] (l6b) at (6*\hscale, 2*\vscale) {$\symbstate_8'$};
		\node[location2] (l2c) at (6*\hscale, 1*\vscale) {$\symbstate_8$};
		
		\node [probchoice] (l2bchoice) at (5*\hscale, +1*\vscale) {};
		\node [probchoice] (l6choice) at (5*\hscale, +2*\vscale) {};

                \coordinate[shift={(5mm,2mm)}] (n) at (l6b.north east);
                \coordinate[shift={(10mm,0mm)}] (m) at (l2c.east);

		\path
			(l0) edge node[below]{\footnotesize$\pedge_1$} (l0choice) %
		
			(l0choice) edge[probedge] node[below, xshift=-5]{\footnotesize$0.6$} (l1)
			(l0choice) edge[probedge] node{\footnotesize$0.4$} (l2)
			
			(l2) edge node[below]{\footnotesize$\pedge_3$} (l2choice) %

			(l2choice) edge[probedge] node[below, xshift=3, yshift=1]{\footnotesize$0.1$} (l2b)
			(l2choice) edge[probedge] node[below, xshift=-3, yshift=1]{\footnotesize$0.8$} (l5)
			(l2choice) edge[probedge] node[above]{\footnotesize$0.1$} (l6)
			
			(l2b) edge node[below]{\footnotesize$\pedge_3$} (l2bchoice) %

			(l2bchoice) edge[probedge] node[below, xshift=3, yshift=1]{\footnotesize$0.1$} (l2c)
			(l2bchoice) edge[probedge] node[below, xshift=-3]{\footnotesize$0.8$} (l5b)
			(l2bchoice) edge[probedge] node[xshift=3, yshift=-5]{\footnotesize$0.1$} (l6b)

			(l6) edge node[below]{\footnotesize$\pedge_6$} (l6choice) %

			(l6choice) edge[probedge] node[below, yshift=3]{\footnotesize$0.1$} (l6b)

		;

		\draw[->, densely dotted, rounded corners=5pt] (l6choice) to[out = 60] node[above, yshift=-5, xshift=-8]{\footnotesize$0.9$} (n) to[bend left =20]  (l5b);
		
		\AngleA{l0choice}{l1}{l2};
		
		\AngleA{l2choice}{l6}{l5};
		\AngleA{l2bchoice}{l6b}{l5b};

		\node[draw=none] (l6fake) at (5.5*\hscale, 3*\vscale){};
		\AngleA{l6choice}{l6fake}{l6b};
                
	\end{tikzpicture}
	
	}
		\caption{An example of an \MDP{}}
		\label{figure:example:MDP}
	\end{subfigure}

	\bigskip

	\begin{subfigure}[b]{\textwidth}
	{%
	
	\newcommand{\hscale}{1.3}
	\newcommand{\vscale}{1}

	\begin{tikzpicture}[node distance=2cm, auto, ->, >=stealth']
		\node[location0, initial] (l0) at (0,0) {$\symbstate_0$};
		\node [probchoice] (l0choice) at (1*\hscale, 0*\vscale) {};

		\node[location1] (l1) at (2*\hscale, -1*\vscale) {$\symbstate_1$};
		\node[location2] (l2) at (2*\hscale, +1*\vscale) {$\symbstate_2$};

		\node [probchoice] (l1choice) at (3*\hscale, -1*\vscale) {};

		\node[location5] (l5) at (4*\hscale, 2*\vscale) {$\symbstate_5$};
		\node[location2] (l2b) at (4*\hscale, 1*\vscale) {$\symbstate_6$};
		
		\node [probchoice] (l2choice) at (3*\hscale, +1*\vscale) {};

		\node[location5] (l5b) at (6*\hscale, 2*\vscale) {$\symbstate_7$};
		\node[location2] (l2c) at (6*\hscale, 1*\vscale) {$\symbstate_8$};
		
		\node [probchoice] (l2bchoice) at (5*\hscale, +1*\vscale) {};

		\node[location3] (l3) at (4*\hscale, -2*\vscale) {$\symbstate_3$};
		\node[location4] (l4) at (4*\hscale, 0*\vscale) {$\symbstate_4$};
		
		\path
			(l0) edge node[below]{\footnotesize$\pedge_1$} (l0choice) %
		
			(l0choice) edge[probedge] node[below left]{\footnotesize$[0,1]$} (l1)
			(l0choice) edge[probedge] node{\footnotesize$[0,0.5]$} (l2)
			
			(l1) edge node[below]{\footnotesize$\pedge_2$} (l1choice) %

			(l1choice) edge[probedge] node[below left]{\footnotesize$[0,0.2]$} (l3)
			(l1choice) edge[probedge] node{\footnotesize$[0,0.3]$} (l4)

			(l2) edge node[below]{\footnotesize$\pedge_3$} (l2choice) %

			(l2choice) edge[probedge] node[below]{\footnotesize$[0,0.2]$} (l2b)
			(l2choice) edge[probedge] node{\footnotesize$[0.8,1]$} (l5)
			
			(l2b) edge node[below]{\footnotesize$\pedge_3$} (l2bchoice) %

			(l3) edge[bend right] node[right]{\footnotesize$\pedge_4$} (l4) %
			(l4) edge[bend right] node[left]{\footnotesize$\pedge_5$} (l3) %

			(l2bchoice) edge[probedge] node[below]{\footnotesize$[0,0.2]$} (l2c)
			(l2bchoice) edge[probedge] node{\footnotesize$[0.8,1]$} (l5b)
			
		;
		
		\AngleA{l0choice}{l1}{l2};
		
		\AngleA{l1choice}{l4}{l3};
		
		\AngleA{l2choice}{l2b}{l5};
		\AngleA{l2bchoice}{l2c}{l5b};

	\end{tikzpicture}
	
	\caption{An example of an \IMDP{}}
	\label{figure:example:IMDP}
	}
	\end{subfigure}

	\caption{Examples}
\end{figure}

\begin{example}
	\cref{figure:example:IMDP} depicts an example of an \IMDP{}.
	Just as for \IPROBTAs{}, when the interval associated with a distribution is reduced to a point (which is not the case here), we simply represent it using its punctual value.
	When a distribution is made of a single target location with probability~1, we simply omit the distribution (\eg{} from $\symbstate_3$ to~$\symbstate_4$).
\end{example}

\begin{definition}[\MDP]
	An \MDP{} is an \IMDP{} such that for each $(s,a,\DistIMDP) \in \TransitionsMDP$,
	and for all $s' \in S$, we have $\DistIMDP(s') = [m,m]$ is a singleton.
	In addition, for each $(s,a,\DistIMDP) \in \TransitionsMDP$, we have
	\[\sum_{\state' \in S} \DistIMDP(s') = 1\text{.}\]
\end{definition}

\begin{example}
	\cref{figure:example:MDP} depicts an example of an \MDP{}.
\end{example}

\begin{definition}[Implementation of an \IMDP{}]
\label{def:imdp-impl}
	Let $\imdp = (\States, \stateinit, \Actions, \TransitionsMDP)$ be an \IMDP{}.
	Let $\mdp = (\States', \stateinit', \Actions, \TransitionsMDP')$ be an \MDP{}.
	We say that $\mdp$ is an implementation of~$\imdp{}$, written $\mdp \models \imdp$, if 
	$\exists \RelSimMDP \subseteq \States' \times \States$ \st{}
	$(\state_0', \state_0) \in \RelSimMDP$ and
	$(\state', \state) \in \RelSimMDP$ if
	\begin{itemize}
		\item $\forall (\state', \action , \DistMDP) \in \TransitionsMDP', \exists (\state, \action, \DistIMDP) \in \TransitionsMDP$ \st{} $\DistMDP \preceq_{\RelSimMDP} \DistIMDP$, and
		\item $\forall (\state, \action , \DistIMDP) \in \TransitionsMDP, \exists (\state', \action, \DistMDP) \in \TransitionsMDP'$ \st{} $\DistMDP \preceq_{\RelSimMDP} \DistIMDP$,
	\end{itemize}

	where $\DistMDP \preceq_{\RelSimMDP} \DistIMDP$ iff
	$\exists \correspondance \in \dist(\States' \times \States)$ \st{}
	\begin{itemize}
		\item $\forall \state' \in \States', \DistMDP(\state') > 0 \Rightarrow \sum_{\state \in \States}(\correspondance(\state', \state))=1$,
		\item $\forall \state \in \States, \sum_{\state' \in \States'}( \DistMDP(\state') \cdot \correspondance(\state', \state)) \in \DistIMDP(\state)$, and
		\item $\correspondance(\state', \state) > 0 \Rightarrow (\state', \state) \in \RelSimMDP$.
	\end{itemize}

\end{definition}

As for \IPROBTAs{}, we say that an \IMDP{} is \emph{consistent} iff it admits at least one implementation.

\begin{example}\label{example5}
	The \IMDP{} given in \cref{figure:example:IMDP} admits no implementation:
		indeed, on the (single) transition labeled with~$\pedge_2$, no valuation of the two intervals $[0,0.3]$ and $[0,0.2]$ is such that the sum of both valuations is equal to~1.
		Nevertheless, it could be that the \IMDP{} is still consistent if one assigns a 0-probability on the transition from $\state_0$ to~$\state_1$.
		However, %
		although this would be compatible with the interval ($0 \in [0,1]$), the second interval (to~$\state_2$) does not accept a 1-probability since its probability must be within $[0, 0.5]$.
\end{example}

As said above \IMDPs{} satisfy the same structural property as IMCs
concerning implementations: they are consistent iff they admit at
least one implementation that respects their structure. This result is
formalized in the following lemma.

\begin{lemma}[structure of an implementation]\label{lemma:structure}
	An IMDP~$\imdp$ is consistent iff there exists an MDP~$\mdp$ with the same structure \st{} $\mdp \models \imdp$.
\end{lemma}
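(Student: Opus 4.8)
The plan is to prove the two implications separately. The $\Leftarrow$ direction is immediate from the definition of consistency: any same-structure \MDP{} implementing $\imdp$ is in particular an implementation, so $\imdp$ is consistent. All the substance is in the $\Rightarrow$ direction, for which I would show that an \emph{arbitrary} implementation can always be \emph{folded back} onto the structure of $\imdp$ itself, turning it into a same-structure one. So suppose $\imdp = (\States, \stateinit, \Actions, \TransitionsMDP)$ is consistent, and fix any implementation $\mdp = (\States', \stateinit', \Actions, \TransitionsMDP')$ with $\mdp \models \imdp$ via a relation $\RelSimMDP \subseteq \States' \times \States$. The idea is to use $\RelSimMDP$ together with the couplings it provides to build a new \MDP{} $\mdp'$ whose states are exactly the states of $\imdp$ that are implemented by some state of $\mdp$.

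Concretely, I would set $C = \{ s \in \States \mid \exists s' \in \States',\ (s', s) \in \RelSimMDP \}$. Since $(\stateinit', \stateinit) \in \RelSimMDP$ we have $\stateinit \in C$, so $C$ is a legitimate state space with initial state $\stateinit$, and for each $s \in C$ I fix a witness $s'_s \in \States'$ with $(s'_s, s) \in \RelSimMDP$. The transitions of $\mdp'$ are obtained by \emph{projecting} the witnesses' distributions through their couplings: for each \IMDP{} transition $(s, \action, \DistIMDP) \in \TransitionsMDP$ with $s \in C$, the second clause of \cref{def:imdp-impl} applied to the pair $(s'_s, s)$ yields a matching transition $(s'_s, \action, \DistMDP) \in \TransitionsMDP'$ with $\DistMDP \preceq_{\RelSimMDP} \DistIMDP$, witnessed by some coupling $\correspondance$. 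I then define a point distribution $\DistMDP'$ on $\States$ by $\DistMDP'(t) = \sum_{s' \in \States'} \DistMDP(s') \cdot \correspondance(s', t)$ and add $(s, \action, \DistMDP')$ to $\mdp'$.

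The verification reduces to reading off the three coupling conditions. The second condition gives directly $\DistMDP'(t) \in \DistIMDP(t)$ for every $t \in \States$, so each $\DistMDP'$ respects the intervals of $\imdp$; summing over $t$ and using the first condition (the rows of $\correspondance$ over the support of $\DistMDP$ sum to $1$) yields $\sum_t \DistMDP'(t) = \sum_{s'} \DistMDP(s') = 1$, so $\DistMDP'$ is a genuine distribution; and the third condition ensures that $\DistMDP'(t) > 0$ forces some $\correspondance(s', t) > 0$, hence $(s', t) \in \RelSimMDP$ and $t \in C$, so $\mdp'$ never leaves $C$. Thus $\mdp'$ is a well-defined \MDP{} whose underlying graph is a subgraph of that of $\imdp$ (same actions, distributions within the declared intervals, states in $C \subseteq \States$); that is, it has the same structure. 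Finally $\mdp' \models \imdp$ via the identity relation $\{(s, s) \mid s \in C\}$: each transition of $\mdp'$ matches the \IMDP{} transition it was built from, using the diagonal coupling $\correspondance'(t, u) = 1$ iff $u = t$ (valid because $\DistMDP' \in \DistIMDP$ pointwise, and because for $u \notin C$ we have $\DistMDP'(u) = 0 \in \DistIMDP(u)$ so the empty contribution is in-interval), and conversely every \IMDP{} transition out of $s \in C$ produced exactly one transition of $\mdp'$.

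I expect the main obstacle to be identifying the right construction rather than the calculations. An arbitrary implementation may split or unfold states (as \cref{example:two} illustrates for \IPROBTAs{}), so it is not obvious a priori that unfolding can be avoided. The key realization is that the coupling attached to the simulation relation supplies, for each implemented \IMDP{} state, a canonical projection of the implementation's behaviour back onto $\imdp$'s own successors, and that the three coupling conditions are \emph{exactly} what certifies that this projection is in-interval, normalized, and support-closed within $C$. Once this projection is in hand, the rest is routine bookkeeping.
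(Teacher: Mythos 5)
Your proposal is correct and follows essentially the same route as the paper's own proof: the set $C$ of implemented states together with your fixed witnesses $s'_s$ is exactly the paper's witness function $f$, and your folded distribution $\DistMDP'(t) = \sum_{s'} \DistMDP(s') \cdot \correspondance(s', t)$ is precisely the paper's $\DistMDP^{\DistIMDP}$, with the same concluding identity-relation argument. If anything, you are slightly more explicit than the paper in checking that $\DistMDP'$ is normalized and in spelling out the diagonal coupling for the final verification.
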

\begin{proof}

Let $\imdp = (\States, \stateinit, \Actions, \TransitionsMDP)$ be an \IMDP{}.

One direction of this result is trivial: if there exists an \MDP{} $\mdp$
with the same structure as $\imdp$ \st{} $\mdp \models \imdp$, then
$\imdp$ is clearly consistent.

The reverse implication is more involved.
Assume that $\imdp$ is consistent, \ie{} there exists an \MDP{} $\mdp = (\States', \stateinit', \Actions, \TransitionsMDP')$, with no assumption
on its structure, such that $\mdp \models \imdp$.
We then have to build an \MDP{}
$\mdp^{*} = (\States, \stateinit, \Actions, \TransitionsMDP^{*})$
such that $\mdp^{*} \models \imdp$.
Observe that $\States$ and $\stateinit$ must be identical to that of~$\imdp$ because they have the same structure.

Let $\RelSimMDP$ be the relation witnessing that
$\mdp \models \imdp$ and let $f : \States \rightarrow \States' \cup \{\bot\}$ be
a function that associates to all states in $\imdp$ one of the states
from $\mdp$ that contributes to its implementation, if there is any, and $\bot$ otherwise. Formally,
for all $\state \in \States$, if $f(\state) \ne \bot$ then
$(f(\state),\state) \in \RelSimMDP$, and whenever there exists
$\state' \in \States'$ such that $(\state', \state) \in \RelSimMDP$,
we have $f(\state) \ne \bot$.

The transition relation $\TransitionsMDP^{*}$ of $\mdp^{*}$ is
constructed as follows: For each state $\state$ that is implemented,
\ie{} such that $f(\state) \ne \bot$, and probabilistic interval
transition $(\state,\action,\DistIMDP) \in \TransitionsMDP$ in
$\imdp$, we build a corresponding transition
$(\state,\action,\DistMDP^{\DistIMDP})$ in $\mdp^{*}$ from the
transitions in $\mdp$ that implement $(\state,\action,\DistIMDP)$. In
other words, we pick one of the states that satisfy $\state$ (using
function $f$) and mimic its outgoing transitions in $\mdp^{*}$. All
the other states that satisfy $\state$ are simply removed.  States
that are not implemented do not serve for consistency and are
therefore not considered.

Formally, let $(\state_1,\action,\DistIMDP) \in \TransitionsMDP$ be a probabilistic
interval transition in $\imdp$. From
\cref{def:imdp-impl}, we know that there exists
$(f(\state_1), \action, \DistMDP) \in \TransitionsMDP'$ \st{}
$\DistMDP \preceq_{\RelSimMDP} \DistIMDP$.
According to the definition, there exists at least one function $\correspondance$ that witnesses $\DistMDP \preceq_{\RelSimMDP} \DistIMDP$. In the following we pick one such function and name it $\correspondance_{(\DistMDP,\DistIMDP)}$.
The distribution $\DistMDP^{\DistIMDP}$ is then constructed as follows:
for all $\state_2 \in \States$, let $\DistMDP^{\DistIMDP}(\state_2)
= \sum_{\state' \in \States'} \DistMDP(\state') \cdot \correspondance_{(\DistMDP,\DistIMDP)}(\state',\state_2)$.

By definition of $\correspondance_{(\DistMDP,\DistIMDP)}$, observe that
$\DistMDP^{\DistIMDP}(\state_2) \in \DistIMDP(\state_2)$ for all
$\state_2 \in \States$ and that, whenever
$\DistMDP^{\DistIMDP}(\state_2) >0$, $f(\state_2)\ne \bot$.

Clearly, $\mdp^{*}$ is therefore an implementation of $\imdp$, with
witnessing relation $\RelSimMDP^{*}$ defined as the identity relation on the set of
states $\state \in \States$ such that $f(\state)\ne \bot$.
\end{proof}

\subsection{A symbolic semantics for \IPROBTAs{}}

We equip \IPROBTAs{} with a symbolic semantics, defined below.
Basically, it is inline with the symbolic semantics defined for timed automata in the form of a zone graph, %
with the addition of probabilistic intervals on the edges; as a consequence, the semantics becomes not an LTS, but an \IMDP{}.

\begin{definition}[Symbolic semantics of an \IPROBTA{}]\label{definition:IPTA:symbolic-semantics}
	Given an \IPROBTA{} $\iprobta = (\Actions, \Loc, \locinit, \Clock, \TransitionsIPTA)$,
	the symbolic semantics of $\iprobta$ is given by the \IMDP{} $(\SymbStates, \symbstateinit, \TransitionsIPTA, \TransitionsMDP)$, with
	\begin{itemize}
		\item $\SymbStates = \{ (\loc, \C) \in \Loc \times \PZones \}$, %
			$\symbstateinit = (\locinit, \timelapse{(\bigwedge_{1 \leq i\leq\ClockCard}\clock_i=0)} )$, where $\loc$ is the location and $\C$ the associated zone, %
		\item $((\loc, \C), \pedge, \DistIMDP) \in \TransitionsMDP$ if $\pedge = (\loc,\guard,\action,\DistIPTA) \in \TransitionsIPTA$ and for all $\loc' \in \Loc$, for all $\resets \subseteq \Clock$ such that $\DistIPTA(\resets, \loc') > 0$, 
		$\C' = \timelapse{\big(\reset{\C \land \guard}{\resets}\big )} $, %
		and $\DistIMDP((\loc', \C')) = \DistIPTA(\resets, \loc')$.
	\end{itemize}
\end{definition}

Given a symbolic state $\symbstate = (\loc, \C)$, we denote by $\symbstate.\loc$ and $\symbstate.\C$ its location and its associated zone (symbolic constraint), respectively.

Observe that, whenever an \IPROBTA{} has no probabilistic choice, then the \IMDP{} becomes a labeled transition system, and the symbolic semantics matches that of timed automata given in the form of a zone graph (see \eg{}~\cite{BY03}).
It is well-known that the zone graph of a timed automaton can have an infinite number of states; however, applying the classical $k$-extrapolation (that basically splits zones between a part where the clock constraints are smaller or equal to~$k$ and a part where constraints are larger than~$k$, where $k$ is the largest integer-constant in the timed automaton) yields finiteness (see, \eg{} \cite{BBLP06}).
In the following, we apply the classical $k$-extrapolation to the symbolic constraints of the semantics of an \IPROBTA{}~$\iprobta$, and therefore the number of states in the \IMDP{} described in \cref{definition:IPTA:symbolic-semantics} is finite.
We refer to the symbolic semantics of~$\iprobta$ as the \emph{probabilistic zone graph} of~$\iprobta$.

Remark that the probabilistic zone graph is defined for \IPROBTAs{} in the form of an \IMDP{}; a \PROBTA{} can be understood as an \IPROBTA{}, and its associated zone graph becomes an \MDP{}.

\begin{table}
\centering
\footnotesize

	\begin{tabular}{| c | c | c |}
		\hline
		\cellHeader{State} & \cellHeader{Location} & \cellHeader{$\C$} \\
		\hline
		$\symbstate_0$ & $\loc_0$ & $x = y \land x \geq 0$ \\
		\hline
		$\symbstate_1$ & $\loc_1$ & $0 \leq x - y < 2 \land y \geq 0 $ \\
		\hline
		$\symbstate_2$ & $\loc_2$ & $0 \leq y - x < 2 \land x \geq 0 $ \\
		\hline
		$\symbstate_5$ & $\loc_5$ & $0 \leq y - x \leq 1 \land x \geq 1 $ \\
		\hline
		$\symbstate_6$ & $\loc_2$ & $1 \leq y - x \leq 2 \land x \geq 0 $ \\
		\hline
		$\symbstate_6'$ & $\loc_2'$ & $1 \leq y - x \leq 2 \land x \geq 0 $ \\
		\hline
		$\symbstate_7$ & $\loc_5$ & $y \geq 2 \land y = x + 1 $ \\
		\hline
		$\symbstate_8$ & $\loc_2$ & $y \geq 2 \land y = x + 2 $ \\
		\hline
		$\symbstate_8'$ & $\loc_2'$ & $y \geq 2 \land y = x + 2 $ \\
		\hline
	\end{tabular}
	
	\caption{Description of the states in \cref{figure:example:MDP}}
	\label{table:zones:description}
\end{table}
\begin{example}
	The probabilistic zone graph of the \PROBTA{} in \cref{figure:example:ProbTA} is the \MDP{} given in \cref{figure:example:MDP}.
	The symbolic states $\symbstate_i = (\loc_i, \C_i)$ are expanded in \cref{table:zones:description}.
\end{example}
\subsection[Reconstructing an \IPROBTA{} from a Probabilistic Zone Graph]{Reconstructing an \IPROBTA{} from a Probabilistic Zone Graph}\label{ss:reconstruction}

It is well-known that, given a timed automata~$\A$ and its zone graph, a second timed automaton~$\A'$ can be reconstructed from the zone graph, with the same structure as the zone graph, and such that the zone graph of~$\A'$ is the same as that of~$\A$.\ea{note pour moi pour un jour : repréciser que ce n'est vrai que lorsque le zone graph vient bien d'un TA, sinon on peut avoir des zones arbitraires qui ne donnent aucun TA…}
We extend this technique here to \IPROBTAs{}.

\paragraph{The construction}
Let $\iprobta{} = (\Actions, \Loc, \locinit, \Clock, \TransitionsIPTA)$ be an \IPROBTA{};
let $\imdp{} = (\SymbStates, \symbstateinit, \TransitionsIPTA, \TransitionsMDP)$ be its probabilistic zone graph.
Let us build a second \IPROBTA{}~$\iprobta' = (\Actions, \Loc', \locinit',\Clock, \TransitionsIPTA')$ as follows.

First, each state of~$\imdp{}$ is translated into a location of~$\iprobta'$, \ie{} we have $\Loc' = \SymbStates$.

Second, the initial location of~$\iprobta'$ is the initial state of~$\imdp$, \ie{} we have $\locinit' = \symbstateinit$.

Third, for each transition $(\symbstate, \pedge, \DistIMDP) \in \TransitionsMDP$ in~$\imdp$, with $\pedge = (\loc,\guard,\action, \DistIPTA)$,
we create in~$\iprobta'$ a transition $(\symbstate,\guard,\action,\DistIPTA')$, where $\DistIPTA'$ is defined as follows:
for each $\symbstate'$ such that $\DistIMDP(\symbstate') > 0$, then $\DistIPTA'(\resets', \symbstate') = \DistIMDP(\symbstate')$, where $\resets'$ is the set of clocks to be reset from $\symbstate.\loc$ to $\symbstate'.\loc$ via edge~$\pedge$ in~$\iprobta$.%

Given an \IPROBTA{} $\iprobta{}$ with probabilistic zone graph $\imdp{}$.
We denote by $\reconstruct(\imdp)$ the \IPROBTA{}~$\iprobta'$ reconstructed from~$\imdp$ following the above technique.
Obviously, this construction also applies to \PROBTA{}, which are just \IPROBTA{} where intervals are reduced to single points.

\paragraph{An equivalence result}
As should be expected, the probabilistic zone graph~$\imdp'$ of the \IPROBTA{}~$\iprobta'$ reconstructed from the probabilistic zone graph~$\imdp$ of a \IPROBTA{}~\iprobta{} is equivalent to~$\imdp$.

\begin{proposition}\label{proposition:reconstruction}
	Let~$\iprobta$ be an \IPROBTA{} and $\imdp$ be its probabilistic zone graph.
	Let $\iprobta' = \reconstruct(\imdp)$.
	Let $\imdp'$ be the probabilistic zone graph of~$\iprobta'$.
	
	Then $\imdp'$ is equivalent to $\imdp$ up to location renaming.\ea{explain?}
\end{proposition}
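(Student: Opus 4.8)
The plan is to exhibit an explicit location renaming $\phi$ and show it is an isomorphism of \IMDPs{} between (the reachable parts of) $\imdp$ and $\imdp'$. The crucial observation is that in the reconstruction the locations of $\iprobta'$ \emph{are} the symbolic states of $\imdp$, each of which already carries a zone in its name: for $\symbstate = (\symbstate.\loc, \symbstate.\C)$. I would therefore prove the following invariant by induction on reachability in the zone graph of $\iprobta'$: every reachable state of $\imdp'$ has the form $(\symbstate, \symbstate.\C)$, where $\symbstate \in \SymbStates$ is a reachable state of $\imdp$ and the zone component is \emph{exactly} the zone $\symbstate.\C$ embedded in the name of $\symbstate$. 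Granting this, the renaming $\phi\colon (\symbstate, \symbstate.\C) \mapsto \symbstate$ is well defined (the zone component is determined by $\symbstate$), injective, and — by the matching of transitions below — surjective onto the reachable states of $\imdp$.

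The base case is immediate: the initial state of $\imdp'$ is $(\locinit', \timelapse{(\bigwedge_{1\le i\le\ClockCard}\clock_i = 0)})$, and since $\locinit' = \symbstateinit$ with $\symbstateinit.\C = \timelapse{(\bigwedge_i \clock_i = 0)}$, its zone component coincides with the embedded zone. For the inductive step, consider a reachable state $(\symbstate, \symbstate.\C)$ of $\imdp'$ and a transition in $\iprobta'$ of the form $(\symbstate, \guard, \action, \DistIPTA')$; by construction this transition originates from a transition $(\symbstate, \pedge, \DistIMDP) \in \TransitionsMDP$ of $\imdp$ with $\pedge = (\symbstate.\loc, \guard, \action, \DistIPTA)$, and crucially the guard $\guard$ and action $\action$ are copied unchanged, while for each target $\symbstate'$ the reset $\resets'$ used in $\iprobta'$ equals the reset $\resets$ used to define the $\imdp$-transition from $\symbstate.\loc$ to $\symbstate'.\loc$ via $\pedge$. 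Applying the symbolic step of \cref{definition:IPTA:symbolic-semantics} in $\iprobta'$ to the state whose zone is $\symbstate.\C$ (by the induction hypothesis) yields, for target $\symbstate'$, the zone $\timelapse{\big(\reset{\symbstate.\C \land \guard}{\resets'}\big)} = \timelapse{\big(\reset{\symbstate.\C \land \guard}{\resets}\big)}$, which is precisely $\symbstate'.\C$ by the definition of the transition of $\imdp$. Hence the successor in $\imdp'$ is $(\symbstate', \symbstate'.\C)$, preserving the invariant.

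It then remains to transfer the interval distributions. For the transition above, the zone-graph step in $\iprobta'$ produces an interval distribution $\DistIMDP'$ with $\DistIMDP'\big((\symbstate', \symbstate'.\C)\big) = \DistIPTA'(\resets', \symbstate')$, and by the definition of $\reconstruct$ we have $\DistIPTA'(\resets', \symbstate') = \DistIMDP(\symbstate')$. Thus under $\phi$ the interval assigned by $\DistIMDP'$ to $(\symbstate', \symbstate'.\C)$ equals the interval assigned by $\DistIMDP$ to $\symbstate'$, and the supports coincide, so the interval transition relations agree in both directions. Since $\phi$ also maps $\symbstateinit$-labelled initial state to $\symbstateinit$ and preserves actions, $\phi$ is the required isomorphism up to renaming.

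\textbf{Main obstacle.} The heart of the argument is the inductive step showing that the zone-successor computation (time elapse, reset, intersection with the guard) carried out afresh inside $\iprobta'$ reproduces \emph{exactly} the zone $\symbstate'.\C$ that was stored as a name; this hinges on the reconstruction preserving both the guards and the reset sets verbatim. A secondary technical point to check is consistency of the $k$-extrapolation used for finiteness: since $\iprobta'$ reuses the same guards (hence the same largest constant) as $\iprobta$, extrapolation is applied with the same $k$ on both sides, so the extrapolated zones match and no spurious discrepancy is introduced.
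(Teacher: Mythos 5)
You should first know that the paper itself contains \emph{no} proof of \cref{proposition:reconstruction}: the statement is followed only by an illustrative example (and an internal author note asking for an explanation), and the phrase ``equivalent up to location renaming'' is never formally defined. So there is no paper proof to compare against; your argument supplies the missing one, and it is essentially the right one. The renaming $\phi\colon(\symbstate,\symbstate.\C)\mapsto\symbstate$, the induction on reachability with the invariant that the zone component of every reachable state of $\imdp'$ equals the zone stored in its location name, and the transfer of interval distributions via $\DistIPTA'(\resets',\symbstate')=\DistIMDP(\symbstate')$ are exactly the ingredients needed; the bijection between the outgoing edges of location $\symbstate$ in $\iprobta'$ and the outgoing transitions of $\symbstate$ in $\imdp$ then yields surjectivity onto the reachable part of $\imdp$. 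One cosmetic remark: since the ``action'' alphabet of a probabilistic zone graph is the set of probabilistic edges itself, $\phi$ also induces a relabeling of edges, and this relabeling is not injective (two transitions of $\imdp$ carrying the same edge $\pedge$ become distinct edges of $\iprobta'$), so strictly speaking the equivalence is up to renaming of locations \emph{and} edges; given that the paper leaves the equivalence undefined, your reading is the sensible one.

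The only soft spot is the extrapolation issue, which you rightly flag but dismiss a little too quickly. Your claim that $\iprobta'$ ``reuses the same guards (hence the same largest constant)'' as $\iprobta$ is not literally true: $\reconstruct$ only copies guards of edges that occur on transitions of the (reachable) zone graph $\imdp$, so an edge of $\iprobta$ whose source location appears in no reachable symbolic state is dropped, and the maximal constant of $\iprobta'$ can be strictly smaller than that of $\iprobta$. If each automaton is extrapolated with its own maximal constant, your invariant can then fail: for instance, a stored zone $y-x\geq 2$, recomputed inside $\iprobta'$ whose maximal constant is $1$, extrapolates to $y-x>1$, which is not the stored zone, and from there the two graphs can diverge. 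The repair is one line---stipulate that $\imdp'$ is computed with the same extrapolation constant $k$ as $\imdp$ (any common upper bound on the two maximal constants is sound)---and with that convention your induction goes through verbatim. I would state this convention explicitly rather than derive it from the guards.
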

\begin{figure}
	
	{\centering
		
	\newcommand{\hscale}{2}
	\newcommand{\vscale}{1.5}
	\begin{tikzpicture}[node distance=2cm, auto, ->, >=stealth']
		\node[location0, initial] (l0) at (0,0) {$\symbstate_0$};
		\node [probchoice] (l0choice) at (1*\hscale, 0*\vscale) {};

		\node[location1] (l1) at (2*\hscale, -1*\vscale) {$\loc_1$};
		\node[location2] (l2) at (2*\hscale, +1*\vscale) {$\loc_2$};

		\node[location5] (l5) at (4*\hscale, 0*\vscale) {$\loc_5$};
		\node[location2] (l2b) at (4*\hscale, 1*\vscale) {${\loc_2}_2$};
		\node[location2] (l6) at (4*\hscale, 2*\vscale) {$\loc_2'$};
		
		\node [probchoice] (l2choice) at (3*\hscale, +1*\vscale) {};

		\node[location5] (l5b) at (6*\hscale, 0*\vscale) {${\loc_5}_2$};
		\node[location2] (l6b) at (6*\hscale, 2*\vscale) {${\loc_2'}_2$};
		\node[location2] (l2c) at (6*\hscale, 1*\vscale) {${\loc_2}_3$};
		
		\node [probchoice] (l2bchoice) at (5*\hscale, +1*\vscale) {};
		\node [probchoice] (l6choice) at (5*\hscale, +2*\vscale) {};

                \coordinate[shift={(3mm,6mm)}] (n) at (l6b.north east);
                \coordinate[shift={(10mm,0mm)}] (m) at (l2c.east);

		\path
			(l0) edge node[below]{\begin{tabular}{c}\footnotesize $y < 2$ \\ $a$\end{tabular}} (l0choice) %
		
			(l0choice) edge[probedge] node[below, xshift=-5]{\footnotesize \begin{tabular}{c}$0.6$ \\ $y := 0$\end{tabular}} (l1)
			(l0choice) edge[probedge] node{\footnotesize \begin{tabular}{c}$0.6$ \\ $y := 0$\end{tabular}} (l2)
			
			(l2) edge node[below]{\footnotesize \begin{tabular}{c}$x = 1 \land y \leq 2$ \\ $c$ \end{tabular}} (l2choice) %

			(l2choice) edge[probedge] node[above]{\footnotesize$0.1$} node[below]{\footnotesize$x := 0$} (l2b)
			(l2choice) edge[probedge] node[below, xshift=-3, yshift=1]{\footnotesize$0.8$} (l5)
			(l2choice) edge[probedge] node[above, xshift=3, yshift=1]{\footnotesize \begin{tabular}{c}$0.1$ \\ $x := 0$\end{tabular}} (l6)
			
			(l2b) edge node[below]{\footnotesize \begin{tabular}{c}$x = 1 \land y \leq 2$ \\ $c$ \end{tabular}} (l2bchoice) %

			(l2bchoice) edge[probedge] node[above]{\footnotesize$0.1$} node[below]{\footnotesize$x := 0$} (l2c)
			(l2bchoice) edge[probedge] node[below, xshift=-3]{\footnotesize$0.8$} (l5b)
			(l2bchoice) edge[probedge] node[xshift=10, yshift=-10]{\footnotesize \begin{tabular}{c}$0.1$ \\ $x := 0$\end{tabular}} (l6b)

			(l6) edge node[below]{\footnotesize \begin{tabular}{c}$x = 1 \land y \leq 2$ \\ $c$ \end{tabular}} (l6choice) %

			(l6choice) edge[probedge] node[above, yshift=-4, xshift=4]{\footnotesize \begin{tabular}{c}$0.1$ \\ $x := 0$\end{tabular}} (l6b)

		;

		\draw[->, densely dotted, rounded corners=5pt] (l6choice) to[out = 70] node[above]{\footnotesize$0.9$} (n) to[bend left =20]  (l5b);
		
		\AngleA{l0choice}{l1}{l2};
		
		\AngleA{l2choice}{l6}{l5};
		\AngleA{l2bchoice}{l6b}{l5b};

		\node[draw=none] (l6fake) at (5.5*\hscale, 3.5*\vscale){};
		\AngleA{l6choice}{l6fake}{l6b};
	\end{tikzpicture}
	
	}

	\caption{A \PROBTA{} reconstructed from the probabilistic zone graph in \cref{figure:example:MDP}}
	\label{figure:example:reconstruction}
\end{figure}
\begin{figure}
	
	{\centering
		
	\newcommand{\hscale}{2}
	\newcommand{\vscale}{1.5}
	\begin{tikzpicture}[node distance=2cm, auto, ->, >=stealth']
		\node[location0, initial] (l0) at (0,0) {$\loc_0$};
		\node [probchoice] (l0choice) at (1*\hscale, 0*\vscale) {};

		\node[location1] (l1) at (2*\hscale, -1*\vscale) {$\loc_1$};
		\node[location2] (l2) at (2*\hscale, +1*\vscale) {$\loc_2$};

		\node [probchoice] (l1choice) at (3*\hscale, -1*\vscale) {};

		\node[location5] (l5) at (4*\hscale, 2*\vscale) {$\loc_5$};
		\node[location2] (l2b) at (4*\hscale, 1*\vscale) {${\loc_2}_2$};
		
		\node [probchoice] (l2choice) at (3*\hscale, +1*\vscale) {};

		\node[location5] (l5b) at (6*\hscale, 2*\vscale) {${\loc_5}_2$};
		\node[location2] (l2c) at (6*\hscale, 1*\vscale) {${\loc_2}_3$};
		
		\node [probchoice] (l2bchoice) at (5*\hscale, +1*\vscale) {};

		\node[location3] (l3) at (5*\hscale, -2*\vscale) {$\loc_3$};
		\node[location4] (l4) at (5*\hscale, 0*\vscale) {$\loc_4$};
		
		\path
			(l0) edge node[above]{\footnotesize $y < 2$} node[below]{\footnotesize $a$} (l0choice) %
		
			(l0choice) edge[probedge] node[below left]{\footnotesize\begin{tabular}{c}$[0,1]$ \\ $y := 0$\end{tabular}} (l1)
			(l0choice) edge[probedge] node{\footnotesize\begin{tabular}{c}$[0,0.5]$ \\ $x := 0$\end{tabular}} (l2)
			
			(l1) edge node[below]{\footnotesize\begin{tabular}{c}$2 \leq x \leq \param$ \\ $b$ \end{tabular}} (l1choice) %

			(l1choice) edge[probedge] node[below left]{\footnotesize\begin{tabular}{c}$[0,0.2]$ \\ $y := 0$\end{tabular}} (l3)
			(l1choice) edge[probedge] node{\footnotesize\begin{tabular}{c}$[0,0.3]$ \\ $x,y := 0$\end{tabular}} (l4)

			(l2) edge node[below]{\footnotesize\begin{tabular}{c}$x = 1 \land y \leq 2$ \\ $c$\end{tabular}} (l2choice) %

			(l2choice) edge[probedge] node[below]{\footnotesize\begin{tabular}{c}$[0,0.2]$ \\ $x := 0$\end{tabular}} (l2b)
			(l2choice) edge[probedge] node{\footnotesize$[0.8,1]$} (l5)
			
			(l2b) edge node[below]{\footnotesize\begin{tabular}{c}$x = 1 \land y \leq 2$ \\ $c$\end{tabular}} (l2bchoice) %

			(l3) edge[bend right] node[right]{\footnotesize\begin{tabular}{c}$x = 5$\\ $d$ \\ $x,y := 0$\end{tabular}} (l4) %
			(l4) edge[bend right] node[left]{\footnotesize\begin{tabular}{c}$2 \leq x \leq \param$\\ $e$ \\ $x := 0$\end{tabular}} (l3) %

			(l2bchoice) edge[probedge] node[below]{\footnotesize\begin{tabular}{c}$[0,0.2]$ \\ $x := 0$\end{tabular}} (l2c)
			(l2bchoice) edge[probedge] node{\footnotesize$[0.8,1]$} (l5b)
			
		;
		
		\AngleA{l0choice}{l1}{l2};
		
		\AngleA{l1choice}{l4}{l3};
		
		\AngleA{l2choice}{l2b}{l5};
		\AngleA{l2bchoice}{l2c}{l5b};

        \end{tikzpicture}
	
	}

	\caption{An \IPROBTA{} reconstructed from the probabilistic zone graph in \cref{figure:example:IMDP}}
	\label{figure:example:reconstructionI}
\end{figure}
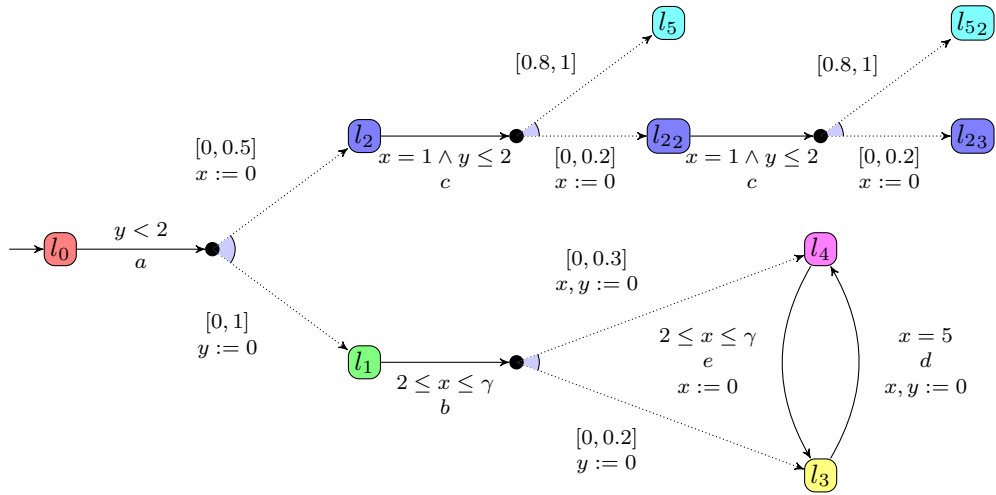

\begin{example}
  We apply the above procedure to the probabilistic zone graphs from
  \cref{figure:example:MDP} and \cref{figure:example:IMDP}. The
  \PROBTA{} and \IPROBTA{} reconstructed from these zone graphs are
  given in \cref{figure:example:reconstruction} and
  \cref{figure:example:reconstructionI}, respectively. Remark that
  their probabilistic zone graphs are again that of
  \cref{figure:example:MDP} and \cref{figure:example:IMDP}.
\end{example}
\subsection{An algorithm for the consistency of \IPROBTAs{}}\label{ss:algo:IPTA}

We start with the following observation: by construction, the purpose
of the symbolic semantics of \IPROBTAs{} is to represent, at a lower
level of abstraction, the same set of objects. Intuitively, the
symbolic \IMDP{} semantics of a given \IPROBTA{} should therefore be consistent
iff the original \IPROBTA{} is itself consistent.
This result is formally proven in \cref{theorem:IPTA:consistency}.

\begin{proposition}\label{theorem:IPTA:consistency}
	An \IPROBTA{}~\iprobta{} is consistent iff its probabilistic zone graph is consistent.
\end{proposition}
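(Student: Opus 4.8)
The plan is to establish the two implications separately, in each case by transferring an implementation relation between the \emph{automaton level} (a relation over locations) and the \emph{symbolic level} (a relation over symbolic states, i.e.\ location--zone pairs). The bridge in both directions is the observation that, since an implementing \PROBTA{} shares the exact same clocks, guards and resets as the \IPROBTA{} it implements (\cref{definition:IPROBTA:implementation}), the zone transformation $\C \mapsto \timelapse{(\reset{\C \land \guard}{\resets})}$ of \cref{definition:IPTA:symbolic-semantics} agrees along corresponding edges; hence related locations give rise to symbolic states carrying identical zones, and the largest constant (thus the $k$-extrapolation) coincides on both sides. Once the zones are shown to match, the couplings witnessing $\preceq$ can be carried over essentially verbatim.

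For the forward direction, assume $\iprobta$ is consistent, so there is a \PROBTA{}~$\probta$ with $\probta \models \iprobta$ via an implementation relation~$\RelSimPTA$ over locations; let $\mdp_\probta$ and $\imdp$ be the probabilistic zone graphs of $\probta$ and $\iprobta$. I would propose the relation $\RelSimMDP$ over symbolic states defined by $((\loc,\C),(\loc',\C'))\in\RelSimMDP$ iff $(\loc,\loc')\in\RelSimPTA$ and $\C=\C'$, and show by induction on reachability that it relates the two initial states and is closed under transitions (this induction is where zone agreement is used, and it is the reason I keep the constraint $\C=\C'$ in the definition). For a related pair $((\loc,\C),(\loc',\C))$, each transition of $\mdp_\probta$ arises from an edge $(\loc,\guard,\action,\DistPTA)$ of $\probta$; the matching edge $(\loc',\guard,\action,\DistIPTA)$ of $\iprobta$ provided by~$\RelSimPTA$ has the same guard and resets and so induces the transition of $\imdp$ landing in the same target zones. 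Lifting the location-level coupling $\correspondance$ witnessing $\DistPTA \preceq_{\RelSimPTA} \DistIPTA$ to the (equal-zoned) target symbolic states yields a coupling witnessing $\DistMDP \preceq_{\RelSimMDP} \DistIMDP$, and the symmetric clause of \cref{def:imdp-impl} is handled identically. Hence $\mdp_\probta \models \imdp$, so $\imdp$ is consistent.

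For the backward direction, assume $\imdp$ is consistent. By \cref{lemma:structure} there is an \MDP{}~$\mdp^{*}$ with the same structure as $\imdp$ and $\mdp^{*} \models \imdp$; concretely $\mdp^{*}$ keeps the states $(\loc,\C)$ and every transition of $\imdp$, merely fixing a value inside each interval, with the identity as witnessing relation. I would then reconstruct the \PROBTA{}~$\probta = \reconstruct(\mdp^{*})$ of \cref{ss:reconstruction} (the construction applies to \MDPs{}), so that the locations of~$\probta$ are the symbolic states of~$\mdp^{*}$ and each edge carries the guard and resets of the corresponding edge of~$\iprobta$ together with the probabilities fixed by~$\mdp^{*}$. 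Defining $\RelSimPTA$ by relating every location $\symbstate=(\loc,\C)$ of~$\probta$ to the location~$\loc$ of~$\iprobta$, and using the trivial coupling sending each target symbolic state $(\loc',\C')$ to its location component~$\loc'$ with probability~$1$, I would verify both clauses of \cref{definition:IPROBTA:implementation}: matching edges exist in both directions because $\mdp^{*}$ carries exactly the transitions of~$\imdp$ (one per edge of~$\iprobta$), and for fixed $(\resets',\loc')$ the membership condition $\sum_{\symbstate'} \DistPTA(\resets',\symbstate')\cdot\correspondance(\symbstate',\loc') \in \DistIPTA(\resets',\loc')$ collapses to the single term $\DistMDP((\loc',\C')) \in \DistIMDP((\loc',\C')) = \DistIPTA(\resets',\loc')$, which holds by $\mdp^{*}\models\imdp$ and the zone-graph definition. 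Thus $\probta \models \iprobta$, so $\iprobta$ is consistent.

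The main obstacle I expect is the symbolic-level bookkeeping rather than any deep idea: one must prove that corresponding symbolic states always carry equal zones (the reachability induction relying on identical guards, resets and $k$-extrapolation), handle the fact that several reset sets may produce the same target symbolic state, and accommodate that an implementing \PROBTA{} may \emph{unfold} a specification location into several locations, as in \cref{example:two}. The latter is precisely why $\RelSimMDP$ must be permitted to be many-to-one, and why the zone-equality constraint in its definition is needed to keep the lifted couplings well defined.
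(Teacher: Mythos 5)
Your proposal is correct and follows essentially the same route as the paper's proof: the forward direction lifts the location-level relation $\RelSimPTA$ to the symbolic relation $\{((\loc,\C),(\loc',\C')) \mid (\loc,\loc')\in\RelSimPTA \land \C=\C'\}$ with the coupling carried over exactly as in the paper, and the backward direction combines \cref{lemma:structure} with the \reconstruct{} construction of \cref{ss:reconstruction} and the projection relation $(\loc,\C) \mapsto \loc$ with the trivial coupling. The only cosmetic difference is that you apply \reconstruct{} directly to the structure-preserving \MDP{}, whereas the paper reconstructs an \IPROBTA{} from the \IMDP{} and then substitutes the concrete distributions; the resulting \PROBTA{} and the verification are the same.
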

\begin{proof}
	Let $\iprobta = (\Actions, \Loc, \locinit, \Clock, \TransitionsIPTA)$ be an \IPROBTA{}. %
	Let $\imdp = (\SymbStates, \symbstateinit, \TransitionsIPTA, \TransitionsMDP)$ be the probabilistic zone graph of~$\iprobta$.
	
	\begin{itemize}
		\item[$\Rightarrow$] Assume \iprobta{} is consistent, and let us show that its probabilistic zone graph is consistent.
			From the definition of consistency, there exists a \PROBTA{}  $\probta = (\Actions, \Loc', \locinit', \Clock, \TransitionsPTA)$ such that $\probta \models \iprobta$, with implementation relation~$\RelSimPTA$.
			Let $\mdp = (\SymbStates', \symbstateinit', \TransitionsPTA, \TransitionsMDP')$ be the probabilistic zone graph of~$\probta$.
			Let us show that $\mdp \models \imdp$.
			
			We therefore define a relation~$\RelSimMDP$, and show that it is an implementation relation.
			We define $\RelSimMDP = \{ ((\loc, \C), (\loc', \C')) \mid (\loc, \loc') \in \RelSimPTA \land \C = \C'\}$.
			
			\begin{itemize}
			 \item From \cref{definition:IPTA:symbolic-semantics}, the initial state of~$\mdp$ is $\symbstateinit' = (\locinit', \timelapse{(\bigwedge_{1 \leq i\leq\ClockCard}\clock_i=0)} )$;
					the initial state of~$\imdp$ is $\symbstateinit = (\locinit, \timelapse{(\bigwedge_{1 \leq i\leq\ClockCard}\clock_i=0)} )$.
					Since $\probta \models \iprobta$ then from \cref{definition:IPROBTA:implementation} we have $(\locinit' , \locinit) \in \RelSimPTA$, and therefore
					$(\symbstateinit' , \symbstateinit) \in \RelSimMDP$.
			\item Let $((\loc, \C), (\loc', \C)) \in \RelSimMDP$.
			\begin{itemize}
				\item Let $((\loc', \C), \action , \DistMDP) \in \TransitionsMDP'$.
					Since \cref{definition:IPTA:symbolic-semantics}, there exists an edge $\pedge' = (\loc', \guard, \action, \DistPTA) \in \TransitionsPTA$.
					Therefore, by~$\RelSimPTA$,there exists an edge  $\pedge = (\loc, \guard, \action, \DistIPTA) \in \TransitionsIPTA$ such that $\DistPTA \preceq_{\RelSimPTA} \DistIPTA$.
					
					As a consequence, by \cref{definition:IPTA:symbolic-semantics} and since the guards are the same in~$\probta$ and~$\iprobta$, there exists $((\loc, \C), \action , \DistIMDP) \in \TransitionsMDP$.
					
					Moreover, by \cref{definition:IPTA:symbolic-semantics}, we have $\DistMDP \preceq_{\RelSimMDP} \DistIMDP$, with \linebreak[4]$\correspondance_{\RelSimMDP}((\loc', \C') , (\loc, \C)) = \correspondance_{\RelSimPTA} (\loc', \loc)$ if $\C = \C'$ and 0 otherwise.
				
				\item Similarly, for all $((\loc, \C), \action , \DistIMDP) \in \TransitionsMDP$, there exists $((\loc', \C), \action , \DistMDP) \in \TransitionsMDP'$ such that $\DistMDP \preceq_{\RelSimMDP} \DistIMDP$ by $\RelSimPTA$ and \cref{definition:IPTA:symbolic-semantics}.
			\end{itemize}
			
			\end{itemize}
		Therefore $\mdp \models \imdp$.

		\item[$\Leftarrow$] Assume the probabilistic zone graph~$\imdp$ of~\iprobta{} is consistent, and let us show that \iprobta{} is consistent.
		From \cref{lemma:structure}, there exists $\mdp = (\SymbStates, \symbstateinit, \TransitionsIPTA, \TransitionsMDP')$ such that $\mdp \models \imdp$ with the same structure as~$\imdp$, and with implementation relation~$\RelSimMDP$ (note that $\RelSimMDP$ is the identity because they have the same structure).
		
		Let us reconstruct an \IPROBTA{}~$\iprobta' = (\Actions, \Loc', \locinit', \Clock, \TransitionsIPTA')$ from the probabilistic zone graph~$\imdp$, using the procedure $\reconstruct$ from \cref{ss:reconstruction}.
		Now, let $\probta = (\Actions, \Loc', \locinit', \Clock, \TransitionsPTA)$ with the same structure as $\iprobta'$, and where $\TransitionsPTA$ is obtained by replacing every occurrence of~$\DistIMDP$ in~$\TransitionsIPTA'$ by $\DistMDP$ taken from~$\TransitionsMDP$ in~$\mdp$.
		Note that there is a one-to-one correspondence between $\TransitionsMDP'$ and $\TransitionsMDP$ since they have the same structure, which is a key point here.

		Recall that, during \reconstruct{}, for each transition $(\symbstate, \pedge, \DistIMDP) \in \TransitionsMDP$ in~$\imdp$, with $\pedge = (\loc,\guard, \action, \DistIPTA)$,
			we create in~$\iprobta'$ a transition $(\symbstate,\guard,\action,\DistIPTA')$, where $\DistIPTA'$ is defined as follows:
			for each $\symbstate'$ such that $\DistIMDP(\symbstate') > 0$, then $\DistIPTA'(\resets', \symbstate') = \DistIMDP(\symbstate')$, where $\resets'$ is the set of clocks to be reset from $\symbstate.\loc$ to $\symbstate'.\loc$ via edge~$\pedge$ in~$\iprobta$.
		Here, we simply replace $\DistIMDP$ with $\DistMDP$, where $\DistMDP$ is the distribution corresponding to~$\DistIMDP$ in~$\mdp$.

		Now, let us show that $\probta \models \iprobta$.
		Recall from \cref{ss:reconstruction} that the locations in~$\probta$ are of the form $(\loc, \C)$.
		We thus define $\RelSimPTA = \{ (\loc', \C') , \loc \mid \loc' = \loc \}$.
		
		\begin{itemize}
			\item From the reconstruction~\reconstruct{}, the initial location of~$\probta$ is $(\locinit, \Cinit)$.
				Therefore, $((\locinit, \Cinit), \locinit) \in \RelSimPTA$.
				
			\item Let $((\loc, \C), \loc') \in \RelSimPTA$.
			\begin{itemize}
				\item Let $\pedge = ((\loc, \C), \guard, \action, \DistPTA) \in \TransitionsPTA$.
					By \reconstruct{}, there must exist $((\loc, \C), \pedge, \DistMDP) \in \TransitionsMDP'$ in~$\mdp$ with $\DistPTA(\resets, (\loc', \C')) = \DistMDP((\loc', \C'))$, for all $(\loc', \C')$, where $\resets$ is the set of clocks to be reset from $\loc$ to $\loc'$ via edge~$\pedge$ in~$\probta$.
					Therefore, from $\RelSimMDP$, there exists $((\loc, \C), \pedge, \DistIMDP) \in \TransitionsMDP$ of~$\imdp$ \st{} $\DistMDP \preceq_{\RelSimMDP} \DistIMDP$.
					
					As a consequence, by the zone graph construction (\cref{definition:IPTA:symbolic-semantics}), there exists a transition $(\loc, \guard, \action, \DistIPTA) \in \TransitionsIPTA$ in $\iprobta$ such that $\DistIMDP((\loc', \C')) = \DistIPTA(\resets, \loc')$, for all $\resets, \loc', \C'$.
					
					Let $\delta_{\RelSimPTA}$ be such that $\delta_{\RelSimPTA}((\loc', \C') , \loc'') = 1$ if $\loc' = \loc''$ and 0 otherwise.
					Finally, by $\RelSimMDP$, we obtain $\DistPTA \preceq_{\RelSimPTA} \DistIPTA$.
					\ea{revoir :-)}
					
				\item Similarly, %
					for all $(\loc, \guard, \action, \DistIPTA) \in \TransitionsIPTA$ in $\iprobta$, there exists \linebreak[4]$((\loc, \C), \guard, \action, \DistPTA) \in \TransitionsPTA$ such that $\DistPTA \preceq_{\RelSimPTA} \DistIPTA$.

			\end{itemize}

		\end{itemize}
		Therefore $\probta \models \iprobta$.

	\end{itemize}
\end{proof}

Given the results presented in \cref{lemma:structure} and
\cref{theorem:IPTA:consistency}, deciding whether a
given \IPROBTA{}~\iprobta{} is consistent can be done by deciding
whether its probabilistic zone graph admits at least one implementation that
preserves its structure.

Such an algorithm was provided in~\cite{BDSyncop15} in the context of
IMCs instead of \IMDPs{}. We show how this
algorithm can be adapted to our context. As for IMCs, we say that a
state is {\em locally inconsistent} in a given \IMDP{} iff one of its
outgoing probabilistic (interval) transitions cannot be implemented,
\ie{} if there is no distribution that matches the specified
intervals. Let $\imdp = (\States, \stateinit, \TransitionsIPTA, \TransitionsMDP)$
be the \IMDP{} symbolic semantics of a given \IPROBTA{}.
Our algorithm is given in \cref{algo:iprobta}.

\begin{algorithm}
Let $\IncStates$ be the set of locally inconsistent states in $\imdp$ and $\PassedStates = \emptyset$.

\While{$\state_0 \notin \PassedStates$ and $\IncStates \ne \emptyset$}{
	Let $\state \in \IncStates$ and $\PassedStates = \PassedStates \cup \{\state\}$.
	
	Replace all transitions $(\state',a,\DistIMDP)$ such that
        $\DistIMDP(\state) \ne [0,0]$ with $(\state',a,\DistIMDP')$ where
        \begin{itemize} 
        \item $\DistIMDP'(\state'') = \DistIMDP(\state'')$ for all $\state'' \ne \state$, and
         \item $\DistIMDP'(\state)=\DistIMDP(\state)\cap[0,0]$
        \end{itemize} 
        
	Update $\IncStates \subseteq (\States \setminus \PassedStates)$.
} %
\caption{Consistency of \IMDPs{}}
\label{algo:iprobta}
\end{algorithm}

\cref{algo:iprobta} is based on the following principle: as soon as a locally
inconsistent state is detected, it is made unreachable by
modifying the incoming interval probabilities to $\DistIMDP(\state)\cap[0,0]$.
Remark that if $0$ is not an
admissible transition probability, the inconsistency is transfered to the predecessor
states because $\DistIMDP(\state)\cap[0,0]=\emptyset$.

In the context of IMCs, it is proven in~\cite{BDSyncop15} that this
algorithm converges and that the original IMC is consistent iff the
initial state is not locally inconsistent in the resulting IMC. The
proof from~\cite{BDSyncop15} can be trivially adapted to the context
of \IMDPs{}.

\cref{theorem:IPTA:consistency} together with \cref{algo:iprobta} and the above discussion on termination give the following theorem:

\begin{theorem}\label{theorem:IPROBTA-decidable}
	The consistency problem for \IPROBTAs{} is decidable.
\end{theorem}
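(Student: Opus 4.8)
The plan is to chain together the three results that the preceding development has set up, so that deciding consistency of an \IPROBTA{} becomes an effective procedure on a finite object. First I would invoke \cref{theorem:IPTA:consistency}: an \IPROBTA{}~\iprobta{} is consistent if and only if its probabilistic zone graph~\imdp{} is consistent. This immediately reduces the decision problem for \IPROBTAs{} to the decision problem of consistency for a single \IMDP{}, namely~\imdp{}. So the entire argument rests on showing that consistency of~\imdp{} is decidable.

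The key enabling observation is that \imdp{} is \emph{finite}. Indeed, the symbolic semantics of \cref{definition:IPTA:symbolic-semantics}, after applying the classical $k$-extrapolation to the symbolic constraints, yields only finitely many symbolic states $(\loc, \C)$: finitely many locations and, thanks to extrapolation, finitely many reachable zones. I would make this finiteness explicit, since it is exactly the property that distinguishes this setting from the undecidable parametric case treated later, and it is what makes an enumeration-based procedure terminate.

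Next I would run \cref{algo:iprobta} on~\imdp{}. Termination is straightforward: each iteration of the while loop moves one state from \IncStates{} into \PassedStates{}, the updated \IncStates{} is always a subset of $\States \setminus \PassedStates$, and \States{} is finite; hence the loop executes at most $|\States|$ times. For correctness I would rely on the adaptation of the result of~\cite{BDSyncop15}: the algorithm preserves the set of implementations up to making locally inconsistent states unreachable, and upon termination the original \IMDP{} is consistent if and only if the initial state $\stateinit$ is not locally inconsistent in the resulting \IMDP{}. Checking local consistency of a single state is itself decidable, since it amounts to testing nonemptiness of a system of linear (in)equalities over the probabilities on its outgoing interval transitions. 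Combining these, one decides consistency of~\imdp{}, and hence of~\iprobta{} by \cref{theorem:IPTA:consistency}.

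The main obstacle --- the part that carries genuine mathematical content rather than bookkeeping --- is the correctness of the fixpoint computation in \cref{algo:iprobta}, \ie{} the claim that propagating $\DistIMDP(\state)\cap[0,0]$ backwards exactly captures consistency. This is precisely what is delegated to the adaptation of~\cite{BDSyncop15}, which in turn relies on \cref{lemma:structure}: a consistent \IMDP{} admits a structure-preserving implementation, so it suffices to reason about which states can be kept reachable. I would therefore spend most care on verifying that the IMC proof of~\cite{BDSyncop15} genuinely transfers to \IMDPs{}: the only structural change is that each state carries action-labelled interval distributions rather than a single one, yet both the local-consistency test and the backward propagation operate transition by transition, so the argument goes through unchanged.
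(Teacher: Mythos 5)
Your proposal is correct and follows essentially the same route as the paper: reduce consistency of the \IPROBTA{} to consistency of its (finite, by $k$-extrapolation) probabilistic zone graph via \cref{theorem:IPTA:consistency}, then decide the latter with \cref{algo:iprobta}, whose termination and correctness come from the adaptation of the IMC result of~\cite{BDSyncop15} (resting on \cref{lemma:structure}). Your additional remarks --- the explicit termination bound and the observation that local consistency is a linear feasibility check --- are details the paper leaves implicit, but the argument is the same.
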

\section{Consistency-emptiness and synthesis for \PIPROBTAs{}}\label{section:consistency-PIPROBTA}

We now move to the parametric setting and consider the following two problems:

\defProblem
	{Consistency-emptiness}
	{A \PIPROBTA{}~$\piprobta$}
	{does there exist a parameter valuation $\pval$ such that $\valuate{\piprobta}{\pval}$ is consistent?}

\defProblem
	{Consistency-synthesis}
	{A \PIPROBTA{}~$\piprobta$}
	{find all parameter valuations $\pval$ for which $\valuate{\piprobta}{\pval}$ is consistent.}

In the following, we first address the consistency-emptiness problem
and show that, while this problem is undecidable in the general context of \PIPROBTAs{}, it becomes decidable for a syntactic subclass (\cref{ss:emptiness}).
We then introduce an adaptation of the parametric zone-graph construction for parametric timed automata (\cref{ss:symbolic-semantics}), and propose in \cref{ss:synthesis} a construction for the consistency-synthesis problem.
This construction can only be applied when the parametric probabilistic zone-graph construction of the original \PIPROBTA{} is finite.
When this is the case, the set of parameter values that are synthesized is exactly those that ensure consistency of the resulting \IPROBTA{}.
We finally address the more general problem of consistent reachability in \cref{ss:consistent-reachability}.

\subsection[The emptiness problem]{The emptiness problem}\label{ss:emptiness}
\subsubsection[Undecidability in the general case]{Undecidability in the general case}\label{sss:undecidable}

The undecidability of the consistency-emptiness for \PIPROBTAs{} follows from the undecidability of the reachability emptiness for parametric timed automata.

\begin{theorem}\label{theorem:undecidability}
	The consistency-emptiness for \PIPROBTAs{} is undecidable.
\end{theorem}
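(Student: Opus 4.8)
The plan is to reduce from the reachability-emptiness problem for parametric timed automata, which is undecidable~\cite{AHV93} and ultimately rests on encoding the halting problem of a two-counter machine. Given such an encoding, I would first arrange the parametric timed automaton so that every parameter valuation drives the (clock- and parameter-controlled, bounded) simulation to exactly one of two distinguished locations: a location $\loc_f$, reached precisely when the simulated machine halts within the number of steps encoded by the valuation, and a ``timeout'' location $\loc_{\mathit{bad}}$, reached otherwise. With this mutually exclusive and exhaustive \emph{verdict} property, a valuation $\pval$ reaches $\loc_f$ iff it does not reach $\loc_{\mathit{bad}}$, and \emph{some} valuation reaches $\loc_f$ iff the machine halts.

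From this automaton I build a \PIPROBTA{} $\piprobta$ by a purely syntactic transformation: every edge is promoted to a probabilistic edge carrying the singleton interval $[1,1]$ to its unique successor (so the underlying timed behaviour, and hence reachability, is unchanged), $\loc_f$ becomes an absorbing consistent sink, and at $\loc_{\mathit{bad}}$ I attach one extra edge with an always-true guard whose interval distribution is deliberately unsatisfiable and \emph{non-zeroable} --- for instance two branches both labelled $[0.6,0.7]$. No implementation can match it, since any two probabilities in $[0.6,0.7]$ sum to at least $1.2 \neq 1$; and neither branch can be set to $0$, since $0 \notin [0.6,0.7]$.

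The heart of the argument is then to characterise consistency of $\valuate{\piprobta}{\pval}$ through its probabilistic zone graph, using \cref{theorem:IPTA:consistency}. If $\pval$ does not reach $\loc_{\mathit{bad}}$ (equivalently, reaches $\loc_f$), the zone graph contains only singleton $[1,1]$ distributions together with the consistent sink, hence is a trivially implementable \IMDP{}, so $\valuate{\piprobta}{\pval}$ is consistent. If $\pval$ reaches $\loc_{\mathit{bad}}$, the corresponding zone-graph state is locally inconsistent; and because every edge on the path from the initial state carries probability exactly $1$, running \cref{algo:iprobta} replaces each incoming interval $[1,1]$ by $[1,1]\cap[0,0]=\emptyset$, propagating the inconsistency step by step back to the initial state, so the zone graph --- and therefore $\valuate{\piprobta}{\pval}$ --- is inconsistent. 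Combining this with the verdict property yields, for every $\pval$, that $\valuate{\piprobta}{\pval}$ is consistent iff $\loc_f$ is reachable; hence there exists a valuation making $\piprobta$ consistent iff the two-counter machine halts, which is undecidable. As the whole construction is computable, consistency-emptiness is undecidable.

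The step I expect to be the main obstacle is fixing the \emph{polarity} of the reduction. Attaching the inconsistent gadget to a location makes consistency equivalent to the \emph{non}-reachability of that location, whereas undecidability of reachability-emptiness is naturally phrased existentially about reachability; a careless construction therefore reduces to ``there exists a valuation avoiding the target,'' which can be trivially true and proves nothing. The mutually exclusive verdict gadget (halt versus timeout, with the inconsistency placed on the timeout branch) is precisely what realigns ``there exists a consistent valuation'' with ``there exists a valuation reaching $\loc_f$.'' The remaining care is routine: checking that the local inconsistency genuinely propagates to the initial state through the $[1,1]$ edges via \cref{algo:iprobta}, and invoking \cref{theorem:IPTA:consistency} (together with \cref{lemma:structure}) to pass between the \PIPROBTA{} and its zone-graph semantics.
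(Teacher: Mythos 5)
Your proposal is sound, and it reuses the paper's central gadget: promote every PTA edge to a Dirac $[1,1]$ interval edge (so that no probability can be zeroed) and attach to a distinguished location an edge whose interval distribution admits no implementation (the paper uses a single branch $[0.5,0.5]$; your two branches $[0.6,0.7]$ work equally well). The reductions built around the gadget, however, genuinely differ, and the difference matters. The paper places the gadget after the goal location of an arbitrary PTA and concludes that some valuation yields an \emph{inconsistent} model iff some valuation reaches the goal, the latter being undecidable by \cite{AHV93}. As written, this establishes undecidability of ``$\exists \pval$ such that $\valuate{\piprobta}{\pval}$ is inconsistent'' (\ie{} of consistency-\emph{universality}), whereas the problem statement asks about ``$\exists \pval$ such that $\valuate{\piprobta}{\pval}$ is consistent''; for the paper's instances the latter is the negation of \EF{}-universality, so the paper's argument is only complete if its source problem, \EF{}-emptiness, is replaced by \EF{}-universality (which is indeed undecidable for general PTAs, see \cite{Andre18STTT}, but is not what \cite{AHV93} provides). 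Your verdict dichotomy --- every valuation reaches exactly one of $\loc_f$ and $\loc_{\mathit{bad}}$, with the gadget on $\loc_{\mathit{bad}}$ --- is precisely what realigns the existential quantifiers, turning ``$\exists \pval$ consistent'' into ``$\exists \pval$ reaching $\loc_f$'', \ie{} halting. Your route therefore proves the theorem as stated and is self-contained, while the paper's shorter route delegates the same difficulty to the undecidability of \EF{}-universality, whose own proof rests on timeout machinery of the kind you describe.

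The one thin step in your proposal is the verdict PTA itself: you assert it can be ``arranged'', but it is not what the encoding of \cite{AHV93} gives off the shelf, and it is where the real technical work lies. It is achievable --- let the parameter bound the length of the simulation, with each machine step lasting exactly one time unit (enforced by $x=1$ guards), a never-reset clock $t$, every ``continue'' edge guarded by $t < \param$, and the timeout edge to $\loc_{\mathit{bad}}$ guarded by $t \geq \param$ --- but the details are delicate: without $t < \param$ on the continue edges, small valuations reach \emph{both} $\loc_f$ and $\loc_{\mathit{bad}}$ and mutual exclusivity fails; the timeout guard must be $t \geq \param$ rather than $t = \param$, otherwise non-integer valuations escape the dichotomy and yield vacuously consistent models even when the machine does not halt; and runs that overshoot a step boundary must be made to deadlock rather than to reach $\loc_{\mathit{bad}}$. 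The remainder of your argument is correct: local inconsistency of the $\loc_{\mathit{bad}}$ zone-graph state propagates back to the initial state through the $[1,1]$ edges via \cref{algo:iprobta} (or, avoiding reliance on the algorithm, by a direct induction using \cref{lemma:structure}), and \cref{theorem:IPTA:consistency} transfers the verdict between the zone graph and $\valuate{\piprobta}{\pval}$ exactly as you use it.
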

\begin{proof}
	The reachability emptiness for parametric timed automata (\ie{} the existence of at least one parameter valuation for which a given location is reachable) is undecidable (see \eg{} \cite{AHV93,JLR15,AM15,BBLS15}, and \cite{Andre18STTT} for a complete survey).
	In particular, it is undecidable even without invariant as shown in~\cite{AHV93,BBLS15}, which is inline with our setting.
	
	We prove our result by reducing from the reachability emptiness for parametric timed automata.
	Assume a PTA (without probability), with a special goal location.
	From that goal location, let us add an unguarded transition to a new location for which no implementation exists (for example a single transition labeled with $[0.5,0.5]$).
	Hence there exists a parameter valuation for which the underlying \IPROBTA{} admits no implementation iff there exists a parameter valuation for which the goal location is reachable---which is undecidable.
\end{proof}

The undecidability of the consistency-emptiness problem rules out the possibility to, in general, compute a solution to the consistency-synthesis problem.
In the following, we will still address this computation problem by
proposing a synthesis procedure that can be used when the parametric
probabilistic zone graph is finite.

\subsubsection[A decidability result]{A decidability result}\label{sss:decidable}

Despite the negative result of \cref{theorem:undecidability}, we can exhibit a decidability result for a syntactic subclass of \PIPROBTAs{}.
In~\cite{HRSV02}, a syntactic subclass, namely lower-bound/upper-bound parametric timed automata (\emph{L/U-PTAs}) is introduced that restricts the use of parameters in parametric timed automata.
Basically, in an L/U-PTA, any parameter must be either always used as an upper-bound (a constraint $\clock \leq \param$ or $\clock < \param$) or always as a lower-bound ($\clock \geq \param$ or $\clock > \param$) in the guards and invariants of the parametric timed automaton.
L/U-PTAs benefit from several main decidability results:
	the \emph{\EF{}-emptiness}, or \emph{reachability-emptiness}, problem (``is the set of parameter valuations for which a given location is reachable empty?'')\ is shown to be decidable in~\cite{HRSV02}.
Then, the infinite acceptance emptiness (``is the set of parameter valuations for which a given set of locations is visited infinitely often along some run empty?'')\ and universality (``is a given set of locations visited infinitely often along some run for all parameter valuations?'')\ have been proved to be decidable for L/U-PTAs with integer-valued parameters in~\cite{BlT09}.
Unavoidability was studied in~\cite{JLR15} while liveness and deadlocks were studied in~\cite{ALime17} with a thin frontier between decidability and undecidability.
Finally, the \EF{}-universality problem was shown to be decidable for L/U-PTAs over rational parameters by~\cite{Andre18STTT}.

In the following, we reuse the concept of lower-bound and upper-bound parameters in the setting of \PIPROBTAs{}.

\begin{definition}[L/U-\PIPROBTA{}]\label{def:LUPIPTA}
	An L/U-\PIPROBTA{} is a \PIPROBTA{} whose set of parameters is partitioned into lower-bound parameters and upper-bound parameters,
	where an upper-bound (resp.\ lower-bound) parameter~$\param_i$ is such that, 
    for every guard %
    constraint $\clock \compOp z$, we have: $z = \param_i$ implies ${\compOp} \in \{ \leq, < \}$ (resp.\ ${\compOp} \in \{ \geq, > \}$).
\end{definition}
\begin{example}\label{example:LU}
	Consider the \PIPROBTA{} in \cref{figure:LU:PIPROBTA}.
	Then it is an L/U-\PIPROBTA{} with one upper-bound parameter $\param_1$ and one lower-bound parameter~$\param_2$.
\end{example}

L/U-PTAs enjoy a well-known monotonicity property recalled in the following lemma (that corresponds to a reformulation of \cite[Prop~4.2]{HRSV02}), stating that increasing upper-bound parameters or decreasing lower-bound parameters can only add behaviors.

\begin{lemma}[\cite{HRSV02}]\label{lemma:HRSV02:prop4.2}
	Let~$\A$ be an L/U-PTA and~$\pval$ be a parameter valuation.
	Let $\pval'$ be a valuation such that
	for each upper-bound parameter~$\param^+$, $\pval'(\param^+) \geq \pval(\param^+)$
	and
	for each lower-bound parameter~$\param^-$, $\pval'(\param^-) \leq \pval(\param^-)$.
	Then any run of~$\valuate{\A}{\pval}$ is a run of $\valuate{\A}{\pval'}$. 
\end{lemma}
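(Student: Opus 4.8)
The plan is to give a direct ``guard-weakening'' monotonicity argument: passing from $\pval$ to $\pval'$ can only relax every guard occurring along a run, so any run that is feasible under $\pval$ remains feasible under $\pval'$. First I would recall that a run of $\valuate{\A}{\pval}$ is, in the concrete semantics, an alternating sequence of delay steps and discrete edge traversals. The delay steps are parameter-independent (they merely shift a clock valuation by some $d \in \grandrplus$, and since our automata carry no invariant the time-elapse condition $\forall d' \in [0,d],\ (\loc,\clockval+d') \in \States$ is trivially satisfied). Hence the only place where the parameter valuation intervenes is the guard-satisfaction condition $\clockval + d \models \valuate{\guard}{\pval}$ of each edge traversal; the target distribution $\DistPTA$ itself is parameter-free.

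The central step I would isolate is the following pointwise claim: for every guard $\guard$ of~$\A$ and every clock valuation $\clockval$, if $\clockval \models \valuate{\guard}{\pval}$ then $\clockval \models \valuate{\guard}{\pval'}$. Since a guard is a conjunction of atomic constraints $\clock \compOp z$, it suffices to establish this for a single atom and then take the conjunction. If $z$ is an integer constant, the atom is unchanged by either valuation and the implication is immediate. If $z = \param^+$ is an upper-bound parameter, then by \cref{def:LUPIPTA} we have ${\compOp} \in \{<,\leq\}$, so the atom reads $\clock \compOp \param^+$; using $\pval(\param^+) \leq \pval'(\param^+)$ and transitivity, $\clockval(\clock) \compOp \pval(\param^+)$ entails $\clockval(\clock) \compOp \pval'(\param^+)$. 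Symmetrically, if $z = \param^-$ is a lower-bound parameter then ${\compOp} \in \{>,\geq\}$ and $\pval'(\param^-) \leq \pval(\param^-)$, so satisfaction is again preserved.

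Finally I would conclude by induction on the length of the run. The initial state $(\locinit, \vec{0})$ is identical under both valuations. For the inductive step, consider a delay-then-edge step from $(\loc,\clockval)$ using edge $\pedge = (\loc,\guard,\action,\DistPTA)$ with delay~$d$: feasibility under $\pval$ gives $\clockval + d \models \valuate{\guard}{\pval}$, hence by the pointwise claim $\clockval + d \models \valuate{\guard}{\pval'}$, while the induced distribution over the target states $(\loc', \reset{\clockval+d}{\resets})$ is determined solely by the parameter-free $\DistPTA$. Thus exactly the same transition is enabled under $\pval'$ and reaches the same successor state; chaining these steps reproduces the entire run inside $\valuate{\A}{\pval'}$.

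I expect no serious obstacle, as this is a routine monotonicity argument and the result is already established as \cite[Prop~4.2]{HRSV02}. The only points requiring care are to treat all four comparison operators and both parameter kinds uniformly in the pointwise claim, and to observe explicitly that the absence of invariants in our semantics removes the only other location where parameters could constrain a run, leaving guards as the sole parameter-dependent ingredient.
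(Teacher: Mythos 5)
Your proof is correct. Note, however, that the paper itself contains no proof of this lemma: it is stated as a recalled result, imported by citation from \cite{HRSV02} (as a reformulation of Prop.~4.2 there), so there is no in-paper argument to compare against. Your guard-weakening proof --- the pointwise claim for atoms $\clock \compOp z$ with the comparison direction forced by the lower-/upper-bound classification of \cref{def:LUPIPTA}, followed by induction along the run using the facts that delays and target distributions are parameter-free and that the absence of invariants trivializes the time-elapse condition --- is exactly the standard monotonicity argument underlying the cited result, and it is complete in this paper's setting, where a guard is by definition a conjunction of atoms $\clock \compOp z$ with $z$ a single parameter or integer constant. The only remark worth adding is that in the original setting of \cite{HRSV02} guards may compare clocks to linear expressions over parameters (with sign conditions on the coefficients), so the original proposition requires a slightly more general version of your pointwise claim; within the present paper's simpler guard syntax, your case analysis covers everything.
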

\begin{remark}\label{remark:cannot:LU}
	Unfortunately, we cannot directly use the monotonicity property of L/U-PTAs to prove the decidability of the consistency-emptiness for L/U-\PIPROBTAs{}.
	It could have been handful to consider the \IPROBTA{}, say $\iprobtainfzero$, obtained by replacing every lower-bound parameter (resp.\ upper-bound parameter) in the guards of a \PIPROBTA{}~$\piprobta$ with~$\infty$ (resp.\ 0).\footnote{%
		Valuating a parameter with~$\infty$ is achieved as follows:
		 for each upper-bound parameter~$\param$ for which $\pval(\param) = \infty$, we delete any comparison of a clock with~$\param$ (\ie{} the clock constraint becomes the most permissive);
		for each lower-bound parameter~$\param$ for which $\pval(\param) = \infty$, we replace any constraint in which $\param$ appears by $\BFalse$ (\ie{} the transition labeled by the guard is deleted).
		Therefore the result of this valuation is an \IPROBTA{} as expected.
	}
		$\iprobtainfzero$ is the most restrictive \IPROBTA{} obtained from~$\piprobta$ (as any other valuation will have more behaviors thanks to~\cref{lemma:HRSV02:prop4.2}).
	If $\iprobtainfzero$ is inconsistent, then any other parameter valuation is clearly inconsistent as well thanks to the monotonicity of L/U-PTAs; however, if $\iprobtainfzero$ is consistent, it is not possible to conclude that a (non-infinite) parameter valuation is consistent.
	In fact, it is easy to exhibit a counter-example for which $\iprobtainfzero$ is consistent, but for which $\valuate{\piprobta}{\pval}$ is inconsistent for any (non-infinite) valuation~$\pval$.
	This is the case of the \PIPROBTA{} depicted in \cref{figure:counter-example-LU}: when $\param$ is replaced with $\infty$, the system is stuck forever in~$\loc_0$, and is therefore consistent.
	For any (non-infinite) parameter valuation, the system can take the transition labeled with~$a$, which is inconsistent due to the sum of the probabilities.
\end{remark}
\begin{figure}
	
	{\centering
		
	\newcommand{\hscale}{2}
	\newcommand{\vscale}{1.5}
	\begin{tikzpicture}[node distance=2cm, auto, ->, >=stealth']
		\node[location0, initial] (l0) at (0,0) {$\loc_0$};
		\node [probchoice] (l0choice) at (1*\hscale, 0*\vscale) {};

		\node[location1] (l1) at (2*\hscale, -.5*\vscale) {$\loc_1$};
		\node[location2] (l2) at (2*\hscale, +.5*\vscale) {$\loc_2$};

		\path
			(l0) edge node[below]{\begin{tabular}{c}\footnotesize $\clock \geq \param$ \\ $a$\end{tabular}} (l0choice) %
		
			(l0choice) edge[probedge] node[below, xshift=-5]{\footnotesize \begin{tabular}{c}$0.6$ \\ $\clock := 0$\end{tabular}} (l1)
			(l0choice) edge[probedge] node[above]{\footnotesize \begin{tabular}{c}$0.9$ \\ $\clock := 0$\end{tabular}} (l2)
			
		;

		\AngleA{l0choice}{l1}{l2};

		\end{tikzpicture}
	
	}

	\caption{An L/U-\PIPROBTA{} inconsistent for all parameter valuations}
	\label{figure:counter-example-LU}
\end{figure}
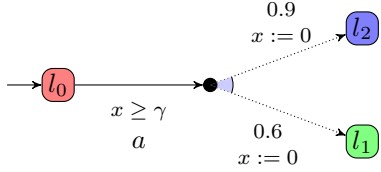

Still, we show in \cref{theorem:decidability-LU} below that the consistency-emptiness problem is decidable in the context of L/U-\PIPROBTAs{}.

\todo{il y a un raisonnement mieux au début de la 4.3}
To prove decidability, we use the following reasoning.
An L/U-\PIPROBTA{} is consistent if we can ``block'' the inconsistent edges, \ie{} those who cannot admit any implementation because the sum of their probabilities cannot be~1.
There are two ways of achieving this goal: either set to~0 some of the probabilities on all paths leading to a given inconsistent edge, or tune the timing parameters so as to forbid this edge because the guard can never be satisfied.
The first way can be achieved by enumerating all possible combinations to set to~0 some probabilities.
The second way can be achieved by parametric model checking: for a given combination of probabilities set to~0, if we can find at least one valuation for which none of the inconsistent edges is reachable, then we can answer false to the consistency-emptiness problem for L/U-\PIPROBTAs{}.
Finding at least one such valuation is equivalent to answering no to the \EF-universality problem---which is decidable for L/U-PTAs as shown in~\cite{Andre18STTT}.
In the following, we explain this reasoning step by step.

We first need a notation, used in the proof of \cref{theorem:decidability-LU}, and later in \cref{ss:synthesis}.

\begin{definition}[feasible supports]\label{definition:FS}
	Given an interval distribution~$\DistIMDP$ in an \IMDP{}, let $\CombiInt(\DistIMDP)$ denote the \emph{feasible supports} of $\DistIMDP$ \ie{}
	set of sets of target states for which a consistent distribution can be assigned.
	Formally, $\CombiInt(\DistIMDP) = \{ \States' \subseteq \States \mid 
	\exists \mu \in \dist(\States) \text{ s.t. }
	\forall \state\in \States : \mu(\state)\in \DistIMDP(\state)\text{ and } 
	\forall \state\in \States : \mu(\state)>0 \text{ iff } \state\in\States' \}$.
\end{definition}

That is, each set of states~$\States'$ is such that there exists a distribution~$\DistIMDP'$ for which the probability of reaching each state in~$\States'$ is not zero, such that this distribution is a punctual distribution, is an implementation of~$\DistIMDP$ and is consistent.

\begin{definition}\label{definition:inconsistent-edge}
	An interval distribution~$\DistIMDP$ is \emph{inconsistent} if $\CombiInt(\DistIMDP) = \emptyset$.
\end{definition}

By extension, we say that an edge is inconsistent if its interval distribution is inconsistent.

We also use the same notions for interval distributions in~\PIPROBTAs{}.

\begin{example}\label{example:FS}
	In \cref{figure:example:PIPROBTA}, let~$\DistIPTA_1$ be the (unique) interval distribution outgoing from~$\loc_1$.
	We have $\CombiInt(\DistIPTA_1) = \{ \}$, as no implementation can make this distribution consistent.
	That is, $\DistIPTA_1$ is an inconsistent edge.
	Let~$\DistIPTA_2$ be the (unique) interval distribution outgoing from~$\loc_2$.
	We have $\CombiInt(\DistIPTA_2) = \{ \{ \loc_5 \}, \{ \loc_2 , \loc_5 \} \}$.
\end{example}

We now define the set of \PIPROBTAs{} obtained by taking all possible combinations of feasible supports.

\begin{definition}\label{definition:PIPFS}
	Given a \PIPROBTA{}~$\piprobta$, let $\PIPCombiInt(\piprobta)$ denote the set of all possible \PIPROBTAs{} obtained from~$\piprobta$ by selecting for each interval distribution~$\DistIPTA$ exactly one element from~$\CombiInt(\DistIPTA)$ when $\CombiInt(\DistIPTA) \neq \emptyset$, or by keeping the original distribution if $\CombiInt(\DistIPTA) = \emptyset$.
\end{definition}

Intuitively, $\PIPCombiInt(\piprobta)$ contains all possible ways to remove transitions by setting some probabilities to~0 while keeping the sum of the other probabilities possibly equal to~1.

\begin{example}\label{example:PIPFS}
	Consider the \PIPROBTA{}~$\piprobta$ in \cref{figure:example:PIPROBTA}.
	$\PIPCombiInt(\piprobta)$ contains 4~\PIPROBTAs{}.
	All are such that~$\loc_1$ has the same outgoing distribution to $\loc_4$ and~$\loc_3$ as in \cref{figure:example:PIPROBTA} (as $\CombiInt$ is empty for this distribution).
	Two of these 4~\PIPROBTAs{} (say 1 and~2) are such that the distribution outgoing from $\loc_0$ goes to both $\loc_1$ and $\loc_2$ (with the same probabilities as in \cref{figure:example:PIPROBTA}), while two others (say 3 and~4) are such that this distribution is only going to~$\loc_1$.
	In addition, two of these 4~\PIPROBTAs{} (say 1 and~3) are such that the distribution outgoing from $\loc_2$ goes to both $\loc_2$ and $\loc_5$, while two others (say 2 and~4) are such that this distribution is only going to~$\loc_5$.
\end{example}
\begin{example}\label{example:PIPFS2}
	Consider now the \PIPROBTA{}~$\piprobta$ in \cref{figure:LU:PIPROBTA}.
	Then $\PIPCombiInt(\piprobta)$ contains the 2~\PIPROBTAs{} in \cref{figure:LU:PIPROBTA} (\ie{} itself) and in \cref{figure:LU:PIPCombiInt}.
\end{example}

Given an L/U-PTA~$\A$ and a subset~$\somelocs$ of its locations, let us denote by $\EFuniv(\A, \somelocs)$ the result of \EF-universality for locations~$\somelocs$ in~$\A$, \ie{} the answer to the following question: ``is the set of valuations~$\pval$ such that at least one location of~$\somelocs$ is reachable in $\valuate{\A}{\pval}$ universal?''
Or put differently, do \emph{all} valuations reach at least one location of~$\somelocs$?
Recall that \EF-universality is decidable for L/U-PTAs as shown in~\cite{Andre18STTT}.

We need two additional notations before introducing our decision procedure.
First, given a \PIPROBTA{}~$\piprobta$, let $\makeNonDet(\piprobta)$ denote the PTA obtained from~$\piprobta$ by performing the following three operations:
\begin{enumerate}
	\item\label{item:addedges} for each location~$\loc$, for each inconsistent edge~$\DistIPTA$ from~$\loc$, add a new non-probabilistic transition from~$\loc$ to a fresh location, with the same guard as on $\DistIPTA$;
	\item remove all inconsistent edges;
	\item replace all probabilistic distributions with non-determinism (see \eg{} \cite{AFS13}).
\end{enumerate}
Observe that, if~$\piprobta$ is an L/U-\PIPROBTA{}, then $\makeNonDet(\piprobta)$ is an L/U-PTA.
Second, $\makeAcc(\piprobta)$ returns the set of locations %
added at step~\ref{item:addedges}.

Beyond transforming the L/U-\PIPROBTA{} into a non-probabilistic L/U-PTA, the rationale behind $\makeNonDet$ is that we need to test for reachability of an \emph{edge}, which is not possible natively in L/U-PTAs; therefore, we add new locations target of these edges.
In addition, we cannot test for the reachability of an arbitrary existing location target of these edges, as they may be reached via other paths too.
In \cref{figure:LU:PIPROBTA}, $\loc_3$ is reachable via the inconsistent edge outgoing from~$\loc_1$, but also directly from~$\loc_0$: only reaching~$\loc_3$ via~$\loc_1$ should be avoided, which justifies the creation of~$\loc_1'$ in \cref{figure:LU:makeNonDet}.

\begin{figure}
	\footnotesize

	\newcommand{\hscale}{1.3}
	\newcommand{\vscale}{1}

	\begin{subfigure}[b]{.49\textwidth}
	{%
	
	\begin{tikzpicture}[node distance=2cm, auto, ->, >=stealth']
		\node[location0, initial] (l0) at (0,0) {$\loc_0$};
		\node [probchoice] (l0choice) at (1*\hscale, 0*\vscale) {};

		\node[location2] (l1) at (2*\hscale, +1*\vscale) {$\loc_1$};
		\node[location1] (l2) at (2*\hscale, -1*\vscale) {$\loc_2$};

		\node [probchoice] (l1choice) at (3*\hscale, +1*\vscale) {};
		\node [probchoice] (l2choice) at (3*\hscale, -1*\vscale) {};

		\node[location5] (l3) at (4*\hscale, 2.5*\vscale) {$\loc_3$};
		\node[location6] (l4) at (4*\hscale, 1*\vscale) {$\loc_4$};

		\node[location4] (l5) at (4*\hscale, 0*\vscale) {$\loc_5$};
		\node[location3] (l6) at (4*\hscale, -2*\vscale) {$\loc_6$};
		
		\path
			(l0) edge node[above]{$\clock > 2$} node[below]{$a$} (l0choice)
		
			(l0choice) edge[probedge] node{$[0, 1]$} (l1)
			(l0choice) edge[probedge] node[below left]{$[0, 0.5]$} (l2)
			
			(l1) edge node[above]{$\clock < \param_1$} node[below]{$b$} (l1choice)

			(l2) edge node[above]{$\clock \geq \param_2$} node[below]{$c$} (l2choice)

			(l1choice) edge[probedge] node{$[0,0.1]$} (l3)
			(l1choice) edge[probedge] node[below]{$[0,0.2]$} (l4)
			
			(l2choice) edge[probedge] node{$[0,0.3]$} (l5)
			(l2choice) edge[probedge] node[below left]{$[0,0.4]$} (l6)

			(l0) edge[bend left] node[above left]{$d$} (l3)
			(l5) edge node[left]{$d$} (l6)
		;
		
		\AngleA{l0choice}{l2}{l1};
		
		\AngleA{l2choice}{l5}{l6};
		
		\AngleA{l1choice}{l4}{l3};

	\end{tikzpicture}
	
	\caption{An L/U-\PIPROBTA{}~$\piprobta$}
	\label{figure:LU:PIPROBTA}
	}
	\end{subfigure}
	\hfill{}
	\begin{subfigure}[b]{.49\textwidth}
	{%
	
	\begin{tikzpicture}[node distance=2cm, auto, ->, >=stealth']
		\node[location0, initial] (l0) at (0,0) {$\loc_0$};
		\node [probchoice] (l0choice) at (1*\hscale, 0*\vscale) {};

		\node[location2] (l1) at (2*\hscale, +1*\vscale) {$\loc_1$};
		\node[location1] (l2) at (2*\hscale, -1*\vscale) {$\loc_2$};

		\node [probchoice] (l1choice) at (3*\hscale, +1*\vscale) {};
		\node [probchoice] (l2choice) at (3*\hscale, -1*\vscale) {};

		\node[location5] (l3) at (4*\hscale, 2.5*\vscale) {$\loc_3$};
		\node[location6] (l4) at (4*\hscale, 1*\vscale) {$\loc_4$};

		\node[location4] (l5) at (4*\hscale, 0*\vscale) {$\loc_5$};
		\node[location3] (l6) at (4*\hscale, -2*\vscale) {$\loc_6$};
		
		\path
			(l0) edge node[above]{$\clock > 2$} node[below]{$a$} (l0choice)
		
			(l0choice) edge[probedge] node{$[0, 1]$} (l1)

			(l1) edge node[above]{$\clock < \param_1$} node[below]{$b$} (l1choice)

			(l2) edge node[above]{$\clock \geq \param_2$} node[below]{$c$} (l2choice)

			(l1choice) edge[probedge] node{$[0,0.1]$} (l3)
			(l1choice) edge[probedge] node[below]{$[0,0.2]$} (l4)
			
			(l2choice) edge[probedge] node{$[0,0.3]$} (l5)
			(l2choice) edge[probedge] node[below left]{$[0,0.4]$} (l6)

			(l0) edge[bend left] node[above left]{$d$} (l3)
			(l5) edge node[left]{$d$} (l6)
		;

		\AngleA{l2choice}{l5}{l6};
		
		\AngleA{l1choice}{l4}{l3};

	\end{tikzpicture}
	
	\caption{One element of $\PIPCombiInt(\piprobta)$}
	\label{figure:LU:PIPCombiInt}
	}
	\end{subfigure}
	\begin{subfigure}[b]{.49\textwidth}
	{%

	\begin{tikzpicture}[node distance=2cm, auto, ->, >=stealth']
		\node[location0, initial] (l0) at (0,0) {$\loc_0$};

		\node[location2] (l1) at (2*\hscale, +1*\vscale) {$\loc_1$};
		\node[location1] (l2) at (2*\hscale, -1*\vscale) {$\loc_2$};

		\node [location2,final] (l1choice) at (3.4*\hscale, +1*\vscale) {$\loc_1'$};
		\node [location1,final] (l2choice) at (3.4*\hscale, -1*\vscale) {$\loc_2'$};

		\node[location5] (l3) at (4*\hscale, 2.5*\vscale) {$\loc_3$};
		\node[location6] (l4) at (4*\hscale, 1*\vscale) {$\loc_4$};
		
		\node[location4] (l5) at (4*\hscale, -1*\vscale) {$\loc_5$};
		\node[location3] (l6) at (4*\hscale, -2*\vscale) {$\loc_6$};
		
		\path

			(l0) edge node[above]{$\clock > 2$} node[below]{$a$} (l1)
			(l0) edge node[above]{$\clock > 2$} node[below]{$a$} (l2)

			(l1) edge node[above]{$\clock < \param_1$} node[below]{$b$} (l1choice)
			(l2) edge node[above]{$\clock \geq \param_2$} node[below]{$c$} (l2choice)

			(l0) edge[bend left] node[above left]{$d$} (l3)
			(l5) edge node[left]{$d$} (l6)
		;

	\end{tikzpicture}
	
	\caption{$\makeNonDet(\piprobta)$}
	\label{figure:LU:makeNonDet}
	}
	\end{subfigure}
	\hfill{}
	\begin{subfigure}[b]{.49\textwidth}
	{%

	\begin{tikzpicture}[node distance=2cm, auto, ->, >=stealth']
		\node[location0, initial] (l0) at (0,0) {$\loc_0$};

		\node[location2] (l1) at (2*\hscale, +1*\vscale) {$\loc_1$};
		\node[location1] (l2) at (2*\hscale, -1*\vscale) {$\loc_2$};

		\node [location2,final] (l1choice) at (3.4*\hscale, +1*\vscale) {$\loc_1'$};
		\node [location1,final] (l2choice) at (3.4*\hscale, -1*\vscale) {$\loc_2'$};

		\node[location5] (l3) at (4*\hscale, 2.5*\vscale) {$\loc_3$};
		\node[location6] (l4) at (4*\hscale, 1*\vscale) {$\loc_4$};
		
		\node[location4] (l5) at (4*\hscale, -1*\vscale) {$\loc_5$};
		\node[location3] (l6) at (4*\hscale, -2*\vscale) {$\loc_6$};
		
		\path

			(l0) edge node[above]{$\clock > 2$} node[below]{$a$} (l1)

			(l1) edge node[above]{$\clock < \param_1$} node[below]{$b$} (l1choice)

			(l2) edge node[above]{$\clock \geq \param_2$} node[below]{$c$} (l2choice)

			(l0) edge[bend left] node[above left]{$d$} (l3)
			(l5) edge node[left]{$d$} (l6)
		;

	\end{tikzpicture}
	
	\caption{$\makeNonDet$ of \cref{figure:LU:PIPCombiInt}}
	\label{figure:LU:makeNonDet+PIPCombiInt}
	}
	\end{subfigure}

	\caption{Exemplifying $\PIPCombiInt$, $\makeNonDet$ and $\makeAcc$}
	\label{figure:examples:LU}
\end{figure}
\begin{example}\label{example:makeNonDet}
	Consider again the \PIPROBTA{}~$\piprobta$ in \cref{figure:LU:PIPROBTA}.
	Then the L/U-PTA result of $\makeNonDet(\piprobta)$ is given in \cref{figure:LU:makeNonDet}, while $\makeAcc(\piprobta)$ is $\{ \loc_1', \loc_2' \}$.
\end{example}

We can now give below the main equation to solve consistency-emptiness for L/U-\PIPROBTAs{}.

\begin{equation}\label{equation:LU}
	\bigwedge_{\piprobta' \in \PIPCombiInt(\piprobta)} \EFuniv\big(\makeNonDet(\piprobta'), \makeAcc(\piprobta') \big)
\end{equation}

The idea is that consistency-emptiness holds for an L/U-\PIPROBTA{} if, for each combination of probabilities set to~0 ($\PIPCombiInt$), for all parameter valuations (\EFuniv{}), some of the locations target of an inconsistent distribution ($\makeAcc$) are always reachable.
In other words, there is no way to set some probabilities to~0 and to exhibit some parameter valuations that would avoid an inconsistent distribution.

Note that this procedure can be easily implemented by enumerating all \PIPROBTAs{} in $\PIPCombiInt(\piprobta)$, replacing probabilities with non-determinism as in~$\makeNonDet$, and testing \EFuniv{} on each resulting L/U-PTA using the procedures given in \cite{BlT09,Andre18STTT}.

\begin{example}\label{example:LU:procedure}
	Consider again the \PIPROBTA{}~$\piprobta$ in \cref{figure:LU:PIPROBTA}.
	Recall that $\PIPCombiInt(\piprobta)$ is given in \cref{figure:LU:PIPROBTA,figure:LU:PIPCombiInt},
	and $\makeAcc(\piprobta) = \{ \loc_1', \loc_2' \}$.
	Therefore, checking consistency-emptiness for~$\piprobta$ amounts to checking \EF{}-universality of locations~$\{ \loc_1', \loc_2' \}$ in the L/U-PTAs in \cref{figure:LU:makeNonDet,figure:LU:makeNonDet+PIPCombiInt}.
	For the L/U-PTA in \cref{figure:LU:makeNonDet}, $\EFuniv$ gives true, as $\loc_2'$ can be reached for any valuation of~$\param_2$ and regardless of~$\param_1$ (it suffices to wait enough time in~$\loc_2$ so that the guard $\clock \geq \param_2$ becomes enabled).
	However, for the L/U-PTA in \cref{figure:LU:makeNonDet+PIPCombiInt}, $\EFuniv$ gives false: indeed, while $\loc_2'$ is clearly unreachable, $\loc_1'$ can only be reached if $\param_1 > 2$.
	Therefore, there exist valuations (typically $\param_1 \in [0,2]$) for which locations~$\{ \loc_1', \loc_2' \}$ are unreachable.
	
	In fact, it can be shown that, for the \PIPROBTA{}~$\piprobta$ in \cref{figure:LU:PIPROBTA}, the set of valuations for which the \IPROBTA{} is consistent is $\param_1 \in [0,2] \land \param_2 \geq 0$.
	The idea is to disable the transition to~$\loc_2$ using probabilities (\ie{} assigning 1 to~$\loc_1$ and 0 to~$\loc_2$), and to disable the transitions to~$\loc_3$ and~$\loc_4$ by tuning~$\param_1$.
\end{example}
\begin{theorem}\label{theorem:decidability-LU}
	The consistency-emptiness for L/U-\PIPROBTAs{} is decidable.
\end{theorem}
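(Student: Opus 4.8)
The plan is to prove that the Boolean value of \cref{equation:LU} is exactly the answer to consistency-emptiness, and that this value is effectively computable; decidability then follows. First I would settle the two computability ingredients. The set $\PIPCombiInt(\piprobta)$ is finite and computable: each interval distribution has only finitely many candidate supports $\States' \subseteq \States$, and deciding whether $\States' \in \CombiInt(\DistIPTA)$ reduces to an elementary linear-feasibility check on the interval endpoints (in particular $0$ must belong to $\DistIPTA(\state)$ for each $\state \notin \States'$, and the bounds of the states in~$\States'$ must admit a strictly positive distribution summing to~$1$). Moreover, since $\makeNonDet$ only reuses existing guards on the fresh edges it adds and otherwise erases the probabilistic structure, it never changes the polarity with which a parameter is compared to a clock; hence $\makeNonDet(\piprobta')$ is indeed an L/U-PTA whenever $\piprobta'$ is, so \EFuniv{} is applicable and decidable on it by~\cite{Andre18STTT}.

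The core of the argument is a characterisation lemma that I would prove next: for a fixed valuation~$\pval$, the \IPROBTA{}~$\valuate{\piprobta}{\pval}$ is consistent if and only if there exists $\piprobta' \in \PIPCombiInt(\piprobta)$ such that no location of $\makeAcc(\piprobta')$ is reachable in $\valuate{\makeNonDet(\piprobta')}{\pval}$. The key observations are: (i) whether an edge is inconsistent depends only on its interval distribution, not on~$\pval$, since the probabilities are non-parametric; (ii) by \cref{theorem:IPTA:consistency} together with the consistency algorithm of \cref{ss:algo:IPTA}, $\valuate{\piprobta}{\pval}$ is consistent exactly when one can pick a feasible support for each consistent distribution so that every remaining inconsistent edge is unreachable; and (iii) reaching the fresh target location added by $\makeNonDet$ for an inconsistent edge corresponds, after the standard replacement of probabilistic choice by non-determinism~\cite{AFS13}, precisely to that edge being able to fire (the fresh location is needed because the genuine target of an inconsistent edge may also be reachable along other paths, as with~$\loc_3$ in \cref{figure:LU:PIPROBTA}). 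For the forward direction I would take an implementation of $\valuate{\piprobta}{\pval}$, read off the support it uses on each reachable distribution to select the matching $\piprobta'$, and note that its inconsistent edges, being inconsistent already in $\piprobta$, must be unreachable in that implementation. For the converse I would use that selecting a feasible support only forces to~$0$ some probabilities whose intervals already contain~$0$, so that any implementation of $\valuate{\piprobta'}{\pval}$ is also an implementation of $\valuate{\piprobta}{\pval}$.

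Given the characterisation, consistency-emptiness states that $\valuate{\piprobta}{\pval}$ is inconsistent for \emph{all}~$\pval$, i.e.\ that for all~$\pval$ and all $\piprobta' \in \PIPCombiInt(\piprobta)$ some location of $\makeAcc(\piprobta')$ is reachable in $\valuate{\makeNonDet(\piprobta')}{\pval}$. Since both quantifiers are universal they commute, and the inner statement ``for all~$\pval$, some location of $\makeAcc(\piprobta')$ is reachable'' is by definition $\EFuniv(\makeNonDet(\piprobta'), \makeAcc(\piprobta'))$. This yields exactly the finite conjunction of \cref{equation:LU}. As each conjunct is a decidable \EF-universality query on an L/U-PTA and $\PIPCombiInt(\piprobta)$ is finite, the whole expression is computable, which establishes decidability.

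I expect the characterisation lemma to be the main obstacle, and within it the delicate point is step~(iii): arguing rigorously that $\makeNonDet(\piprobta')$ reaches a $\makeAcc$-location under~$\pval$ if and only if the corresponding inconsistent edge can fire in the support-restricted automaton $\valuate{\piprobta'}{\pval}$, and relating this firing to local inconsistency in the (extrapolated) probabilistic zone graph rather than merely in the syntactic automaton.
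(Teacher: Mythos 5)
Your proposal takes essentially the same route as the paper's own proof, which consists exactly of the two-directional argument that Equation~\ref{equation:LU} is equivalent to consistency-emptiness, followed by the decidability of \EF{}-universality for L/U-PTAs from~\cite{Andre18STTT} (finiteness and computability of $\PIPCombiInt(\piprobta)$ being left implicit); your pointwise characterisation lemma plus the commutation of the two universal quantifiers is a cleaner factorisation of that same equivalence, not a different method. The step you single out as delicate---relating the firing of an inconsistent edge to the (extrapolated) zone-graph semantics, and reading off \emph{one} support per syntactic distribution from an implementation that could a priori use different supports at different symbolic states sharing the same location---is precisely the part that the paper's proof glosses over as well, so your reconstruction is, if anything, more explicit than the original.
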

\begin{proof}
	Given an L/U-\PIPROBTA{}~$\piprobta$, we show that Equation~\ref{equation:LU} holds iff the consistency-emptiness holds for~$\piprobta$, \ie{} no parameter valuation~$\pval$ is such that $\valuate{\piprobta}{\pval}$ is consistent.
	\begin{itemize}
		\item[$\Rightarrow$]
			Assume Equation~\ref{equation:LU} holds.
			Then EF-universality is true for all possible combinations of probabilities set to~0 (given by $\PIPCombiInt(\piprobta)$).
			That is, for each of these potentially consistent models, for any valuation~$\pval$, it is always possible to reach at least one of the new locations added by $\makeNonDet$, and therefore one of the inconsistent edges in the original model.
			Therefore, from \cref{definition:consistency:IPROBTA}, $\valuate{\piprobta}{\pval}$ is inconsistent for all~$\pval$.
			Therefore the consistency-emptiness holds for~$\piprobta$.
		\item[$\Leftarrow$]
			Assume consistency-emptiness holds for~$\piprobta$, \ie{} no parameter valuation~$\pval$ is such that $\valuate{\piprobta}{\pval}$ is consistent.
			Then there is no way to tune the probabilities and to tune the timing parameters to avoid the inconsistent edges, and therefore to avoid the new locations added by $\makeNonDet$.
			Then for any $\piprobta' \in \PIPCombiInt(\piprobta)$, we have that
			\[\EFuniv(\makeNonDet(\piprobta'), \makeAcc(\piprobta'))\]
			holds.
			Then Equation~\ref{equation:LU} holds.
	\end{itemize}
	The result then follows from the decidability of the EF-universality problem for L/U-PTAs proved in~\cite{Andre18STTT}.
\end{proof}

\todo{un jour: Observe that, following a dual result, the decidability of the consistency-universality problem for L/U-\PIPROBTAs{} can also be shown.\ea{ça, on verra}}

The decidability of the emptiness in \cref{theorem:decidability-LU} does not necessarily mean that the exact synthesis can be achieved.
In fact, we show in the following result that the consistency-synthesis for L/U-\PIPROBTAs{} is intractable in practice, as the set of valuations cannot be represented using, \eg{} a finite union of polyhedra.

\begin{proposition}\label{proposition:intractability-LU}
	The result of the consistency-synthesis for L/U-\PIPROBTAs{} cannot be represented using any formalism for which the emptiness of the intersection is decidable.
\end{proposition}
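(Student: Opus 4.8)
The plan is to argue by contradiction, reducing from an undecidable problem on general (non-L/U) parametric timed automata and re-entering the L/U fragment through the classical parameter-doubling trick, combined with the inconsistent gadget already used in \cref{theorem:undecidability}. So I would suppose, for contradiction, that the consistency-synthesis result of every L/U-\PIPROBTA{} can be represented in some formalism~$\mathcal F$ for which emptiness of intersection (in particular, intersection with a polyhedron) is decidable. I would then reduce from the \emph{reachability-universality} problem for general PTAs---``is a given location reachable for \emph{every} parameter valuation?''---which is undecidable for general PTAs, in contrast with its decidability for L/U-PTAs (see the decidability landscape in \cite{Andre18STTT}).

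First I would build, from an arbitrary PTA~$\A$ with goal location~$\LocGoal$ and parameters $\param_1, \dots, \param_\ParamCard$, a doubled automaton~$\A'$: each $\param_i$ is split into a lower-bound copy~$\param_i^-$ and an upper-bound copy~$\param_i^+$, and every guard comparison $\clock \compOp \param_i$ is rewritten with the copy matching the sign of~$\compOp$. By construction $\A'$ is an L/U-PTA, and on the diagonal $D = \bigcap_{i} \{ \param_i^- = \param_i^+ \}$ the valuation $\valuate{\A'}{\pval'}$ coincides with $\valuate{\A}{\pval}$ whenever $\pval'(\param_i^-) = \pval'(\param_i^+) = \pval(\param_i)$ for all~$i$ (the original guard is reconstructed; this is the monotonicity mechanism of \cref{lemma:HRSV02:prop4.2}). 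I would then view~$\A'$ as a \PIPROBTA{} (all its transitions carrying probability~$1$) and append from~$\LocGoal$ a single unguarded probabilistic edge with one target and interval $[0.5, 0.5]$, which is an inconsistent edge in the sense of \cref{definition:inconsistent-edge}. Call~$\piprobta$ the resulting L/U-\PIPROBTA{}.

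The key correctness step reuses \cref{theorem:undecidability} and \cref{algo:iprobta}: if $\LocGoal$ is reachable in $\valuate{\A'}{\pval'}$, then the inconsistent edge is reachable through probability-$1$ transitions that cannot be zeroed out, so the inconsistency propagates back to the initial state and $\valuate{\piprobta}{\pval'}$ is inconsistent; if $\LocGoal$ is unreachable, no inconsistency is created and $\valuate{\piprobta}{\pval'}$ is consistent. Hence the consistency-synthesis result is exactly $S = \{ \pval' \mid \LocGoal \text{ is unreachable in } \valuate{\A'}{\pval'} \}$, that is, the \emph{complement} of the reachable set of~$\A'$. Consequently $S \cap D$ is empty iff $\LocGoal$ is reachable for every valuation on the diagonal, i.e.\ iff $\LocGoal$ is reachable in $\valuate{\A}{\pval}$ for \emph{all}~$\pval$---which is precisely reachability-universality of~$\A$. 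Since $D$ is a polyhedron (a conjunction of equalities, hence representable) and $\mathcal F$ decides emptiness of intersection, we could decide reachability-universality for an arbitrary general PTA, a contradiction.

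The main obstacle I would stress is twofold. First, one must notice that consistency-synthesis returns the complement of the reachable set, so the undecidable problem one lands on is the \emph{universality} variant rather than the emptiness variant; the gadget cannot be flipped to make consistency equivalent to reachability, because reaching a location never removes an already-reachable inconsistent edge. Second, one must understand why intersecting with~$D$ is indispensable: emptiness of~$S$ alone would amount to reachability-universality of the L/U-PTA~$\A'$, which is \emph{decidable}; it is precisely the intersection with the diagonal that destroys the L/U structure and recreates the undecidable general-PTA universality. This is exactly why the statement must rule out any formalism supporting decidable emptiness of intersection with such polyhedra.
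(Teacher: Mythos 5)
Your proof is correct, and its engine is the same as the paper's: the parameter-doubling construction (splitting each $\param_i$ into a lower-bound and an upper-bound copy, rewriting each comparison according to its sign, and recovering the original behavior on the diagonal $\bigwedge_i \param_i^l = \param_i^u$), followed by intersection with that diagonal and an emptiness test --- an argument the paper adapts from \cite{JLR15}. Where you genuinely diverge is the undecidable problem you reduce from. The paper applies the doubling to an \emph{arbitrary} \PIPROBTA{}, gadget-free, and invokes \cref{theorem:undecidability} as a black box: a representation of the synthesized set admitting decidable emptiness of intersection would decide consistency-emptiness of the original \PIPROBTA{}, a contradiction. You instead apply the doubling to an arbitrary PTA, re-attach the inconsistent $[0.5,0.5]$ gadget (in effect inlining the construction from the proof of \cref{theorem:undecidability}), and land on EF-universality for general PTAs, whose undecidability is an external result you must cite (it does hold; see \cite{Andre18STTT}). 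The paper's route buys brevity and self-containment: no gadget, no consistency-versus-reachability correspondence to re-establish, no dependency beyond its own earlier theorem. Your route buys precision on a point the paper leaves implicit: since consistency of the gadgeted model corresponds to \emph{un}reachability of $\LocGoal$, the synthesized set is the complement of the reachable valuations, so the contradiction must come from the \emph{universality} variant of reachability rather than the emptiness variant --- a distinction that the proof of \cref{theorem:undecidability}, as written, blurs. Your final observation is also correct and is exactly the right way to read the statement: emptiness of the synthesized set alone is harmless, since it amounts to \EFuniv{} on an L/U-PTA, which is decidable by \cite{Andre18STTT}; it is the intersection with the diagonal that destroys the L/U structure and re-creates general-PTA hardness, which is why the proposition must quantify over formalisms with decidable emptiness of \emph{intersection}.
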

\begin{proof}
	We adapt a reasoning from \cite{JLR15} originally showing that the synthesis for L/U-PTAs is intractable.
	Let us assume an arbitrary \PIPROBTA{} (not necessarily L/U).
	For each parameter~$\param_i$, let us create a parameter~$\param_i^l$ and a parameter~$\param_i^u$ (and delete $\param_i$).
	Then, let us replace
		each constraint $\clock \leq \param_i$ with $\clock \leq \param_i^u$,
		each constraint $\clock < \param_i$ with $\clock < \param_i^u$,
		each constraint $\clock \geq \param_i$ with $\clock \geq \param_i^l$,
		each constraint $\clock > \param_i$ with $\clock > \param_i^l$,
		and
		each constraint $\clock = \param_i$ with $\clock \geq \param_i^l \land \clock \leq \param_i^u$.
	We obtain an L/U-\PIPROBTA{}.
	Clearly, if $\param_i^l = \param_i^u$ for all~$i$, then the behavior of the L/U-\PIPROBTA{} is identical to that of the original \PIPROBTA{}.
	
	Now, assume that the result of the consistency-synthesis for L/U-\PIPROBTAs{} can be represented using a formalism for which the emptiness of the intersection is decidable.
	We can therefore synthesize all valuations for which the L/U-\PIPROBTAs{} is consistent using such a formalism.
	Then, let us intersect this result with $\bigwedge_{1 \leq i \leq \ParamCard} \param_i^l = \param_i^u$.
	Finally, let us check whether this intersection is empty.
	We are thus able to test the consistency-emptiness of the original \PIPROBTA{}---which contradicts~\cref{theorem:undecidability}.
\end{proof}

In the rest of the section, despite the negative results of \cref{theorem:undecidability,proposition:intractability-LU}, we will still attempt to address synthesis for the full class of \PIPROBTAs{}.

\subsection{A symbolic semantics for \PIPROBTAs{}}\label{ss:symbolic-semantics}

We equip \PIPROBTAs{} with a symbolic semantics, defined below.
Basically, it is inline with the symbolic semantics defined for parametric timed automata (see \eg{} \cite{ACEF09,JLR15}), with the addition of probabilistic intervals on the edges; as a consequence, the semantics becomes not an LTS, but an \IMDP{}.
Remark that this is a conservative extension of the symbolic semantics of \IPROBTA{} presented in \cref{definition:IPTA:symbolic-semantics}.

\begin{definition}[Symbolic semantics of a \PIPROBTA{}]
	Given a \PIPROBTA{} $\piprobta = (\Actions, \Loc, \locinit, \Clock, \Param, \TransitionsIPTA)$,
	the symbolic semantics of $\piprobta$ is given by the \IMDP{} $(\SymbStates, \symbstateinit, \TransitionsIPTA, \TransitionsMDP)$, with
	\begin{itemize}
		\item $\SymbStates = \{ (\loc, \C) \in \Loc \times \PZones \}$,  %
			$\symbstateinit = (\locinit, \timelapse{(\bigwedge_{1 \leq i\leq\ClockCard}\clock_i=0)} )$, %
		\item $((\loc, \C), \pedge, \DistIPTA') \in \TransitionsMDP$ if there exists $\pedge = (\loc,\guard,\action,\DistIPTA) \in \TransitionsIPTA$ such that for all $\loc' \in \Loc$, for all $\resets \subseteq \Clock$ such that $\DistIPTA(\resets, \loc') > 0$, 
		$\C' = \timelapse{\big(\reset{\C \land \guard}{\resets}\big )} $, %
		and $\DistIPTA'((\loc', \C')) = \DistIPTA(\resets, \loc')$.
	\end{itemize}
\end{definition}

Observe that, whenever a \PIPROBTA{} has no probabilistic choice (\ie{} is a PTA), then the \IMDP{} becomes a labeled transition system, and the symbolic semantics matches that of parametric timed automata.
We refer to the symbolic semantics of~$\piprobta$ as the \emph{parametric probabilistic zone graph} of~$\piprobta$.

Just as in parametric timed automata, the number of symbolic states in a \PIPROBTA{} can be infinite in general.

In parametric timed automata, the \emph{reachability condition} is the projection onto the parameters of a parametric zone (see~\cite{JLR15}).
It is well-known that, given a symbolic run of a parametric timed automaton leading to a symbolic state $(\loc, \C)$, there exists an equivalent concrete run iff $\param \models \projectP{\C}$ (see \eg{}~\cite{HRSV02}).
Since our definition of zones matches that of~\cite{HRSV02}, this results extends to \PIPROBTAs{} in a straightforward manner.

\begin{lemma}\label{lemma:HRSV02}
	Let~$\piprobta$ be a \PIPROBTA{}.
	Consider a run in the parametric probabilistic zone graph of~$\piprobta$ reaching state~$(\loc, \C)$.
	Let $\pval$ be a parameter valuation.
	Then, there exists an equivalent run in $\valuate{\piprobta}{\pval}$ iff $\pval \models \projectP{\C}$.
\end{lemma}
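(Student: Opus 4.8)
The plan is to reduce the statement to the corresponding well-known result for (non-probabilistic) parametric timed automata and then invoke it. The crucial observation is that the probabilistic intervals decorating the edges of~$\piprobta$ play no role whatsoever in the computation of symbolic states: in the symbolic semantics the successor zone is $\C' = \timelapse{(\reset{\C \land \guard}{\resets})}$, which is exactly the standard parametric zone successor used in the symbolic semantics of parametric timed automata, and the guards of~$\piprobta$ are ordinary guards. Hence I would first forget the probabilities, associating to~$\piprobta$ the underlying PTA~$\A$ whose edges are obtained by replacing each interval-valued probabilistic edge $(\loc,\guard,\action,\DistIPTA)$ with the set of ordinary edges carrying outcome $(\resets,\loc')$ for every $(\resets,\loc')$ with $\DistIPTA(\resets,\loc')>0$. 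Note that $\A$ shares the clocks, parameters, guards and resets of~$\piprobta$, so $\valuate{\A}{\pval}$ is a timed automaton for every~$\pval$.

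Second, I would establish a one-to-one correspondence between symbolic runs. By \cref{def:PIPTA} and the definition of the symbolic semantics of~$\piprobta$, a transition $(\loc,\C) \to (\loc',\C')$ of the parametric probabilistic zone graph arising from an edge $\pedge = (\loc,\guard,\action,\DistIPTA)$ and an outcome $(\resets,\loc')$ with $\DistIPTA(\resets,\loc')>0$ is present exactly when $\C' = \timelapse{(\reset{\C \land \guard}{\resets})}$; this is precisely the condition for the symbolic transition $(\loc,\C)\to(\loc',\C')$ in the parametric zone graph of~$\A$ labelled by the corresponding ordinary edge. Consequently a symbolic run of~$\piprobta$ reaching $(\loc,\C)$ exists if and only if the matching symbolic run of~$\A$ reaches the same $(\loc,\C)$, with identical sequences of locations, guards, resets and intermediate zones.

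Third, I would transfer the notion of concrete run. A concrete run of $\valuate{\piprobta}{\pval}$ follows, at each step, a delay, an edge, and an outcome $(\resets,\loc')$ with positive probability; its underlying timed behaviour—states $(\loc,\clockval)$, delays, guards and resets—is exactly that of a run of the timed automaton $\valuate{\A}{\pval}$, the probabilities being irrelevant to whether the run exists. Thus concrete runs of $\valuate{\piprobta}{\pval}$ matching a given symbolic run are in bijection with concrete runs of $\valuate{\A}{\pval}$ matching the corresponding symbolic run of~$\A$. I would then simply invoke the classical result for parametric timed automata recalled just before the lemma (\cite{HRSV02,JLR15}): a symbolic run of~$\A$ reaching $(\loc,\C)$ admits an equivalent concrete run in $\valuate{\A}{\pval}$ if and only if $\pval \models \projectP{\C}$. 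Combining the two correspondences yields the claim for~$\piprobta$ in both directions.

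The main obstacle is not mathematical depth but precision of the reduction: I must argue convincingly that the interval probabilities never influence reachability—they only affect the later consistency question—so that the existence of an equivalent concrete run is governed solely by the timing constraints encoded in~$\C$. Once the bijection between the (symbolic and concrete) runs of~$\piprobta$ and those of its underlying PTA~$\A$ is made explicit, the result follows directly from the cited PTA theorem, which is exactly why the lemma statement describes the extension as straightforward.
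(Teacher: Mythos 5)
Your proof is correct and takes essentially the same route as the paper, which in fact gives no explicit proof at all: it merely observes that its definition of zones matches that of \cite{HRSV02} and asserts that the classical reachability-condition result ``extends to \PIPROBTAs{} in a straightforward manner''. Your argument---erasing the interval probabilities to obtain an underlying PTA, matching symbolic and concrete runs step by step (noting that probabilities never affect run existence), and then invoking the \cite{HRSV02} result---is precisely the detailed version of that claimed straightforward extension.
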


By equivalent run, we mean (just as for parametric timed automata) an identical discrete structure (locations and edges).

\begin{table}
\centering
\footnotesize

	\begin{tabular}{| c | c | c| c|}
		\hline
		\cellHeader{State} & \cellHeader{Location} & \cellHeader{$\C$} & \cellHeader{$\projectP{\C}$} \\
		\hline
		$\symbstate_0$ & $\loc_0$ & $x = y \land x \geq 0 \land \param \geq 0$ & $\param \geq 0$ \\
		\hline
		$\symbstate_1$ & $\loc_1$ & $0 \leq x - y < 2 \land y \geq 0 \land \param \geq 0$ & $\param \geq 0$ \\
		\hline
		$\symbstate_2$ & $\loc_2$ & $0 \leq y - x < 2 \land x \geq 0 \land \param \geq 0$ & $\param \geq 0$ \\
		\hline
		$\symbstate_3$ & $\loc_3$ & $2 \leq x - y \leq \param \land y \geq 0 $ & $\param \geq 2$ \\
		\hline
		$\symbstate_4$ & $\loc_4$ & $x = y \land x \geq 0 \land \param \geq 2$ & $\param \geq 2$ \\
		\hline
		$\symbstate_5$ & $\loc_5$ & $0 \leq y - x \leq 1 \land x \geq 1 \land \param \geq 0$ & $\param \geq 0$ \\
		\hline
		$\symbstate_6$ & $\loc_2$ & $1 \leq y - x \leq 2 \land x \geq 0 \land \param \geq 0$ & $\param \geq 0$ \\
		\hline
		$\symbstate_7$ & $\loc_5$ & $y \geq 2 \land y = x + 1 \land \param \geq 0$ & $\param \geq 0$ \\
		\hline
		$\symbstate_8$ & $\loc_2$ & $y \geq 2 \land y = x + 2 \land \param \geq 0$ & $\param \geq 0$ \\
		\hline
	\end{tabular}
	
	\caption{Description of the states in \cref{figure:example:IMDP}}
	\label{table:pzones:description}
\end{table}
\begin{example}
	The parametric probabilistic zone graph of the \PIPROBTA{} in\linebreak[4] \cref{figure:example:PIPROBTA} is the \IMDP{} given in \cref{figure:example:IMDP}.
	The symbolic states $\symbstate_i = (\loc_i, \C_i)$ are expanded in \cref{table:pzones:description}.
	In addition, we also give the reachability condition of each state, \ie{} the projection onto the parameters of the zone ($\projectP{\C}$).
\end{example}
\subsection{A construction for consistency-synthesis for \PIPROBTAs{}}\label{ss:synthesis}

Unlike for \IPROBTAs{} / \IMDPs{} where inconsistent states can only be
avoided by enforcing their incoming probabilities to $0$, there are
two ways of avoiding inconsistent states in \PIPROBTAs{}. Indeed, while
imposing a $0$~probability to all transitions going to inconsistent
states is a safe choice, it is also possible to avoid inconsistent
states by cleverly choosing parameter values such that the guards of
transitions potentially going to these states are never satisfied.

The construction we propose for synthesizing parameter valuations
ensuring consistency of a given \PIPROBTA{} is based on the following
observation: Since parameters only occur in transition guards, the
choice of parameter values cannot interfere with the choice of
probability distributions matching (or not) the specified
intervals.
That comes from the fact that, given a state~$\symbstate$, all successors of this state via a given transition have the same parameter constraint (this would not hold with invariants).
As a consequence, states that can be made unreachable
through probabilistic choice can be made so regardless of the choice
of parameter values.

\paragraph{Notations}
We first introduce a few notations to make our construction more compact.
Given an \IMDP{}~$\imdp = (\States, \stateinit, \TransitionsIPTA, \TransitionsMDP)$ (representing the semantics of a \PIPROBTA{}~$\piprobta$),
let $\TOut(\symbstate)$ denote the set of transitions of source~$\symbstate$, \ie{} $\TOut(\symbstate)=\{ (\symbstate,\pedge, \DistIMDP) \in \TransitionsMDP\}$.

Given a transition $(\symbstate,\pedge, \DistIMDP) \in \TransitionsMDP$, we may want to forbid this transition; recall that the guard (in the original \PIPROBTA{}) is the same for all targets, as there is a single guard per interval distribution.
As we have no invariants, all target states of a given transition have the same reachability condition (\ie{} $\projectP{\C'}$, for a target $\symbstate' = (\loc', \C')$).
Therefore, in order to forbid a transition, it suffices to negate the reachability condition of any of the target states of this transition.
Let $\ForbidGuard(\DistIMDP)$ denote this result, \ie{} $\ForbidGuard(\DistIMDP) = \neg \projectP{\C'}$, where $(\loc', \C')$ is an (arbitrary) target state of~$\DistIMDP$.

Finally, recall that a disjunction over an empty set of clauses is by definition false. Therefore, we use $\bigvee_i \K_i$ to denote the union over a set that returns the usual union of~$\K_i$ for all $i$ in the set if the set is non-empty, or $\KFalse$ if the set is empty. Similarly, $\bigwedge_i \K_i$ denotes the intersection over a set that returns the usual intersection of~$\K_i$ for all $i$ in the set if the set is non-empty, or $\KTrue$ if the set is empty.

\begin{example}
	Consider the \IMDP{} in \cref{figure:example:IMDP}, which is
        the zone graph of the \PIPROBTA{} from
        \cref{figure:example:PIPROBTA}. Recall that the description of
        the symbolic states of the \IMDP{} from
        \cref{figure:example:IMDP} is given in \cref{table:pzones:description}.
        We illustrate the constructions for $\TOut$ and $\ForbidGuard$ given above.

Clearly, there is only one outgoing transition from state $\symbstate_1$, which is labeled with $\pedge_2$. As a consequence, we have $\TOut(\symbstate_1) = (\symbstate_1,\pedge_2,\DistIMDP)$ with $\DistIMDP(\symbstate_3) = [0,0.2]$, $\DistIMDP(\symbstate_4) = [0,0.3]$, and $\DistIMDP(\symbstate_i) = [0,0]$ for $i \notin \{3,4\}$.

Remark that, as explained above, all the states that are reachable through $\DistIMDP$ have the same reachability condition (given in \cref{table:pzones:description}). As a consequence, we have $\ForbidGuard(\DistIMDP) = \lnot (\param \ge 2) \equiv \param < 2$.
\end{example}

We now propose a characterization of the set of parameter valuations
that ensure consistency of a given \PIPROBTA{} under the assumption
that its parametric probabilistic zone graph is finite.

Let $\piprobta$ be a \PIPROBTA{}, and let~$\imdp$ be its parametric probabilistic zone graph. Assume $\imdp$ is finite with state space $\SymbStates = \{\symbstate_0,\dots,\symbstate_n\}$.
Consider the formula $cons(c_{s_0},...,c_{s_n})$ defined as:

\[(c_{s_0}=\top) \wedge  \bigwedge_{\symbstate \in \SymbStates}
\bigwedge_{(\symbstate,\pedge, \DistIMDP) \in \TOut(\symbstate) } 
\left( \neg c_\symbstate\vee \ForbidGuard(\DistIMDP) \lor \bigvee_{\States' \in \CombiInt(\DistIMDP)}  \bigwedge_{\symbstate' \in \States' \setminus\{s\}}c_{\symbstate'}\right)\text{.}\]

Intuitively in this formula the variable $c_\symbstate$ represents whether
state $\symbstate$ can be reachable in an implementation. Recall that this can
only be true if $\symbstate$ is consistent. As a consequence, the formula can only be true
when the valuation of the parameters is coherent with the
consistent states. Indeed, this formula ensures that the initial state
is reachable and that, for any state~$\symbstate$ and any outgoing transition of this state, either:
\begin{itemize}
	\item the source state $\symbstate$ is not reachable ($\neg c_\symbstate$), or
	\item the transition is disabled due to the valuation of the parameters ($\ForbidGuard(\DistIMDP)$), or
	\item the transition is enabled and thus there must exist a feasible support for which all reachable states are also consistent. 
\end{itemize}

The set of all solutions for the consistency synthesis problem is thus
given as the set of solutions (in terms of parameter valuations) of
the equation:
\begin{equation}\label{eq:conssynt}
\bigvee_{(c_{\symbstate_0},\dots,c_{\symbstate_n})\in \{\top,\bot\}^{n+1}} cons(c_{s_0},\dots,c_{\symbstate_n})
\end{equation}

In the following, this procedure (\ie{} solving equation
(\ref{eq:conssynt})) is called $\PIPTAsynth$.
The intuition behind procedure $\PIPTAsynth$ is that we
``guess'' the states that will be present in the implementation
through the first disjunction (states for which $c_\symbstate
= \top$), and then verify using $cons(c_{s_0},\dots,c_{\symbstate_n})$
that the resulting implementation is well-defined.

Obviously, the empty set of parameter valuations is always a solution
to equation (\ref{eq:conssynt}). Indeed, in this case, one can set
$c_{\symbstate_0} = \top$ and $c_{\symbstate} = \bot$ for all
$\symbstate \ne \symbstate_0$ and then $\ForbidGuard(\DistIMDP)$ is
true for all outgoing transitions of $\symbstate_0$. If this is the
only solution, then the \PIPROBTA{} $\piprobta$ is
inconsistent. Otherwise, $\piprobta$ is consistent.

We first illustrate our construction on an example and then show that
it is sound and complete when the parametric probabilistic zone graph
of $\piprobta$ is finite.

\begin{example}
  \label{example:PIPTAsynth}
	We now apply our construction to the IMDP
	from \cref{figure:example:IMDP}. Recall that parameters
	are non-negative, therefore $\param < 0 \equiv \KFalse$.
	First observe that either $\symbstate_1$
        has to be non-reachable ($c_{\symbstate_1} = \bot$) or its
        outgoing transition needs to be forbidden, because there is no
        set $\States' \in \CombiInt(\DistIMDP)$. As a consequence, we
        obtain the following constraint: $\neg c_{\symbstate_1}
        \vee (\param < 2)$. There are no constraints for states
        $\symbstate_5, \symbstate_7$ and $\symbstate_8$ as they have
        no outgoing transitions. For state $\symbstate_6$, we have the
        following constraint:

        $$\neg c_{\symbstate_6} \vee (\param < 0) \vee
        c_{\symbstate_7} \vee (c_{\symbstate_7} \land
        c_{\symbstate_8}) \equiv \neg c_{\symbstate_6} \vee (\param
        <0) \vee c_{\symbstate_7} \equiv \neg c_{\symbstate_6} \vee c_{\symbstate_7}$$

        Similarly, for state $\symbstate_2$, we obtain:

        $$\neg c_{\symbstate_2} \vee (\param
        <0) \vee c_{\symbstate_5} \equiv \neg c_{\symbstate_2} \vee c_{\symbstate_5}$$

        Finally, state $\symbstate_0$ yields the following:

        $$\neg c_{\symbstate_0} \vee (\param
        <0) \vee c_{\symbstate_1} \equiv \neg c_{\symbstate_0} \vee c_{\symbstate_1}$$

        Clearly, when put together in equation (\ref{eq:conssynt}), we
        obtain the following (after simplifications):

        $$\bigvee_{(c_{\symbstate_0},\dots,c_{\symbstate_n})\in
          \{\top,\bot\}^{n+1}} (c_{\symbstate_0}) \land (c_{\symbstate_1}) \land (\param < 2)
        \land (\neg
        c_{\symbstate_2} \vee c_{\symbstate_5}) \land (\neg
        c_{\symbstate_6} \lor c_{\symbstate_7})$$

        The solutions are therefore all parameter valuations such that
        $\param < 2$, and can be obtained for all assignments of
        $c_{\symbstate}$ such that $c_{\symbstate_0} =
        c_{\symbstate_1} = \top$, $c_{\symbstate_2} \Rightarrow
        c_{\symbstate_5}$ and $c_{\symbstate_6} \Rightarrow
        c_{\symbstate_7}$.

\end{example}

We now prove that our construction is indeed correct whenever the
parametric probabilistic zone graph of the given \PIPROBTA{}
$\piprobta$ is finite.

\begin{proposition}[Correctness]
	Let $\piprobta$ be a \PIPROBTA{}, and let~$\imdp$ be its parametric probabilistic zone graph.
	Assume $\imdp$ is finite.
	Assume that the set of parameter valuations satisfying
        equation (\ref{eq:conssynt}) is not empty and let $\pval$ be
        such a parameter valuation.
	
	Then $\valuate{\piprobta}{\pval}$ is consistent.
\end{proposition}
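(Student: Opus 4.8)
The plan is to establish consistency of $\valuate{\piprobta}{\pval}$ by exhibiting an explicit implementation of its probabilistic zone graph. First I would fix a valuation $\pval$ satisfying equation~(\ref{eq:conssynt}); by definition of the outer disjunction this provides a Boolean assignment $(c_{\symbstate_0},\dots,c_{\symbstate_n}) \in \{\KTrue,\KFalse\}^{n+1}$ under which $cons(c_{\symbstate_0},\dots,c_{\symbstate_n})$ evaluates to true at $\pval$. Write $K = \{ \symbstate \in \SymbStates \mid c_\symbstate = \KTrue \}$ for the set of \emph{kept} states; by the first conjunct of $cons$ we have $\symbstate_0 \in K$. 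By \cref{theorem:IPTA:consistency}, it suffices to show that the probabilistic zone graph $\imdp_\pval$ of the \IPROBTA{} $\valuate{\piprobta}{\pval}$ is consistent. Using \cref{lemma:HRSV02}, I would identify $\imdp_\pval$ with the sub-\IMDP{} of $\imdp$ obtained by keeping exactly the states $(\loc,\C)$ with $\pval \models \projectP{\C}$ (valuating the parameters inside the zones) together with their transitions; a transition $(\symbstate,\pedge,\DistIMDP)$ survives this valuation precisely when $\ForbidGuard(\DistIMDP)$ is \emph{false} at $\pval$, since, in the absence of invariants, all target states of a given transition share the same reachability condition $\projectP{\C'}$.

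Next I would build an \MDP{} $\mdp$ implementing $\imdp_\pval$, guided by the assignment. Its states are those members of $K$ reachable from $\symbstate_0$ via the transitions chosen below. Consider a kept state $\symbstate \in K$ and one of its enabled transitions $(\symbstate,\pedge,\DistIMDP) \in \TOut(\symbstate)$, so that $\ForbidGuard(\DistIMDP)$ is false at $\pval$. The corresponding clause of $cons$ then forces $\bigvee_{\States' \in \CombiInt(\DistIMDP)} \bigwedge_{\symbstate' \in \States' \setminus \{\symbstate\}} c_{\symbstate'}$ to hold, so there is a feasible support $\States' \in \CombiInt(\DistIMDP)$ with $\States' \setminus \{\symbstate\} \subseteq K$; since $\symbstate \in K$ we obtain $\States' \subseteq K$. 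By \cref{definition:FS} this support comes with a genuine punctual distribution $\mu \in \dist(\States)$ that implements $\DistIMDP$ and whose support is exactly $\States'$. I would add $(\symbstate,\pedge,\mu)$ to $\mdp$. Because the transition is enabled, all of its targets are reachable under $\pval$ and hence are genuine states of $\imdp_\pval$; moreover each such target lies in $K$, so the construction closes up and every newly reached state is again a kept, reachable state admitting the same treatment.

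Finally I would verify that $\mdp \models \imdp_\pval$, taking $\RelSimMDP$ to be the identity on the reachable members of $K$ and, for each added transition, the identity coupling $\correspondance$. The three conditions of $\preceq_{\RelSimMDP}$ in \cref{def:imdp-impl} reduce to the facts that $\mu$ sums to one, that $\mu(\symbstate') \in \DistIMDP(\symbstate')$ for every target (which holds by the choice of $\mu$), and that the identity relates only kept reachable states to themselves. Matching is exact: every transition of $\mdp$ arises from an enabled transition of $\imdp_\pval$, and conversely every enabled transition out of a kept state received a distribution, so both clauses of \cref{def:imdp-impl} are met. Hence $\imdp_\pval$ is consistent and, by \cref{theorem:IPTA:consistency}, so is $\valuate{\piprobta}{\pval}$. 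The main obstacle, and the step requiring most care, is the bridge in the first paragraph: arguing cleanly from \cref{lemma:HRSV02} that the valuated zone graph is exactly the $\pval$-reachable sub-\IMDP{} of $\imdp$, that the disabled transitions (those with $\ForbidGuard$ true) are precisely the ones absent after valuation, and that the self-exclusion $\States' \setminus \{\symbstate\}$ in $cons$ is harmless because a self-looping target is already guaranteed to be kept.
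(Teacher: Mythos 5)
Your proposal is correct and takes essentially the same route as the paper's proof: extract a Boolean assignment making $cons$ true at $\pval$, use the clauses to pick, for every transition out of a kept state that is not disabled by $\ForbidGuard$, a concrete distribution supported on kept states, assemble these into an \MDP{} implementing the valuated zone graph, and conclude via \cref{theorem:IPTA:consistency}. You are in fact more careful than the paper at the bridging step---justifying via \cref{lemma:HRSV02} that $\valuate{\imdp}{\pval}$ coincides with the probabilistic zone graph of $\valuate{\piprobta}{\pval}$, and handling the self-exclusion $\States' \setminus \{\symbstate\}$ explicitly---points the paper's proof leaves implicit.
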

\begin{proof}
Let $\pval$ be a solution of equation (\ref{eq:conssynt}). As a
consequence, there must exist an assignment
$\kappa_{s_0},...,\kappa_{s_n}$ of the variables $c_\symbstate$ such
that $\pval(cons(\kappa_{s_0},\dots,\kappa_n))$. Moreover, for each state $\symbstate$ such that $\kappa_\symbstate = \top$, the following equation is satisfied:

$$ \bigwedge_{(\symbstate,\pedge, \DistIMDP) \in \TOut(\symbstate) } 
\left( \pval(\ForbidGuard(\DistIMDP)) \lor \bigvee_{\States' \in \CombiInt(\DistIMDP)}  \bigwedge_{\symbstate' \in \States' \setminus\{s\}}\kappa_{\symbstate'}\right)$$

Therefore, for each
$(\symbstate,\pedge, \DistIMDP) \in \TOut(\symbstate)$, either
$\pval(\ForbidGuard(\DistIMDP))$ is true (in this case the transition
cannot be taken due to timing parameters and is therefore absent from
$\pval(\imdp)$), or there exists a distribution $\DistMDP$ matching
$\DistIMDP$ such that all states $\symbstate'$ such that
$\DistMDP(\symbstate') >0 $ are such that $\kappa_{\symbstate'}
= \top$.

We can therefore construct an MDP whose states are exactly the states
$\symbstate$ such that $\kappa_{\symbstate} = \top$, and whose
transitions are given the distributions $\DistMDP$ defined above, that
clearly satisfies the IMDP $\pval(\imdp)$.

By \cref{theorem:IPTA:consistency}, we can therefore conclude that $\piprobta$ is consistent.
\end{proof}

We now show that our construction is complete whenever the
parametric probabilistic zone graph of the given \PIPROBTA{}
$\piprobta$ is finite.

\begin{proposition}[Completeness]
	Let $\piprobta$ be a \PIPROBTA{}, and let~$\imdp$ be its
	parametric probabilistic zone graph.  Assume $\imdp$ is
	finite.  Let $\pval$ be such that $\valuate{\piprobta}{\pval}$
	is consistent. Then $\pval$ is a solution of equation
	(\ref{eq:conssynt}).
\end{proposition}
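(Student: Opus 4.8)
The plan is to argue constructively: starting from the consistent valuated model, I would exhibit one Boolean assignment of the variables $c_{\symbstate_0},\dots,c_{\symbstate_n}$ that satisfies $cons$ under $\pval$. Producing a single such assignment immediately makes the disjunction in equation (\ref{eq:conssynt}) true under $\pval$, hence shows that $\pval$ is a solution. Since $\imdp$ is assumed finite, all the conjunctions and disjunctions, as well as the reachability computations used below, range over finite sets and are well defined.

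First I would transfer consistency from the \PIPROBTA{} down to its zone graph. Because $\valuate{\piprobta}{\pval}$ is consistent, \cref{theorem:IPTA:consistency} yields that the probabilistic zone graph of $\valuate{\piprobta}{\pval}$ is consistent. By the conservative-extension remark preceding the parametric symbolic semantics, together with \cref{lemma:HRSV02}, this zone graph coincides with $\imdp$ restricted to the states $\symbstate = (\loc,\C)$ whose reachability condition holds, i.e.\ those with $\pval \models \projectP{\C}$ (the transitions whose guard is falsified by $\pval$ disappear). I denote this restricted \IMDP{} by $\pval(\imdp)$. Applying \cref{lemma:structure} to $\pval(\imdp)$, I obtain an \MDP{} $\mdp$ with the \emph{same structure} as $\pval(\imdp)$ such that $\mdp \models \pval(\imdp)$, the witnessing relation $\RelSimMDP$ being the identity.

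Next I would define the assignment: set $\kappa_\symbstate = \top$ exactly for the states $\symbstate$ reachable in $\mdp$ from the initial state along positive-probability transitions, and $\kappa_\symbstate = \bot$ otherwise (so in particular every state not surviving in $\pval(\imdp)$ gets $\bot$). I then verify the clauses of $cons$ under $\pval$. The clause $c_{\symbstate_0} = \top$ holds since the initial states correspond under the same-structure implementation. For the remaining clauses, fix a state $\symbstate$ with $\kappa_\symbstate = \top$ and a transition $(\symbstate,\pedge,\DistIMDP) \in \TOut(\symbstate)$. If the common target reachability condition fails under $\pval$, then $\pval \models \ForbidGuard(\DistIMDP)$ and the second disjunct holds. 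Otherwise the transition survives in $\pval(\imdp)$, so by the second bullet of \cref{def:imdp-impl} and sameness of structure, $\mdp$ carries a corresponding transition $(\symbstate,\pedge,\DistMDP)$ with $\DistMDP \preceq_{\RelSimMDP} \DistIMDP$. Its support $\States' = \support(\DistMDP)$ is then a feasible support, i.e.\ $\States' \in \CombiInt(\DistIMDP)$, because $\DistMDP$ is a punctual consistent implementation of $\DistIMDP$; and every $\symbstate' \in \States'$ is reachable in $\mdp$ (being a positive-probability successor of the reachable $\symbstate$), hence $\kappa_{\symbstate'} = \top$. Thus the third disjunct, instantiated on this $\States'$, holds—the exclusion of the source state $\symbstate$ harmlessly trivialises the conjunction for self-loops. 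This establishes $\pval(cons(\kappa_{s_0},\dots,\kappa_{s_n}))$ and therefore that $\pval$ solves equation (\ref{eq:conssynt}).

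The step I expect to require the most care is the identification of $\pval(\imdp)$ with the probabilistic zone graph of $\valuate{\piprobta}{\pval}$: I must justify, via \cref{lemma:HRSV02} and the absence of invariants, that valuating the parametric zone graph coincides with building the zone graph of the valuated automaton, and in particular that a transition of $\imdp$ survives valuation precisely when its (common) target reachability condition is satisfied by $\pval$. The remaining work—translating the support of the implementing \MDP{} distribution into a membership $\States' \in \CombiInt(\DistIMDP)$ and tracking reachability in $\mdp$—is then routine.
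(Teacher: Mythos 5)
Your proof is correct and takes essentially the same route as the paper's: transfer consistency to the zone graph (\cref{theorem:IPTA:consistency}), extract a same-structure implementation \MDP{} (\cref{lemma:structure}), set $c_{\symbstate} = \top$ exactly on the states reachable in that \MDP{}, and discharge each clause of $cons$ either via $\ForbidGuard$ or by taking the support of the implementing distribution as the feasible support. If anything, you are more explicit than the paper on two points it glosses over: the identification of $\pval(\imdp)$ with the probabilistic zone graph of $\valuate{\piprobta}{\pval}$, and the case of a transition out of a surviving state whose guard is falsified by~$\pval$, where only the $\ForbidGuard$ disjunct saves the clause.
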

\begin{proof}
  Since $\valuate{\piprobta}{\pval}$ is consistent, there must exist,
  by \cref{theorem:IPTA:consistency,lemma:structure}, an \MDP{} $\mdp$ with the same
  structure as $\pval(\imdp)$ that satisfies $\pval(\imdp)$.

  We now propose a valuation of the variables $c_{\symbstate}$ and
  show that, for this valuation, equation (\ref{eq:conssynt}) is true.

  For all state $\symbstate$ in $\imdp$, let $c_{\symbstate} = \top$
  if $\symbstate$ is reachable (and present) in $\mdp$, and $\bot$ otherwise. We now
  show that the equation
  $\pval(cons(c_{\symbstate_0},\dots,c_{\symbstate_n}))$ is
  true. Clearly, we have $c_{\symbstate_0} = \top$, so we just have to
  show that for all state $\symbstate$ in $\imdp$, 
  $$\bigwedge_{(\symbstate,\pedge, \DistIMDP) \in \TOut(\symbstate) } 
\left( \neg c_\symbstate\vee \pval(\ForbidGuard(\DistIMDP)) \lor
\bigvee_{\States' \in \CombiInt(\DistIMDP)}  \bigwedge_{\symbstate'
  \in \States' \setminus\{s\}}c_{\symbstate'}\right) = \mathsf{true}$$

If $\symbstate$ is such that $c_{\symbstate} = \bot$, then this is
trivial. Otherwise, let $(\symbstate,\pedge, \DistIMDP) \in
\TOut(\symbstate)$. Clearly, since the transition is present in
$\pval(\imdp)$, we have $\pval(\ForbidGuard(\DistIMDP)) =
\mathsf{false}$. Moreover, since $\symbstate$ is present (and
reachable) in $\mdp$, there is a transition $(\symbstate,
\pedge,\DistMDP)$ in $\mdp$ such that $\DistMDP \preceq_{\RelSimMDP}
\DistIMDP$ for the witnessing relation $\RelSimMDP$. As a consequence,
the set $\States' = \{\symbstate' \ \mid \ \DistMDP(\symbstate') >0\}$ is
such that $\States' \in \CombiInt(\DistIMDP)$. Moreover, all states in
$\States'$ are reachable by
construction, thus $\symbstate' \in \States' \Rightarrow c_{\symbstate'} =
\top$. Therefore,
$\pval(cons(c_{\symbstate_0},\dots,c_{\symbstate_n}))$ is true.
\end{proof}

\begin{remark}
Our construction is based on the parametric probabilistic zone
graph. It is sound and complete when this zone graph is
finite. However, the resulting equation contains infinite conjunctions
and disjunctions when the parametric probabilistic zone graph is
infinite, rendering it useless in practice in this case.

However, in practice, one could truncate the parametric probabilistic
zone graph up to a certain depth, which would allow computing an
approximation of the set of parameter valuations ensuring consistency.
\end{remark}

\bd{Exemple viré, j'ai illustré la construction directement sur le
  PIPTA du début.}
\subsection[Parametric Consistent Reachability]{Parametric Consistent Reachability}\label{ss:consistent-reachability}

A model that is inconsistent is a model that can be considered as ill-formed; therefore, synthesizing valuations for a model to be consistent is an important problem.
However, it may not be seen as the final problem a system designer aims at solving.
More common problems are reachability, safety, unavoidability, or more complex properties expressed, \eg{} using logic formulas.

In this section, we illustrate how consistency synthesis can be combined with existing synthesis algorithms.
As a proof of concept, we consider the following parametric consistent reachability synthesis problem:

\defProblem
	{parametric consistent reachability synthesis}
	{A \PIPROBTA{}~$\piprobta$, a set of goal locations~$\LocGoal$}
	{find all parameter valuations $\pval$ for which $\valuate{\piprobta}{\pval}$ is consistent and at least one location in~$\LocGoal$ is reachable in $\valuate{\piprobta}{\pval}$.}

The corresponding emptiness problem, \ie{} the emptiness of the valuation set for which a \PIPROBTA{} is consistent and at least one goal location is reachable, is clearly undecidable: it suffices to consider a \PIPROBTA{} with no probabilities.
This gives a PTA, for which reachability emptiness is undecidable: so, clearly, a PTA is always consistent and therefore consistent reachability emptiness reduces to reachability emptiness, which is undecidable, as shown in~\cite{AHV93}.

Still, we will propose a method to perform parametric consistent reachability synthesis for \PIPROBTA{}; again, this method only works when the parametric probabilistic zone graph is finite.

First, let us rule out the following naive method.  We could have
considered the PTA obtained from a~\PIPROBTA{} by removing all
probabilities, then we could have synthesized valuations for which
reachability of location~$\LocGoal$ is ensured, which can be obtained
using the algorithm described in, \eg{} \cite{JLR15}, and that we will
call~\EFsynth{}.  This gives a constraint~$\Kreach$.  Then, we could
synthesize the constraint~$\Kcons$ obtained from $\PIPTAsynth$.
Finally, we could have considered the intersection
$\Kreach \land \Kcons$. However this is not satisfactory (and wrong),
as shown in the example below.

\begin{example}
	Consider the \PIPROBTA{} in \cref{figure:counter-example-EF}.
	Assume $\LocGoal = \{ \loc_1 \}$.
	Clearly, $\loc_1$ is inconsistent, as its successor has an interval distribution that admits no implementation.
	$\loc_1$ can easily be discarded by assigning it a 0-probability from~$\loc_0$ while keeping the interval consistent.
	
	On this \PIPROBTA{} without probabilities, \EFsynth{} will
	output $\KTrue$ as any parameter valuation may
	reach~$\loc_1$.  \PIPTAsynth{} will also output $\KTrue$.  The
	intersection gives~$\KTrue$, while the set of valuations for
	which $\loc_1$ is reachable and the system is consistent is
	empty.
\end{example}
\begin{figure}
	
	{\centering
		
	\newcommand{\hscale}{2}
	\newcommand{\vscale}{1.5}
	\begin{tikzpicture}[node distance=2cm, auto, ->, >=stealth']
		\node[location0, initial] (l0) at (0,0) {$\loc_0$};
		\node [probchoice] (l0choice) at (1*\hscale, 0*\vscale) {};

		\node[location1] (l1) at (2*\hscale, +.7*\vscale) {$\loc_1$};
		\node[location2] (l2) at (2*\hscale, 0) {$\loc_2$};
		\node[location3] (l3) at (2*\hscale, -.7*\vscale) {$\loc_3$};

		\node [probchoice] (l1choice) at (3*\hscale, .7*\vscale) {};

		\node[location4] (l4) at (4*\hscale, +.7*\vscale) {$\loc_4$};

		\path
			(l0) edge node[below]{\begin{tabular}{c}\footnotesize $\clock \geq \param$ \\ $a$\end{tabular}} (l0choice) %
		
			(l0choice) edge[probedge] node[above, xshift=-5]{\footnotesize \begin{tabular}{c}$[0, 0.6]$\end{tabular}} (l1)
			(l0choice) edge[probedge] node[above, xshift=5]{\footnotesize $[0, 0.6]$} (l2)
			(l0choice) edge[probedge] node[below]{\footnotesize \begin{tabular}{c}$[0, 0.9]$\end{tabular}} (l3)
			
			(l1) edge node[below]{} (l1choice)
			(l1choice) edge[probedge] node[below]{\footnotesize $[0.2, 0.4]$} (l4)
		;

		\AngleA{l0choice}{l1}{l3};

		\end{tikzpicture}
	
	}

	\caption{A \PIPROBTA{} for which no valuation allows for consistent reachability of~$\loc_1$}
	\label{figure:counter-example-EF}
\end{figure}
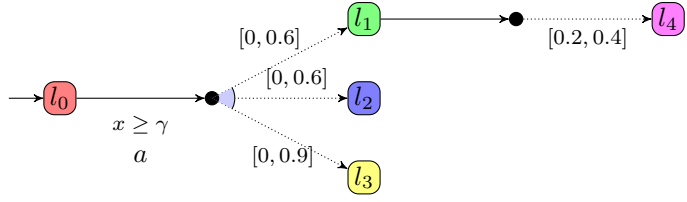

\bd{Etienne : relis stp}
We propose the following construction, which we adapt from our
construction \PIPTAsynth{}. Recall that in this construction, we use for each state $\symbstate$
 a variable $c_{\symbstate}$ that encodes the potential presence of state
$\symbstate$ in an implementation (and therefore imposes that this
state is consistent). Unfortunately, the presence of such a state in
an implementation is not sufficient to ensure that this state is
reachable from the initial state. In order to guarantee reachability,
we therefore have to add variables and constraints to
\PIPTAsynth{}.

We therefore add new variables $r_{\symbstate}$ for all states
$\symbstate$ in the parametric probabilistic zone graph. These
variables will be assigned values in $[0,N] \cup \{\infty\}$, where $N$ is the total
number of states of the parametric probabilistic zone graph.

Then, in order to ensure reachability of the goal locations, we add
the following constraints:

\begin{itemize}
\item $r_{(\loc,\C)} = 0 \iff \loc \in \LocGoal$
\item for all states $\symbstate = (\loc,\C)$ in the parametric
  probabilistic zone graph such that $\loc \notin \LocGoal$, we impose that
  either $(r_{\symbstate} = \infty)$ or

  $$\bigvee_{(\symbstate,\pedge, \DistIMDP)
    \in \TOut(\symbstate)} \left (\neg \ForbidGuard(\DistIMDP) \land
  \bigvee_{\States' \in \CombiInt(\DistIMDP)} \left (
  \bigwedge_{\symbstate' \in \States' \setminus\{s\}}c_{\symbstate'}
  \wedge \bigvee_{\symbstate' \in \States'} (r_{\symbstate} =
  r_{\symbstate'} +1)\right )  \right)$$
\end{itemize}

Now, solving the conjunction of equation (\ref{eq:conssynt}) from
\PIPTAsynth{} and the constraints presented above, while imposing that
$r_{\symbstate_0} < \infty$ will yield exactly the set of parameter
valuations ensuring the consistent reachability of goal locations from
$\LocGoal$. We call this new procedure \PIPTAreachsynth{}.

\begin{proposition}
	Let $\pval$ be a parameter valuation satisfying the result of\linebreak[4] \PIPTAreachsynth{}. %
	Then, $\valuate{\piprobta}{\pval}$ is consistent and at least one location in~$\LocGoal$ is reachable in $\valuate{\piprobta}{\pval}$.
\end{proposition}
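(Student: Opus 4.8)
The statement asserts the soundness of \PIPTAreachsynth{}. The plan is as follows. A valuation $\pval$ satisfying the result of \PIPTAreachsynth{} comes equipped with a single joint assignment of all the boolean variables $c_\symbstate$ and the rank variables $r_\symbstate$ (ranging over $[0,N]\cup\{\infty\}$) that makes the whole conjunction true under~$\pval$, together with $r_{\symbstate_0} < \infty$. I would split the goal into the two halves of the conjunction, consistency and reachability, and the crux is to use \emph{one and the same} implementation for both.

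For consistency, I would first observe that the conjunction contains $cons(c_{\symbstate_0},\dots,c_{\symbstate_n})$ from \PIPTAsynth{}, so $\pval$ already satisfies equation~(\ref{eq:conssynt}); by the Correctness proposition for \PIPTAsynth{} it follows that $\valuate{\piprobta}{\pval}$ is consistent. I would moreover recall from that proof how a concrete MDP~$\mdp$ implementing $\pval(\imdp)$ is produced: its states are exactly the present states $\{\symbstate \mid c_\symbstate = \top\}$, and each surviving edge $\pedge$ of $\pval(\imdp)$ is assigned one punctual distribution drawn from a feasible support of its interval distribution. The freedom left in choosing these supports is precisely what I will exploit for reachability. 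Here $\pval(\imdp)$ is the probabilistic zone graph of $\valuate{\piprobta}{\pval}$, by the conservative-extension remark on the symbolic semantics.

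For reachability, I would argue by induction on the finite value of $r_\symbstate$ that every state with $r_\symbstate < \infty$ can reach $\LocGoal$ inside~$\mdp$. The base case $r_\symbstate = 0$ is exactly $\symbstate.\loc \in \LocGoal$. For the step with $r_\symbstate = k+1$ and $\symbstate.\loc \notin \LocGoal$, the reachability constraint forces some transition $(\symbstate,\pedge,\DistIMDP)\in\TOut(\symbstate)$ with $\pval \models \neg\ForbidGuard(\DistIMDP)$ (so the edge survives in $\pval(\imdp)$, equivalently $\pval \models \projectP{\C'}$ for its targets), a feasible support $\States'\in\CombiInt(\DistIMDP)$ all of whose members except $\symbstate$ satisfy $c_{\symbstate'}=\top$, and a successor $\symbstate'\in\States'$ with $r_{\symbstate'}=k$. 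Since $r_\symbstate = r_{\symbstate'}+1$ rules out $\symbstate'=\symbstate$, this $\symbstate'$ lies in $\States'\setminus\{\symbstate\}$ and is therefore present. The key coupling step is that when building~$\mdp$ I choose the distribution on this very edge so as to realize the support~$\States'$, giving $\symbstate'$ positive probability; this is legitimate exactly because the shared $c$-variables guarantee that every state of $\States'$ is present and consistent. By induction $\symbstate'$ reaches $\LocGoal$, hence so does~$\symbstate$. As $r_{\symbstate_0}<\infty$, I obtain a symbolic path $\symbstate_0 \to \cdots \to \symbstate_m$ with $\symbstate_m.\loc \in \LocGoal$ living entirely within present states. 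Finally, the $\neg\ForbidGuard$ conditions along the path yield $\pval \models \projectP{\symbstate_m.\C}$, so by \cref{lemma:HRSV02} there is an equivalent concrete run in $\valuate{\piprobta}{\pval}$; since every step of the path carries positive probability in~$\mdp$, a location of $\LocGoal$ is indeed reached in the consistent implementation.

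The main obstacle is this coupling: I must exhibit a \emph{single} implementation that simultaneously witnesses consistency and carries a positive-probability path to the goal. A naive proof that first fixes the consistency MDP and only afterwards hunts for a path may fail, because the feasible support chosen for consistency on a given edge need not contain the lower-rank successor demanded by the reachability witness. The resolution is to let the reachability witness dictate the support choices along its path while using arbitrary feasible supports elsewhere, the shared $c$-variables ensuring that all these choices remain mutually compatible. A secondary technical point is the symbolic-to-concrete transfer: I rely on the monotone accumulation of reachability conditions along a symbolic run, so that satisfying the final state's condition $\projectP{\symbstate_m.\C}$ (guaranteed edge-by-edge by $\neg\ForbidGuard$) is enough to invoke \cref{lemma:HRSV02} for the entire run.
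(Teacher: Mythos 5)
Your proof is correct and follows essentially the same route as the paper's: consistency is inherited from the fact that \PIPTAreachsynth{} subsumes \PIPTAsynth{}, and reachability follows from the rank variables $r_{\symbstate}$ (the paper phrases your induction as ``an execution of length at most $r_{\symbstate}$ to a goal state''), with $r_{\symbstate_0}<\infty$ closing the argument. The paper's proof is much terser—it leaves implicit the induction, the coupling of the consistency MDP with the reachability path (your observation that the reachability witness's feasible support is itself a legitimate consistency choice on that edge, thanks to the shared $c$-variables), and the symbolic-to-concrete transfer via \cref{lemma:HRSV02}—so your version supplies details the paper glosses over rather than a different argument.
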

\begin{proof}
  First observe that any parameter valuation $\pval$ obtained through
  \PIPTAreachsynth{} needs to satisfy \PIPTAsynth{}. As a consequence,
  $\valuate{\piprobta}{\pval}$ is consistent.  Moreover, the
  additional constraints provided in \PIPTAreachsynth{} ensure that
  whenever $r_{\symbstate} < \infty$, there is an execution of length at
  most $r_{\symbstate}$ from $\symbstate$ to a state $(\loc,\C)$ such
  that $\loc \in \LocGoal$ in the parametric probabilistic zone graph
  of $\valuate{\piprobta}{\pval}$. Since we impose that
  $r_{\symbstate_0} < \infty$, $\LocGoal$ is indeed reachable in
  $\valuate{\piprobta}{\pval}$.
\end{proof}
\begin{example}
	Let us come back to \cref{figure:counter-example-EF}.
	\PIPTAsynth{} yields the entire set of parameter
        valuations. By construction, the parametric probabilistic zone
        graph of this \PIPROBTA{} is almost identical to the
        \PIPROBTA{} itself (the states will be $\symbstate_i = (\loc_i, \param \ge
        0)$ for all $i$). However, in \PIPTAreachsynth{}, it will be
        impossible to set $r_{\symbstate_1}$ to a finite value as
        the feasible support of its outgoing transition is empty. As a
        consequence, \PIPTAreachsynth{} will yield the empty set of
        parameter valuations, as expected.
\end{example}
\begin{example}
  Assume now that we would like to synthesize the parameter values
  ensuring the consistent reachability of $\loc_5$ in the \PIPROBTA{}
  given in \cref{figure:example:PIPROBTA}. Recall that the
  probabilistic zone graph is given in \cref{figure:example:IMDP} and
  the solutions of \PIPTAsynth{} are given in
  \cref{example:PIPTAsynth}. In the process of solving
  \PIPTAreachsynth{} on this example, we are required to set
  $r_{\symbstate_i} = \infty$ for $i \in \{1,3,4,8\}$. We also have to
  set $r_{\symbstate_5} = r_{\symbstate_7} = 0$. Finally, in
  addition to the above constraints and those obtained in
  \PIPTAsynth{}, \PIPTAreachsynth{} yields the following:

  \begin{itemize}
  \item For state $\symbstate_6$: either $(r_{\symbstate_6} = \infty)$, or
    $$(\gamma \ge 0) \land
    ((c_{\symbstate_7} \land (r_{\symbstate_6} = r_{\symbstate_7}+1))
    \lor (c_{\symbstate_7} \land c_{\symbstate_8}\land
    ((r_{\symbstate_6} = r_{\symbstate_7}+1) \lor (r_{\symbstate_6} =
    r_{\symbstate_8}+1))))$$

  \item For state $\symbstate_2$: either $(r_{\symbstate_2} = \infty)$, or
    $$(\gamma \ge 0) \land
    ((c_{\symbstate_5} \land (r_{\symbstate_2} = r_{\symbstate_5}+1))
    \lor (c_{\symbstate_5} \land c_{\symbstate_6}\land
    ((r_{\symbstate_2} = r_{\symbstate_5}+1) \lor (r_{\symbstate_2} =
    r_{\symbstate_6}+1))))$$

  \item For state $\symbstate_0$: either $(r_{\symbstate_0} = \infty)$, or
    $$(\gamma \ge 0) \land
    ((c_{\symbstate_1} \land (r_{\symbstate_0} = r_{\symbstate_1}+1))
    \lor (c_{\symbstate_1} \land c_{\symbstate_2}\land
    ((r_{\symbstate_0} = r_{\symbstate_1}+1) \lor (r_{\symbstate_0} =
    r_{\symbstate_2}+1))))$$
  \end{itemize}

In the end, we can set $(r_{\symbstate_6} = 1), (r_{\symbstate_2} =
1)$, and $(r_{\symbstate_0} = 2)$ for instance. In this case, we still
obtain the same set of parameter valuations as in \PIPTAsynth{}:
all those satisfying $(\param < 2)$.
\end{example}

\begin{remark}\label{remark:acyclic}
	Observe that, for acyclic \PIPROBTAs{} (\ie{} the underlying graph of which contains no cycle), the answer to the parametric consistent reachability synthesis problem can be effectively computed.
	Indeed, the procedure presented above consists in a procedure to be solved on a set of states.
	If that set is finite, the procedure can be effectively solved with an exact result.
	
	This result can also extended to \PIPROBTAs{} the symbolic semantics of which is acyclic (\ie{} the underlying \IMDP{} contains no cycle).
	However, it may not be possible to decide whether an arbitrary \PIPROBTAs{} has a finite symbolic semantics.
\end{remark}

\section{Conclusion}\label{section:conclusion}

In this work, we provided abstractions to reason on systems involving real-time constraints and probabilities:
first, by allowing probabilities to range in some intervals,
and, second, by allowing timing constants to be abstracted in the form of parameters.
Without parameters, we proposed an approach to decide whether an interval probabilistic timed automaton is consistent, \ie{} admits an implementation based on a simulation relation.
When adding parameters, the mere existence of a parameter valuation yielding consistency is undecidable.
However, when the set of parameters is partitioned between lower-bound parameters and upper-bound parameters, decidability is ensured.
We also proposed a procedure to synthesize valuations ensuring
consistency for \PIPROBTAs{} whose parametric probabilistic zone graph
is finite, as well as to ensure consistent reachability.

\paragraph{Discussion}\label{discussion:consistency}
We believe our definition of consistency allows for incremental design: one can first define range for probabilities and range for timing parameters.
Then, depending on refined design choices, one will assign interval probabilities with punctual values, and valuate timing parameters.
Clearly, inconsistent probabilistic distributions can be seen as ill-formed models---just as deadlocks, for examples.
One could argue that, contrarily to deadlocks, one could statically detect such situations, or even forbid them statically.
However, we see two reasons not to do so.
First, we believe that allowing these situations could be used as an additional freedom, that can be then detected and corrected using the methods described in this manuscript.
That is, inconsistent intervals do not need to be removed statically if there is another way to remove them (using other probabilities or timing parameters).
Second, our work builds on top on works where \emph{parametric} probabilistic bounds can be used (\eg{} \cite{BDSyncop15,DBLP:conf/vmcai/DelahayeLP16}).
In this latter case, the static detection does not work.
As our ultimate goal is to reintroduce parametric intervals in the future (see below), we believe our definition of consistency is worth exploring.

\paragraph{Future works}
We envision several future works.
First, exhibiting subclasses of \PIPROBTAs{} for which exact synthesis can be achieved is on our agenda.
As the use of timing parameters seems critical in our undecidability results, relying on recent works exhibiting decidable subclasses of parametric timed automata, such as bounded integer parameters (see~\cite{JLR15}) or reset-parametric timed automata (see~\cite{ALR16ICFEM,ALR18ACSD}), can serve as a first basis for a probabilistic extension.

Finally, we are interested in considering higher-level abstractions of probabilities; notably, using parameters instead of intervals with constant bounds (as in \cite{DBLP:conf/vmcai/DelahayeLP16} for parametric interval Markov chains) is of high interest, and makes the notion of consistency even more delicate, as tuning the parametric bounds in an interval may impact the consistency of other probabilistic distributions.

\section*{Acknowledgements}
We would like to thank the anonymous reviewers for useful comments that helped us to improve the manuscript.

\ShortVersion{
\section*{Bibliography}
}

\newcommand{\CCIS}{Communications in Computer and Information Science}
\newcommand{\ENTCS}{Electronic Notes in Theoretical Computer Science}
\newcommand{\FAC}{Formal Aspects of Computing}
\newcommand{\FI}{Fundamenta Informaticae}
\newcommand{\FMSD}{Formal Methods in System Design}
\newcommand{\IJFCS}{International Journal of Foundations of Computer Science}
\newcommand{\IJSSE}{International Journal of Secure Software Engineering}
\newcommand{\IPL}{Information Processing Letters}
\newcommand{\JLAP}{Journal of Logic and Algebraic Programming}
\newcommand{\JLAMP}{Journal of Logical and Algebraic Methods in Programming} %
\newcommand{\JLC}{Journal of Logic and Computation}
\newcommand{\LMCS}{Logical Methods in Computer Science}
\newcommand{\LNCS}{Lecture Notes in Computer Science}
\newcommand{\RESS}{Reliability Engineering \& System Safety}
\newcommand{\STTT}{International Journal on Software Tools for Technology Transfer}
\newcommand{\TCS}{Theoretical Computer Science}
\newcommand{\ToPNoC}{Transactions on Petri Nets and Other Models of Concurrency}
\newcommand{\TSE}{{IEEE} Transactions on Software Engineering}

\ifdefined\VersionLong

	\renewcommand*{\bibfont}{\small}
	\printbibliography[title={References}]
\else

	\bibliographystyle{elsarticle-harv} 
	\bibliography{PIPTA}

\fi
\ifdefined\WithReply
	\input{letter2.tex}
\fi

\end{document}